\algnewcommand{\algorithmicswitch}{\textbf{switch}}
\algnewcommand{\algorithmiccase}{\textbf{case}}
\algnewcommand{\algorithmicon}{\textbf{on}}
\algrenewcommand{\algorithmicdo}{}
\algrenewcommand{\algorithmicthen}{}
\algnewcommand{\algorithmicgoto}{\textbf{goto}}%
\algnewcommand{\Goto}[1]{\algorithmicgoto~\ref{#1}}%
\algnewcommand{\algorithmicbreak}{\textbf{break}}%
\algnewcommand{\Break}[0]{\algorithmicbreak}%
\algnewcommand{\algorithmicassert}{\textbf{assert}}%
\algnewcommand{\Assert}[1]{\algorithmicassert~{#1}}%
\algnewcommand{\algorithmicgiveup}{\textbf{give up}}%
\algnewcommand{\GiveUp}[0]{\algorithmicgiveup}%
\algnewcommand{\algorithmicabort}{\textbf{abort}}%
\algnewcommand{\Abort}[0]{\algorithmicabort}%
\algnewcommand{\algorithmicwaiton}{\textbf{wait on}}%
\algnewcommand{\WaitOn}[1]{\algorithmicwaiton~{#1}}%
\newcommand*\circledScriptGray[1]{\tikz[baseline=(char.base)]{
            \node[shape=circle,draw=black,text=black,inner sep=2pt,fill=black!10] (char) {\scriptsize #1};}}
\newcommand*\circledScriptGraySlim[1]{\tikz[baseline=(char.base)]{
            \node[shape=circle,draw=black,text=black,inner sep=1pt,fill=black!10] (char) {\scriptsize #1};}}
\setlist[itemize]{leftmargin=1.25em}
\setlist[enumerate]{leftmargin=1.75em}
\pgfplotsset{compat=1.14}
\newcommand{\ie}[0]{\emph{i.e.}\xspace}
\newcommand{\eg}[0]{\emph{e.g.}\xspace}
\newcommand{\cf}[0]{\emph{cf.}\xspace}
\theoremstyle{plain}
\newtheorem*{theorem*}{Theorem}
\newtheorem*{definition*}{Definition}
\definecolor{myParula01Blue}{RGB}{0,114,189}
\definecolor{myParula02Orange}{RGB}{217,83,25}
\definecolor{myParula03Yellow}{RGB}{237,177,32}
\definecolor{myParula04Purple}{RGB}{126,47,142}
\definecolor{myParula05Green}{RGB}{119,172,48}
\definecolor{myParula06LightBlue}{RGB}{77,190,238}
\definecolor{myParula07Red}{RGB}{162,20,47}
\tikzset{myparula11/.style={color=myParula01Blue,solid,mark=+,mark options={solid}}}
\tikzset{myparula12/.style={color=myParula01Blue,densely dashed,mark=x,mark options={solid}}}
\tikzset{myparula13/.style={color=myParula01Blue,densely dotted,mark=o,mark options={solid}}}
\tikzset{myparula14/.style={color=myParula01Blue,dashdotted,mark=triangle,mark options={solid}}}
\tikzset{myparula15/.style={color=myParula01Blue,dashdotdotted,mark=square,mark options={solid}}}
\tikzset{myparula21/.style={color=myParula02Orange,solid,mark=+,mark options={solid}}}
\tikzset{myparula22/.style={color=myParula02Orange,densely dashed,mark=x,mark options={solid}}}
\tikzset{myparula23/.style={color=myParula02Orange,densely dotted,mark=o,mark options={solid}}}
\tikzset{myparula24/.style={color=myParula02Orange,dashdotted,mark=triangle,mark options={solid}}}
\tikzset{myparula25/.style={color=myParula02Orange,dashdotdotted,mark=square,mark options={solid}}}
\tikzset{myparula31/.style={color=myParula03Yellow,solid,mark=+,mark options={solid}}}
\tikzset{myparula32/.style={color=myParula03Yellow,densely dashed,mark=x,mark options={solid}}}
\tikzset{myparula33/.style={color=myParula03Yellow,densely dotted,mark=o,mark options={solid}}}
\tikzset{myparula34/.style={color=myParula03Yellow,dashdotted,mark=triangle,mark options={solid}}}
\tikzset{myparula35/.style={color=myParula03Yellow,dashdotdotted,mark=square,mark options={solid}}}
\tikzset{myparula41/.style={color=myParula04Purple,solid,mark=+,mark options={solid}}}
\tikzset{myparula42/.style={color=myParula04Purple,densely dashed,mark=x,mark options={solid}}}
\tikzset{myparula43/.style={color=myParula04Purple,densely dotted,mark=o,mark options={solid}}}
\tikzset{myparula44/.style={color=myParula04Purple,dashdotted,mark=triangle,mark options={solid}}}
\tikzset{myparula45/.style={color=myParula04Purple,dashdotdotted,mark=square,mark options={solid}}}
\tikzset{myparula51/.style={color=myParula05Green,solid,mark=+,mark options={solid}}}
\tikzset{myparula52/.style={color=myParula05Green,densely dashed,mark=x,mark options={solid}}}
\tikzset{myparula53/.style={color=myParula05Green,densely dotted,mark=o,mark options={solid}}}
\tikzset{myparula54/.style={color=myParula05Green,dashdotted,mark=triangle,mark options={solid}}}
\tikzset{myparula55/.style={color=myParula05Green,dashdotdotted,mark=square,mark options={solid}}}
\tikzset{myparula61/.style={color=myParula06LightBlue,solid,mark=+,mark options={solid}}}
\tikzset{myparula62/.style={color=myParula06LightBlue,densely dashed,mark=x,mark options={solid}}}
\tikzset{myparula63/.style={color=myParula06LightBlue,densely dotted,mark=o,mark options={solid}}}
\tikzset{myparula64/.style={color=myParula06LightBlue,dashdotted,mark=triangle,mark options={solid}}}
\tikzset{myparula65/.style={color=myParula06LightBlue,dashdotdotted,mark=square,mark options={solid}}}
\tikzset{myparula71/.style={color=myParula07Red,solid,mark=+,mark options={solid}}}
\tikzset{myparula72/.style={color=myParula07Red,densely dashed,mark=x,mark options={solid}}}
\tikzset{myparula73/.style={color=myParula07Red,densely dotted,mark=o,mark options={solid}}}
\tikzset{myparula74/.style={color=myParula07Red,dashdotted,mark=triangle,mark options={solid}}}
\tikzset{myparula75/.style={color=myParula07Red,dashdotdotted,mark=square,mark options={solid}}}
\pgfplotsset{
    mysimpleplot/.style = {
        every axis plot/.prefix style={thick},
        width=1.0\linewidth,
        height=0.75\linewidth,
        title style={font=\footnotesize,align=center},
        legend cell align=left,
        legend style={font=\footnotesize},
        legend columns=3,
        legend style={
            at={(0.5,1)},
            yshift=0.3em,
            anchor=south,
            draw=none,
            /tikz/every even column/.append style={
                column sep=0.3em
            },
            cells={
                align=left
            }
        },
        grid=both,
        minor tick num=4,
        major grid style={solid,draw=gray!50},
        minor grid style={densely dotted,draw=gray!50},
        label style={font=\footnotesize,align=center},
        tick label style={font=\footnotesize},
    },
}
\pgfplotsset{
    semiavidprexperiment1/.style = {
        mysimpleplot,
        legend columns=2,
        height=4cm,
        width=3.8cm,
        xtick={0.5e6,1e6,2e6,4e6,8e6,16e6,32e6,64e6},
        xticklabels={$0.5$,$1$,$2$,$4$,$8$,$16$,$32$,$64$},
        ytick={0.01,0.1,1,10,100,1000},
        yticklabels={$0.01$,$0.1$,$1$,$10$,$100$,$1000$},
    },
}
\pgfplotsset{
    semiavidprexperiment2/.style = {
        semiavidprexperiment1,
        height=3.3cm,
    },
}
\tikzset{experiment-rate25-n128/.style={dashed,mark=+,mark options={solid}}}
\tikzset{experiment-rate25-n256/.style={dashed,mark=o,mark options={solid}}}
\tikzset{experiment-rate25-n1024/.style={dashed,mark=triangle,mark options={solid}}}
\tikzset{experiment-rate33-n128/.style={solid,mark=+,mark options={solid}}}
\tikzset{experiment-rate33-n256/.style={solid,mark=o,mark options={solid}}}
\tikzset{experiment-rate33-n1024/.style={solid,mark=triangle,mark options={solid}}}
\tikzset{experiment-rate45-n128/.style={densely dotted,mark=+,mark options={solid}}}
\tikzset{experiment-rate45-n256/.style={densely dotted,mark=o,mark options={solid}}}
\tikzset{experiment-rate45-n1024/.style={densely dotted,mark=triangle,mark options={solid}}}
\pgfplotsset{
    discard if neq/.style 2 args={
        x filter/.code={
            \edef\tempa{\thisrow{#1}}
            \edef\tempb{#2}
            \ifx\tempa\tempb
            \else
                
            \fi
        }
    }
}
\newcommand{\cmark}{\ding{52}}%
\newcommand{\xmark}{\ding{56}}%
\newcommand{\mystate}[0]{\ensuremath{\mathsf{state}}}
\newcommand{\txs}[0]{\ensuremath{\mathsf{txs}}}
\newcommand{\drawrandom}[0]{\ensuremath{\xleftarrow{\mathrm{R}}}}
\newcommand{\cryptoSK}[0]{\ensuremath{\mathsf{sk}}}
\newcommand{\cryptoPK}[0]{\ensuremath{\mathsf{pk}}}
\newcommand{\cryptoSP}[0]{\ensuremath{\mathsf{sp}}}
\newcommand{\cryptoPP}[0]{\ensuremath{\mathsf{pp}}}
\newcommand{\savid}[0]{\ensuremath{\Pi_{\mathrm{SAVIDPR}}}}
\newcommand{\savidCommit}[0]{\ensuremath{\mathsf{Commit}}}
\newcommand{\savidSetup}[0]{\ensuremath{\mathsf{Setup}}}
\newcommand{\savidDisperse}[0]{\ensuremath{\mathsf{Disperse}}}
\newcommand{\savidRetrieve}[0]{\ensuremath{\mathsf{Retrieve}}}
\newcommand{\savidVerify}[0]{\ensuremath{\mathsf{Verify}}}
\newcommand{\OURsavid}[0]{\ensuremath{\Pi^{\bigstar}}}
\newcommand{\OURsavidSUPER}[1]{\ensuremath{\Pi^{\bigstar,#1}}}
\newcommand{\savidLabelStored}[0]{\ensuremath{\mathtt{ack}}}
\newcommand{\savidLabelDisperse}[0]{\ensuremath{\mathtt{store}}}
\newcommand{\savidLabelRetrieve}[0]{\ensuremath{\mathtt{load}}}
\newcommand{\negl}[0]{\ensuremath{\mathrm{negl}}}
\newcommand{\slvc}[0]{\ensuremath{\mathsf{LVC}}}
\newcommand{\slvcSetup}[0]{\ensuremath{\slvc.\mathsf{Setup}}}
\newcommand{\slvcCommit}[0]{\ensuremath{\slvc.\mathsf{Commit}}}
\renewcommand{\VC}[0]{\ensuremath{\mathsf{VC}}}
\newcommand{\code}[0]{\ensuremath{\mathsf{Code}}}
\newcommand{\codeEncode}[0]{\ensuremath{\code.\mathsf{Encode}}}
\newcommand{\codeDecode}[0]{\ensuremath{\code.\mathsf{Decode}}}
\newcommand{\crhf}[0]{\ensuremath{\mathsf{HF}}}
\newcommand{\crhfGen}[0]{\ensuremath{\crhf.\mathsf{Gen}}}
\newcommand{\crhfH}[0]{\ensuremath{\crhf.\mathsf{H}}}
\newcommand{\CRHF}[0]{\ensuremath{\mathsf{CRHF}^s}}
\newcommand{\sig}[0]{\ensuremath{\mathsf{Sig}}}
\newcommand{\sigKeygen}[0]{\ensuremath{\sig.\mathsf{KeyGen}}}
\newcommand{\sigSign}[0]{\ensuremath{\sig.\mathsf{Sign}}}
\newcommand{\sigVerify}[0]{\ensuremath{\sig.\mathsf{Verify}}}
\newcommand{\Prob}[1]{\ensuremath{\operatorname{Pr}\!\left(\,#1\,\right)}}
\newcommand{\CProb}[2]{\Prob{#1\;\middle\vert\;#2}}
\newcommand{\KZG}[0]{\ensuremath{\mathsf{KZG}}}
\newcommand{\KZGsetup}[0]{\ensuremath{\mathsf{KZG.Setup}}}
\newcommand{\KZGcommit}[0]{\ensuremath{\mathsf{KZG.Commit}}}
\newcommand{\KZGproof}[0]{\ensuremath{\mathsf{KZG.CreateWitness}}}
\newcommand{\KZGverifyeval}[0]{\ensuremath{\mathsf{KZG.VerifyEval}}}
\newcommand{\TRUE}{\ensuremath{\mathtt{true}}}
\newcommand{\FALSE}{\ensuremath{\mathtt{false}}}
\newcommand{\gameAv}[0]{\ensuremath{\mathsf{AvG}}}   %
\newcommand{\gameCB}[0]{\ensuremath{\mathsf{CBG}}}   %
\newcommand{\gameEF}[0]{\ensuremath{\mathsf{EFG}}}   %
\newcommand{\gameCF}[0]{\ensuremath{\mathsf{CFG}}}   %
\newcommand{\gameVCB}[0]{\ensuremath{\mathsf{VCBG}}}   %
\newcommand{\eqA}[0]{\ensuremath{\overset{\tiny{\mathrm{(a)}}}{=}}}
\newcommand{\leqB}[0]{\ensuremath{\overset{\tiny{\mathrm{(b)}}}{\leq}}}
\newcommand{\leqC}[0]{\ensuremath{\overset{\tiny{\mathrm{(c)}}}{\leq}}}
\newcommand{\leqD}[0]{\ensuremath{\overset{\tiny{\mathrm{(d)}}}{\leq}}}
\newcommand{\leqE}[0]{\ensuremath{\overset{\tiny{\mathrm{(e)}}}{\leq}}}
\newcommand{\leqF}[0]{\ensuremath{\overset{\tiny{\mathrm{(f)}}}{\leq}}}
\begin{document}

\title[Information Dispersal with Provable Retrievability for Rollups]{Information Dispersal with Provable Retrievability for Rollups}

\author{Kamilla Nazirkhanova}
\email{nazirk@stanford.edu}
\affiliation{\country{}}
\author{Joachim Neu}
\email{jneu@stanford.edu}
\affiliation{\country{}}
\author{David Tse}
\email{dntse@stanford.edu}
\affiliation{\country{}}

\thanks{KN and JN contributed equally and are listed alphabetically.}

\begin{abstract}
The ability to verifiably retrieve transaction or state data stored off-chain is crucial to blockchain scaling techniques such as rollups or sharding. We formalize the problem and design a storage- and com\-mu\-ni\-cation-efficient protocol using linear erasure-correcting codes and homomorphic vector commitments.
Motivated by application requirements for rollups,
our solution
\emph{Semi-AVID-PR}
departs from earlier Verifiable Information Dispersal
schemes in that we do
not require comprehensive termination properties.
Compared to Data Availability Oracles,
under no circumstance do we fall back to returning empty blocks.
Distributing a file of $22\,\mathrm{MB}$ among $256$ storage nodes,
up to $85$ of which may be adversarial, requires in total
$\approx70\,\mathrm{MB}$ of communication and storage,
and $\approx41\,\mathrm{s}$ of single-thread runtime ($< 3\,\mathrm{s}$ on $16$ threads)
on an AMD Opteron 6378 processor
when using the BLS12-381 curve.
Our solution requires no modification to on-chain contracts of Validium rollups such as StarkWare's StarkEx. Additionally, it provides privacy of the dispersed data against honest-but-curious storage nodes.
Finally, we discuss an application of our Semi-AVID-PR scheme to data availability verification schemes based on random sampling.
\end{abstract}

\maketitle

\section{Introduction}
\label{sec:introduction}
\subsection{Rollups}

Ethereum, like many blockchains, suffers from poor transaction throughput and latency. To address this issue, various consensus-layer and \emph{off-chain} scaling methods were introduced. While consensus-layer solutions such as sharding \cite{polyshard,omniledger} or multi-chain protocols \cite{prism,ohie} aim at improving the base blockchain protocol, off-chain `layer 2' solutions such as payment channels \cite{decker,miller} and rollups \cite{arbitrum,validatingbridges} aim at moving transaction processing and storage off-chain.
The base blockchain 
then serves only
as a trust anchor, rollback prevention mechanism, and arbitrator in case of misbehavior and disputes among participants.
\emph{Rollups} in particular introduce an on-chain smart contract representing certain application logic, to and from which rollup users can transfer funds to enter and exit the rollup, 
and who watches over proper execution of the state machine that describes the rollup's application logic. Rollup users appoint an operator whose role is to execute the contract's state machine and keep track of updated state such as users' balances. 
For this purpose, the operator collects transactions issued by users and executes them off the main chain, but periodically posts a state snapshot to the main chain in order to irrevocably confirm transaction execution and, thus, inherit the main chain's \emph{safety} guarantee.
To ensure \emph{liveness}, rollup users need to be able to enforce application logic and to exit the rollup with their funds, even if the rollup operator turns uncooperative.
To this end, if a user presents proof of their balance according to the latest state snapshot, then the on-chain contract will pay out their funds to the user and thus enforce the user's exit.
Rollup designs differ in two crucial aspects. First, how to ensure that the state is only updated in accordance with the application logic. Second, how to guarantee that users are able to exit even if the operator turns malicious and withholds the information necessary for users to prove their balances on-chain.
\subsection{State Integrity and Data Availability}

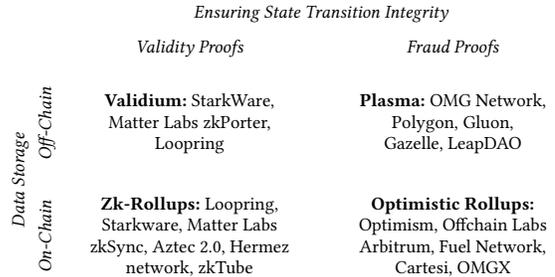
\begin{figure}[t]
    \centering
    \begin{tikzpicture}[x=3.5cm,y=1.5cm]
        \footnotesize
        
        \node[align=center] at (0,0) {\textbf{Zk-Rollups:} Loopring,\\Starkware, Matter Labs\\zkSync, Aztec 2.0, Hermez\\network, zkTube};
        \node[align=center] at (1,0) {\textbf{Optimistic Rollups:}\\Optimism,
        Offchain Labs\\Arbitrum, Fuel Network,\\
        Cartesi, OMGX};
        \node[align=center] at (0,1) {\textbf{Validium:} StarkWare,\\Matter Labs zkPorter,\\Loopring};
        \node[align=center] at (1,1) {\textbf{Plasma:} OMG Network,\\Polygon, Gluon,\\Gazelle, LeapDAO};
        
        \node[rotate=90,anchor=north] at (-0.6,0) {\emph{On-Chain}};
        \node[rotate=90,anchor=north] at (-0.6,1) {\emph{Off-Chain}};
        \node[rotate=90,anchor=north] at (-0.7,0.5) {\emph{Data Storage}};
        \node[anchor=north] at (0,1.8) {\emph{Validity Proofs}};
        \node[anchor=north] at (1,1.8) {\emph{Fraud Proofs}};
        \node[anchor=north] at (0.5,2.1) {\emph{Ensuring State Transition Integrity}};
    \end{tikzpicture}
    \caption{Layer $2$ and rollup projects grouped into four categories according to how validity of state transitions and data availability are ensured (fraud/validity proofs vs. data storage on/off chain). Source:
    \url{https://ethereum.org/en/developers/docs/scaling/}, \url{https://twitter.com/vitalikbuterin/status/1267455602764251138}}
    \label{fig:rollups}
\end{figure}

Rollup designs can be grouped into four categories, as illustrated in Figure~\ref{fig:rollups}, 
according to how they ensure validity of state transitions and availability of transaction information.
For the problem of ensuring that application logic is followed, one approach is to use \emph{fraud proofs}: anyone can re-execute the application logic on the inputs
at hand and check that the state transitions are correct. If they are not, they present proof of a fraudulent state transition to the on-chain contract which will step in as an arbitrator and enforce application logic. 
Rollups using fraud proofs are called \emph{optimistic rollups}. A second approach is based on \emph{validity proofs}, where instead of detecting fraud after the fact, fraud is prevented from the get-go by requiring the operator to provide cryptographic proof \cite{snark,stark} of proper state update.
This approach is used in \emph{zk-rollups}.

\begin{figure}[t]
    \centering
    \begin{tikzpicture}[x=0.95cm]
        \footnotesize
        
        \node[align=left,anchor=north west] at (1,0) {\textsc{Rollup Users}};
        
        \node (user1) at (2,-1.5) {\includegraphics[width=1cm]{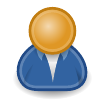}};
        \node (userDOT) at (3,-1) {...};
        \node (userN) at (4,-0.5) {\includegraphics[width=1cm]{figures/tangoicons/Emblem-person-blue.pdf}};
        
        \node[align=right,anchor=north east] at (10,0) {\textsc{Storage Nodes}};
        
        \node (storage1) at (7,-0.5) {\includegraphics[width=1cm]{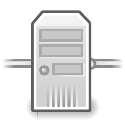}};
        \node (storageDOT) at (8,-1) {...};
        \node (storageN) at (9,-1.5) {\includegraphics[width=1cm]{figures/tangoicons/Network-server.pdf}};
        
        \node[align=left,anchor=north west] at (1,-3) {\textsc{Rollup Operator}};
        \node (stateT) at (7,-4.5) {$\mystate_t$};
        \node [gray] (stateTm1) at (4,-4.5) {$\mystate_{t-1}$};
        \draw [-Latex,dashed,gray] (2,-4.5) -- (stateTm1) node [midway,above] (txsTm1) {$\txs_{t-1}$};
        \draw [-Latex] (stateTm1) -- (stateT) node [midway,above] (txsT) {$\txs_t$};
        
        \node at ($(txsT.south)+(0,-0.5)$) {\circledScriptGray{2}};
        
        \draw [-Latex] (user1) -- (txsT);%
        \draw [draw=none] (userDOT) -- (txsT) node [pos=0.7] {...};
        \draw [-Latex] (userN) -- (txsT);%
        
        \draw [draw=none] (userDOT) -- (txsT) node [pos=0.5] {\circledScriptGray{1}};
        
        \begin{pgfonlayer}{bg1}
            \draw [fill=myParula01Blue!30,draw=none] ($(txsT.north west)-(0.3,-0.3)$) rectangle ($(stateT.south east)+(0.3,-0.3)$);
            \draw [fill=myParula01Blue!10,draw=none] ($(txsTm1.north west)-(0.1,-0.1)$) rectangle ($(stateTm1.south east)+(0.1,-0.1)$);
            \coordinate (blobT) at ($(stateT.west)+(-0.3,0.75)$);
        \end{pgfonlayer}
        
        \coordinate (blobTs) at ($(stateT.west)+(-0.3,-0.55)$);
        \coordinate (blobTm1s) at ($(stateTm1.west)+(-0.1,-0.35)$);
        
        \draw [-Latex,shorten <=0.5em] (blobT) -- (storage1) node [midway,left,xshift=-0.1em]  {\circledScriptGray{3}};
        \draw [draw=none] (blobT) -- (storageDOT) node [pos=0.25] {...};
        \draw [-Latex,shorten <=0.5em] (blobT) -- (storageN);
        
        \node (certT) at ($(stateT.east)+(0.3,0.5)$) {\includegraphics[width=1cm]{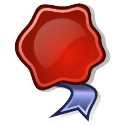}};
        
        \draw [-Latex] (storage1) -- (certT);
        \draw [draw=none] (storageDOT) -- (certT) node [pos=0.8] {...};
        \draw [-Latex] (storageN) -- (certT) node [midway,right,xshift=0.2em] {\circledScriptGray{4}};
        
        \node[align=left,anchor=north west] at (1,-5.5) {\textsc{Main Chain}};
        
        \node [fill=myParula03Yellow!20,draw=myParula03Yellow!50,minimum width=1.5cm,minimum height=0.75cm] (blkTm1) at (3.5,-6.75) {};
        \node [fill=myParula03Yellow!50,draw=myParula03Yellow!100,minimum width=1.5cm,minimum height=0.75cm] (blkT) at (6.5,-6.75) {};
        
        \draw [-Latex,ultra thick,myParula03Yellow!50,shorten >=0.5em,shorten <=0.5em,dashed] (blkTm1) -- (1.5,-6.75);
        \draw [-Latex,ultra thick,myParula03Yellow!100,shorten >=0.5em,shorten <=0.5em] (blkT) -- (blkTm1);
        
        \draw [-Latex,ultra thick,myParula01Blue!50,shorten >=0.5em,shorten <=0.5em] (blobTs) -- (blkT) node [midway,right,xshift=0.4em] {\circledScriptGray{5}};
        \draw [-Latex,ultra thick,myParula01Blue!20,shorten >=0.5em,shorten <=0.5em] (blobTm1s) -- (blkTm1);
        
        \node at ($(blkT)$) {\circledScriptGray{6}};
        
    \end{tikzpicture}
    \caption[]{Normal operating mode of a \emph{Validium rollup}, \ie, a zk-rollup with off-chain storage:
    \circledScriptGraySlim{1} Rollup operator collects transactions from rollup users.
    \circledScriptGraySlim{2} Transactions are executed by operator off-chain and new state of rollup is calculated.
    \circledScriptGraySlim{3} Transaction and state data is dispersed to storage nodes by operator.
    \circledScriptGraySlim{4} Storage nodes confirm receipt.
    Operator collects sufficient number of confirmations into certificate of data retrievability.
    \circledScriptGraySlim{5} Operator sends commitment to state and certificate of retrievability to main chain.
    \circledScriptGraySlim{6} State snapshot and certificate of retrievability are verified and if valid accepted by main chain.}
    \label{fig:validium}
\end{figure}

To ensure that rollup users are able to 
track proper execution of application rules and to prove their balances,
there are again two approaches. 
The relevant information
could be made available either on the main chain, perhaps in a condensed form, or stored off the main chain
but with some credible assurance that the data is in fact available for users to retrieve.
For the latter purpose, \emph{Validium rollups}, \ie, zk-rollups with off-chain storage
such as StarkWare's `StarkEx', introduce a committee of trusted storage nodes.
This way, Validium rollups avoid
both the settlement delays introduced by fraud proofs'
dispute period,
and
the cost and scalability bottleneck of on-chain data storage;
making them
a promising rollup construction
and attractive subject of study.
For the normal operating mode of a Validium rollup 
see
Figure~\ref{fig:validium}. The rollup operator deposits a copy of the relevant data with each storage node, who in turn confirm receipt. A valid state snapshot is accepted by the main chain only if enough storage nodes have confirmed receipt of the corresponding 
full data.
As long as enough storage nodes remain honest and available, rollup users can always
turn to them
to obtain the 
data necessary to prove fraud or balances,
should the rollup operator withhold it.

This solution, however, is not communication- or storage-efficient. The operator has to send a copy of the entire data to every storage node which in turn stores an entire copy. Therefore, this solution is not scalable and works only for a relatively small number of storage nodes (\eg, a current application of the StarkWare Validium rollup uses $8$ storage nodes \cite{dacvalidium}), 
which leads to heavy centralization.
Furthermore, the privacy of user data is violated, as storage nodes can view the entire state.

\subsection{Information Dispersal with Provable Retrievability}

A more communication- and storage-efficient solution is provided by
Verifiable Information Dispersal (VID) as embodied by Asynchronous Verifiable Information Dispersal (AVID \cite{avid}) and its successors AVID-FP \cite{avidfp} and AVID-M \cite{avidm}.
Generally speaking, AVID schemes encode the input data block into chunks
and every storage node has to store only one chunk
rather than the full block.
The correctness of the dispersal is verifiable, meaning that the consistency of chunks is ensured.
A VID scheme consists of two protocols,
$\mathsf{Disperse}$ and $\mathsf{Retrieve}$,
satisfying \cite{avidfp}, informally:
\begin{enumerate}[label={(\alph*)}]
    \item \textbf{Termination.} If $\mathsf{Disperse}(B)$ is initiated by an honest client, then $\mathsf{Disperse}(B)$ is eventually completed by all honest storage nodes.
    \item \textbf{Agreement.} If some honest storage node completes $\mathsf{Disperse}(B)$, all honest storage nodes eventually complete $\mathsf{Disperse}(B)$.
    \item \textbf{Availability.} If `enough' honest storage nodes complete $\mathsf{Disperse}(B)$, an honest client that initiates $\mathsf{Retrieve}()$ eventually reconstructs some block $B'$.
    \item \textbf{Correctness.} After `enough' honest storage nodes complete $\mathsf{Disperse}(B)$, all honest clients that initiate $\mathsf{Retrieve}()$ eventually retrieve the same block $B'$. If the client that initiated $\mathsf{Disperse}(B)$ was honest, then $B' = B$.
\end{enumerate}

Although some existing VID schemes can be used to ensure
data availability for rollups,
they miss properties that are required for this application, while having others that are not needed, resulting in unnecessary complexity
(\cf Figure~\ref{fig:development-avid-schemes}).
For Validium rollups, it is crucial 
that the on-chain rollup contract
can \emph{verify}
(%
\tikz[baseline=(char.base)]{
\node[text=myParula04Purple,fill=myParula04Purple!10,draw=myParula04Purple,shape=circle,inner sep=1pt] (char) {\scriptsize $4$};},
\tikz[baseline=(char.base)]{
\node[text=myParula05Green,fill=myParula05Green!10,draw=myParula05Green,shape=circle,inner sep=1pt] (char) {\scriptsize $5$};}%
)
the retrievability of the underlying data
before accepting a new state update, to ensure that
users have access to the data required to enforce the contract (or be able to exit)
on-chain in case of an uncooperative operator.
For this purpose, consistent retrieval of `some' block
$B' \neq B$ is not enough, the \emph{retrievability}
(%
\tikz[baseline=(char.base)]{
\node[text=myParula03Yellow,fill=myParula03Yellow!10,draw=myParula03Yellow,shape=circle,inner sep=1pt] (char) {\scriptsize $3$};}%
)
of \emph{the original block $B$}
needs to be ensured.
Oppositely, comprehensive termination properties
(%
\tikz[baseline=(char.base)]{
\node[text=myParula02Orange,fill=myParula02Orange!10,draw=myParula02Orange,shape=circle,inner sep=1pt] (char) {\scriptsize $2$};}%
)
such as Termination and Agreement are not needed,
and some VID schemes (here AVID) provide properties
exceeding VID that are not required for the rollup application
(%
\tikz[baseline=(char.base)]{
\node[text=myParula01Blue,fill=myParula01Blue!10,draw=myParula01Blue,shape=circle,inner sep=1pt] (char) {\scriptsize $1$};}%
).

We introduce the concept of \emph{Semi-AVID with Provable Retrievability} (Semi-AVID-PR)
to capture the requirements in the rollup application.
Besides $\savidDisperse$ and $\savidRetrieve$, a 
Semi-AVID-PR scheme provides $\savidCommit$
to succinctly and unequivocally identify data blocks
and $\savidVerify$ to verify certificates of retrievability.
\begin{definition*}[Semi-AVID-PR Security (Informal), \cf Definition~\ref{def:semiavidpr-security}]
If $f \leq t$ nodes are Byzantine, then
the Semi-AVID-PR scheme provides:
\begin{enumerate}[label={(\alph*)}]
    \item
        \textbf{Commitment-Binding.}
        $\savidCommit$ is a binding deterministic commitment to a block of data.
        
    \item
        \textbf{Correctness.}
        If an honest client initiates $\savidDisperse(B)$,
        then eventually it obtains a valid certificate of retrievability
        for $\savidCommit(B)$.

    \item
        \textbf{Availability.}
        If an honest client invokes $\savidRetrieve(P, C)$
        with a valid certificate of retrievability $P$
        for commitment $C$,
        then eventually it obtains a block $B$
        such that $\savidCommit(B) = C$.
        
\end{enumerate}
\end{definition*}
Note that availability is ensured even for
certificates of retrievability generated
under \emph{adversarial} dispersal
(\ie, malicious rollup operator).
We provide formal game-based definitions of commitment-binding (\cf Definition~\ref{alg:game-savid-binding}) and availability (\cf Definition~\ref{alg:game-savid-availability}).

\begin{figure}[t]
    \centering
    \begin{tikzpicture}[x=2.5cm,y=0.6cm]
        \footnotesize
        
        \node (avid) at (0,0) {AVID \cite{avid}};
        \node at (0,0.8) {%
            \tikz[baseline=(char.base)]{
            \node[text=myParula01Blue,fill=myParula01Blue!10,draw=myParula01Blue,shape=circle,inner sep=1pt] (char) {\scriptsize $1$};}
            \tikz[baseline=(char.base)]{
            \node[text=myParula02Orange,fill=myParula02Orange!10,draw=myParula02Orange,shape=circle,inner sep=1pt] (char) {\scriptsize $2$};}
            \tikz[baseline=(char.base)]{
            \node[text=myParula03Yellow,fill=myParula03Yellow!10,draw=myParula03Yellow,shape=circle,inner sep=1pt] (char) {\scriptsize $3$};}
            (\tikz[baseline=(char.base)]{
            \node[text=myParula04Purple,fill=myParula04Purple!10,draw=myParula04Purple,shape=circle,inner sep=1pt] (char) {\scriptsize $4$};})%
        };
        
        \node (avidfp) at (1,0) {AVID-FP \cite{avidfp}};
        \node at (1,0.8) {%
            \tikz[baseline=(char.base)]{
            \node[text=myParula02Orange,fill=myParula02Orange!10,draw=myParula02Orange,shape=circle,inner sep=1pt] (char) {\scriptsize $2$};}
            \tikz[baseline=(char.base)]{
            \node[text=myParula03Yellow,fill=myParula03Yellow!10,draw=myParula03Yellow,shape=circle,inner sep=1pt] (char) {\scriptsize $3$};}
            (\tikz[baseline=(char.base)]{
            \node[text=myParula04Purple,fill=myParula04Purple!10,draw=myParula04Purple,shape=circle,inner sep=1pt] (char) {\scriptsize $4$};})%
        };
        
        \node (aced) at (2,2) {ACeD \cite{aced}};
        \node [align=left,anchor=west] at (2.35,2) {%
            \tikz[baseline=(char.base)]{
            \node[text=myParula02Orange,fill=myParula02Orange!10,draw=myParula02Orange,shape=circle,inner sep=1pt] (char) {\scriptsize $2$};}
            \tikz[baseline=(char.base)]{
            \node[text=myParula05Green,fill=myParula05Green!10,draw=myParula05Green,shape=circle,inner sep=1pt] (char) {\scriptsize $5$};}%
        };
        
        \node[align=center] (dumbomvba) at (2,1) {APDB \cite{dumbo}};
        \node [align=left,anchor=west] at (2.35,1) {%
            \tikz[baseline=(char.base)]{
            \node[text=myParula04Purple,fill=myParula04Purple!10,draw=myParula04Purple,shape=circle,inner sep=1pt] (char) {\scriptsize $4$};}
        };
        
        \node (avidm) at (2,0) {AVID-M \cite{avidm}};
        \node [align=left,anchor=west] at (2.35,0) {%
            \tikz[baseline=(char.base)]{
            \node[text=myParula02Orange,fill=myParula02Orange!10,draw=myParula02Orange,shape=circle,inner sep=1pt] (char) {\scriptsize $2$};}
            (\tikz[baseline=(char.base)]{
            \node[text=myParula04Purple,fill=myParula04Purple!10,draw=myParula04Purple,shape=circle,inner sep=1pt] (char) {\scriptsize $4$};})%
        };
        
        \node [align=center] (savidpr) at (2,-1.2) {Semi-AVID-PR\\(this work)};
        \node [align=left,anchor=west] at (2.35,-1.2) {%
            \tikz[baseline=(char.base)]{
            \node[text=myParula03Yellow,fill=myParula03Yellow!10,draw=myParula03Yellow,shape=circle,inner sep=1pt] (char) {\scriptsize $3$};}
            \tikz[baseline=(char.base)]{
            \node[text=myParula04Purple,fill=myParula04Purple!10,draw=myParula04Purple,shape=circle,inner sep=1pt] (char) {\scriptsize $4$};}%
        };
        
        \draw [-Latex] (avid) -- (avidfp);
        \draw [-Latex] (avidfp) -- (avidm);
        \draw [-Latex] (avidfp) -- (savidpr);

    \end{tikzpicture}
    \caption[]{
        Related protocols and supported properties:
        \tikz[baseline=(char.base)]{
        \node[text=myParula01Blue,fill=myParula01Blue!10,draw=myParula01Blue,shape=circle,inner sep=1pt] (char) {\scriptsize $1$};}
        Retrieval from \emph{any} sufficiently large set of storage nodes
        \tikz[baseline=(char.base)]{
        \node[text=myParula02Orange,fill=myParula02Orange!10,draw=myParula02Orange,shape=circle,inner sep=1pt] (char) {\scriptsize $2$};}
        Comprehensive termination guarantees
        \tikz[baseline=(char.base)]{
        \node[text=myParula03Yellow,fill=myParula03Yellow!10,draw=myParula03Yellow,shape=circle,inner sep=1pt] (char) {\scriptsize $3$};}
        Retrievability guaranteed
        \tikz[baseline=(char.base)]{
        \node[text=myParula04Purple,fill=myParula04Purple!10,draw=myParula04Purple,shape=circle,inner sep=1pt] (char) {\scriptsize $4$};}
        Issues certificates of retrievability
        \tikz[baseline=(char.base)]{
        \node[text=myParula05Green,fill=myParula05Green!10,draw=myParula05Green,shape=circle,inner sep=1pt] (char) {\scriptsize $5$};}
        Dispersal verifiable on-chain%
    }
    \label{fig:development-avid-schemes}
\end{figure}

\begin{figure}[t]
    \centering
    \begin{tikzpicture}[x=0.9cm]
        \footnotesize
        
        \node[align=left,anchor=north west] at (0,0) {\textsc{Rollup Operator}};
        
        \node [minimum width=2.25mm,minimum height=1.5cm,draw=black] (u1) at (0.3,-1.5) {};
        \node [minimum width=2.25mm,minimum height=1.5cm,draw=black] (u2) at (0.6,-1.5) {};
        \node [minimum width=2.25mm,minimum height=1.5cm,draw=black] (u3) at (0.9,-1.5) {};
        \node [minimum width=2.25mm,minimum height=1.5cm,draw=black] (u4) at (1.2,-1.5) {};
        \node [minimum width=2.25mm,minimum height=1.5cm,draw=black] (u5) at (1.5,-1.5) {};
        
        \node [minimum width=2.25mm,minimum height=1.5cm,draw=black] (c1) at (3.3,-1.5) {};
        \node [minimum width=2.25mm,minimum height=1.5cm,draw=black] (c2) at (3.6,-1.5) {};
        \node [minimum width=2.25mm,minimum height=1.5cm,draw=black] (c3) at (3.9,-1.5) {};
        \node [minimum width=2.25mm,minimum height=1.5cm,draw=black] (c4) at (4.2,-1.5) {};
        \node [minimum width=2.25mm,minimum height=1.5cm,draw=black,fill=myParula01Blue!30] (c5) at (4.5,-1.5) {};
        \node [minimum width=2.25mm,minimum height=1.5cm,draw=black] (c6) at (4.8,-1.5) {};
        \node [minimum width=2.25mm,minimum height=1.5cm,draw=black] (c7) at (5.1,-1.5) {};
        
        \node (commitments) at (0.9,-3.4) {Commitments};
        
        \draw [-Latex] (1.8,-0.9) -- (3,-0.9);
        \draw [-Latex] (1.8,-1.1) -- (3,-1.1);
        \draw [-Latex] (1.8,-1.3) -- (3,-1.3);
        \draw [-Latex] (1.8,-1.5) -- (3,-1.5);
        \draw [-Latex] (1.8,-1.7) -- (3,-1.7);
        \draw [-Latex] (1.8,-1.9) -- (3,-1.9);
        \draw [-Latex] (1.8,-2.1) -- (3,-2.1);
        
        \node [] at (2.3,-1.5) {\circledScriptGray{2}};
        
        \node [fill=white] at (0.9,-1.5) {Data};
        
        \draw [-Latex,shorten <=0.5em] (u1) -- (commitments.north -| u1);
        \draw [-Latex,shorten <=0.5em] (u2) -- (commitments.north -| u2);
        \draw [-Latex,shorten <=0.5em] (u3) -- (commitments.north -| u3);
        \draw [-Latex,shorten <=0.5em] (u4) -- (commitments.north -| u4);
        \draw [-Latex,shorten <=0.5em] (u5) -- (commitments.north -| u5);
        
        \draw [draw=none] (u3) -- (commitments.north -| u3) node [midway] {\circledScriptGray{1}};
        
        \node [rotate=90,font=\scriptsize] (chunk) at (4.5,-1.5) {Chunk $i$};
        
        \node[align=right,anchor=north east] at (9,0) {\textsc{Storage Node $i$}};
        
        \node [align=center] (verify) at (7.75,-3.4) {Verify consistency\\of chunk $i$ with\\commitments};
        
        \draw [-Latex] (commitments) -- (verify);
        \draw [-Latex] (c5.south) -- (verify) node [midway,below] {\circledScriptGray{3}};
        
        \node (certificate) at (0.9,-4.5) {\includegraphics[width=1cm]{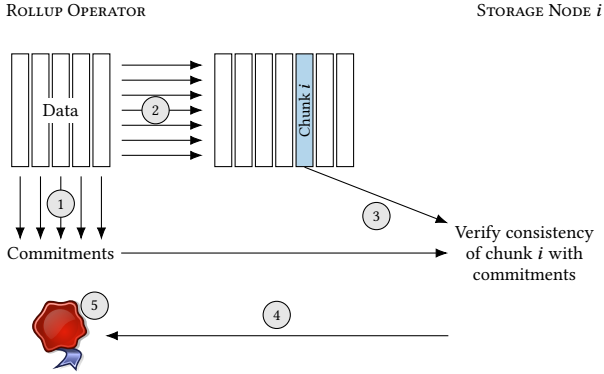}};
        
        \draw [-Latex] (verify.west |- certificate) -- (certificate) node [midway,above] {\circledScriptGray{4}};
        
        \node [] at (1.4,-4.1) {\circledScriptGray{5}};
        
    \end{tikzpicture}
    \caption[]{Dispersal in our Semi-AVID-PR scheme.
    \circledScriptGraySlim{1} Client arranges data in matrix and computes vector commitments of columns.
    \circledScriptGraySlim{2} Client encodes data row-wise (same code for each row).
    \circledScriptGraySlim{3} Commitments and chunk $i$ are sent to storage node $i$.
    \circledScriptGraySlim{4} If chunk $i$ is consistent with commitments, storage node confirms receipt.
    \circledScriptGraySlim{5} Enough acknowledgements form certificate of retrievability.}
    \label{fig:savidpr-disperse-overview}
\end{figure}

We propose a construction  for a communication- and storage-efficient Semi-AVID-PR scheme
with practical computational cost, which is compatible with the established on-chain smart contracts of and thus can be readily adopted for existing Validium rollups, \eg, such as StarkWare's StarkEx. Our construction relies on a collision-resistant hash function, unforgeable signatures, and a deterministic homomorphic vector commitment. We provide a reduction-based security proof.
A high-level illustration of $\savidDisperse$ of our scheme is provided in Figure~\ref{fig:savidpr-disperse-overview}.
In our protocol, the rollup operator computes commitments to chunks of the initial data and encodes it using an erasure-correcting code. Encoded chunks are dispersed among the storage nodes along with the commitments. Similarly to AVID-FP, the commitments allow storage nodes to verify the consistency of their local chunk with the file for which they are about to acknowledge the receipt of a chunk. If their chunk is consistent, a storage node confirms receipt to the operator. Upon collecting enough confirmations, the operator can produce a certificate of retrievability for the respective file, which is later verified by the main chain before accepting the new state snapshot.
Additional blinding can be used in our scheme to provide privacy against honest-but-curious storage nodes.
Finally, the core construction of our scheme can be used to derive a data availability verification scheme based on random sampling with practical computational requirements.

\subsection{Related Work}

The AVID protocol \cite{avid} satisfies not only the VID properties,
but furthermore guarantees that eventually a dispersed file
can be retrieved from \emph{any} subset containing $k$ honest storage nodes
(\cf Figure~\ref{fig:development-avid-schemes}, \tikz[baseline=(char.base)]{
\node[text=myParula01Blue,fill=myParula01Blue!10,draw=myParula01Blue,shape=circle,inner sep=1pt] (char) {\scriptsize $1$};}),
rather than from only \emph{some} subset that can be identified from the certificate of retrievability, as for Semi-AVID-PR
(following earlier successors of AVID which also dispense with this excess guarantee).
This is achieved by an additional round of echoing chunks which leads
to a high communication cost for AVID (\cf Table~\ref{tab:comparison}), whose
communication complexity 
is $O(n |B| + n^3 \lambda)$, 
with block size $|B|$,
number of storage nodes $n$,
and
$\lambda$-bit security parameter of the underlying
cryptographic primitives.
Our Semi-AVID-PR scheme's
communication and storage complexity is $O(|B| + n^2 \lambda)$.

AVID-FP \cite{avidfp},
a successor of AVID,
brings the communication complexity to $O(|B| + n^3 \lambda)$
using homomorphic fingerprinting (as in our Semi-AVID-PR scheme) so that storage nodes can verify
their chunks without echoing them.
Since chunks can be verified, retrievability is guaranteed
(\tikz[baseline=(char.base)]{
\node[text=myParula03Yellow,fill=myParula03Yellow!10,draw=myParula03Yellow,shape=circle,inner sep=1pt] (char) {\scriptsize $3$};}).
A round of AVID (with the fingerprints) is still used
to achieve comprehensive termination properties
(\tikz[baseline=(char.base)]{
\node[text=myParula02Orange,fill=myParula02Orange!10,draw=myParula02Orange,shape=circle,inner sep=1pt] (char) {\scriptsize $2$};}).

A recent advancement of VID, AVID-M \cite{avidm}, improves the communication complexity to $O(|B| + n^2 \lambda)$ by reducing the size of fingerprints
using Merkle trees \cite{merkle}.
However, chunks cannot be verified anymore,
and AVID-M retrieves as empty block any data that was maliciously
encoded during dispersal, so that retrievability is no longer guaranteed.
This makes AVID-M less suitable for application to rollups.
The situation is similar for Data Availability Oracles (DAOs) such as ACeD \cite{aced},
and for Asynchronous Provable Dispersal Broadcast (APDB), the primitive implemented by the dispersal sub-protocol of Dumbo-MVBA \cite{dumbo}. If a block was invalidly encoded during dispersal by a malicious client,
then AVID-M,
DAOs and APDB 
ensure consistency across retrieving clients,
but no guarantee is provided about how the retrieved content
relates to the `dispersed content'.
In contrast, for Semi-AVID-PR, 
$\savidCommit$
links
available dispersed content and retrieved content,
and thus ensures that data availability and
state transition integrity operate
on the same content.
We think of
exhibiting
$\savidCommit$ as \emph{the `interface'}
where data availability and state transition integrity `come together' in the rollup application
as a key contribution of this work.
DAOs
and protocols from the VID family
differ in terms of how the rollup's on-chain contract can verify
completion of the dispersal.
Semi-AVID-PR issues certificates of retrievability
(\tikz[baseline=(char.base)]{
\node[text=myParula04Purple,fill=myParula04Purple!10,draw=myParula04Purple,shape=circle,inner sep=1pt] (char) {\scriptsize $4$};}%
)
that can be independently verified (\eg, by a smart contract),
similar to Dumbo-MVBA's dispersal where
the dispersing client produces a `lock proof' consisting of a quorum of signatures from storage nodes attesting to having received their respective chunks of the dispersed data.
AVID, AVID-FP and AVID-M can readily be extended with such functionality.
DAOs report data availability
directly on-chain via a smart contract
(\tikz[baseline=(char.base)]{
\node[text=myParula05Green,fill=myParula05Green!10,draw=myParula05Green,shape=circle,inner sep=1pt] (char) {\scriptsize $5$};}%
).
Since AVID, AVID-FP and AVID-M all perform a round of VID to achieve comprehensive termination
properties, the AVID family is limited to an adversarial resilience $t < n/3$, in contrast to $t < n/2$ for Semi-AVID-PR, APDB
and ACeD.

Unlike proofs of retrievability
\cite{proofofretrievability}
and proof of replication
\cite{proofofreplication},
the goal in this work is
not to test whether a storage
system has custody of
certain intact data
and/or whether a storage system
actually delivers the data upon request,
as would arguably be the starting
point to study how to incentivize
data storage and retrieval in
the honest-but-rational model
resulting from open-participation
systems,
a surely interesting problem in its own right.
Rather,
we leave incentives for future work,
and focus
in this work
on communication,
storage, and compute efficient
dispersal in the setting
of honest vs. Byzantine participants,
where honest nodes follow
the protocol as specified.
In practice, for Validium rollups,
earlier dispersed data
becomes deprecated
as state/transaction information
is `overwritten' by later
dispersed data,
so that honest storage nodes are only required to
make available the data
of recent dispersals (on the order of days
or weeks).
\paragraph{Sampling Based Data Availability Checks} The data availability problem is not unique to rollups, but arises in other scaling approaches such as sharding or light clients as well.
Solutions like \cite{albassam,cmt} provide interactive protocols based on random sampling of
chunks of erasure-coded data. A block is deemed available if all randomly sampled chunks are available, with the assumption being that if enough nodes' random queries are answered, then enough chunks are available to restore the block.
However, this interactive technique is
not feasible for rollups since the on-chain contract cannot engage in random
sampling to convince itself of data availability.
Instead, the assurance of data availability could either be made on-chain
(as by Data Availability Oracles) or in the form of
a verifiable certificate of retrievability.
However, techniques from our Semi-AVID-PR scheme can be used to obtain a data availability check where the consistency of a randomly sampled chunk can be efficiently verified, obviating fraud proofs for invalid encoding \cite{albassam}.

\subsection{Outline}

Cryptographic essentials and erasure-correcting codes are reviewed
in Section~\ref{sec:preliminaries}.
Model and formal properties of
Semi-AVID-PR 
for the application in rollups are
introduced in Section~\ref{sec:model}.
We describe our Semi-AVID-PR protocol in Section~\ref{sec:protocol}
and prove its security
in Section~\ref{sec:security-argument}.
Use of blinding to protect privacy of dispersed data 
against honest-but-curious storage nodes is discussed
in Section~\ref{sec:privacy}.
An evaluation of computational cost and storage- and communication-efficiency
in comparison to other schemes is discussed in Section~\ref{sec:evaluation}.
We close with comments on an application of techniques of our Semi-AVID-PR
scheme
to data availability sampling
in Section~\ref{sec:data-availability-sampling}.

\section{Preliminaries}
\label{sec:preliminaries}
In this section, we briefly recapitulate
tools from cryptography and erasure-correcting codes
used throughout the paper.

\subsection{Basics \& Notation}
\label{sec:preliminaries-basics}

Let $\IG$ be a cyclic group (denoted multiplicatively,
\ie, with group operation~`$\cdot$')
of prime order $p \geq 2^{2\lambda}$
with generator $g \in \IG$,
where $\lambda$ denotes the security parameter used
subsequently for all primitives.
The function $H(x) \triangleq g^x$ is a bijection between the finite field $\IZ_p$ (\ie, integers modulo $p$) and $\IG$.
It has the \emph{linear
homomorphism} property
which this paper makes ample use of:
\begin{IEEEeqnarray}{l}%
    \forall n \geq 1\colon
    \forall c_1, ..., c_n \in \IZ_p\colon
    \forall x_1, ..., x_n \in \IZ_p\colon   \nonumber\\
    H\left( \sum_{i=1}^n c_i x_i \right) = \prod_{i=1}^n H(x_i)^{c_i}
    \IEEEeqnarraynumspace
\end{IEEEeqnarray}
A hash function $\crhf = (\mathsf{Gen}, \mathsf{H})$, see Definition~\ref{def:hash_func},
is \emph{collision resistant} 
if for any probabilistic poly-time (PPT) adversary $\mathcal A$ there exists a negligible function $\negl(.)$ such that
\begin{IEEEeqnarray}{C}
    \Prob{\gameCF_{\crhf, \CA}(\lambda) = \TRUE} \leq \negl(\lambda),
    \IEEEeqnarraynumspace
\end{IEEEeqnarray}
where $\gameCF_{\crhf, \CA}(\lambda)$ is the collision finding game recapitulated in Alg.~\ref{alg:game-crhf-collision}.
We use $\CRHF(x) \triangleq \crhfH^s(x)$ as a notational shorthand.

A signature scheme $\sig = (\mathsf{KeyGen}, \mathsf{Sign}, \mathsf{Verify})$, see Definition~\ref{def:sig}, is \emph{secure under existential forgery}
if for any PPT adversary $\mathcal A$ there exists a negligible function $\negl(.)$ such that
\begin{IEEEeqnarray}{C}
    \Prob{\gameEF_{\sig, \CA}(\lambda) = \TRUE} \leq \negl(\lambda),
    \IEEEeqnarraynumspace
\end{IEEEeqnarray}
where $\gameEF_{\sig, \CA}(\lambda)$ is the existential forgery game 
of 
Alg.~\ref{alg:game-sig-forgery}.

We denote by $[\Bx]_i$ the $i$-th entry of a vector $\Bx$,
by $[\BX]_{i}$ the $i$-th column of a matrix $\BX$,
and by $[n] \triangleq \{ 1, ..., n \}$.
In algorithms,
$\Pi[f(.)/g(.)]$ denotes substitution of $f(.)$ for $g(.)$ in the code of $\Pi$.

\subsection{Reed-Solomon Codes}
\label{sec:preliminaries-rs-codes}

A \emph{linear $(n,k)$-code} is a linear mapping
$\IZ_p^k \to \IZ_p^n$ with $n \geq k$. It can be represented
by a $k \times n$ \emph{generator matrix} $\BG$, with the encoding
operation then $\Bc^\top = \BG.\mathsf{Encode}(\Bu^\top) \triangleq \Bu^\top \BG$
to obtain a length-$n$ row vector of codeword symbols $\Bc^\top$
from a length-$k$ row vector of information symbols $\Bu^\top$.

A linear code is \emph{maximum distance separable} (MDS) if
any $k$ columns of its generator matrix $\BG$
are linearly independent, \ie, any $k \times k$ submatrix of $\BG$
is invertible. Thus, any set of codeword symbols $c_{i_j}$
from $k$ distinct indices $i_j$ can be used to uniquely decode using the relation
induced by the encoding $\Bc^\top = \Bu^\top \BG$,
\begin{IEEEeqnarray}{C}
    \Bu^\top
    \underbrace{
    \setlength{\arraycolsep}{2pt}\begin{bmatrix}
        \Bg_{i_1} & ... & \Bg_{i_k}
    \end{bmatrix}
    }_{
    \triangleq \tilde\BG
    }
    \overset{!}{=}
    \underbrace{
    \setlength{\arraycolsep}{2pt}\begin{bmatrix}
        c_{i_1} & ... & c_{i_k}
    \end{bmatrix}
    }_{
    \triangleq \tilde\Bc^\top
    }
    \nonumber\\
    \iff   %
    \Bu^\top = 
    \BG.\mathsf{Decode}(((i_j, c_{i_j}))_{j=1}^k) \triangleq \tilde\Bc^\top \tilde\BG^{-1},
    \IEEEeqnarraynumspace
\end{IEEEeqnarray}
where $\Bg_i$ corresponds to the $i$-th column of the generator matrix $\BG$.

Reed-Solomon codes \cite{rs} are an important class of MDS codes.
Here, an information vector $\Bu^\top$ is associated with a polynomial
$U(X) = \sum_{j=1}^k [\Bu^\top]_i X^{i-1}$
and the codeword vector is obtained by evaluating $U(X)$ at
$n$ distinct locations $\alpha_1, ..., \alpha_n$, such that
$\Bc^\top = (U(\alpha_1), ..., U(\alpha_n))^\top$.
This corresponds to a generator matrix $\BG_{\mathrm{RS}}$
with columns $\Bg_{\mathrm{RS},i} = (\alpha_i^0, ..., \alpha_i^{k-1})$.

\subsection{Linear Vector Commitment Schemes}
\label{sec:preliminaries-slvc}

A deterministic \emph{vector commitment} (VC) scheme  $\mathsf{VC} = (\mathsf{Setup}, \mathsf{Commit},\allowbreak \mathsf{OpenEntry}, \mathsf{VerifyEntry})$ \cite{vcs,merkle}
for vectors of length $L$
allows to commit to an element of $\IZ_p^L$. Later, the commitment can be compared to the commitment of another vector to check a vector opening, and it can be opened to individual entries of the vector.
Ideally, the proof for the opening of an entry of the vector
is short and computationally easy to generate and verify.
For our purposes it is important that the  VC is binding, \ie, 
a commitment cannot be opened to values that are inconsistent
with the committed vector.
Specifically, we call a $\mathsf{VC}$, see Definition~\ref{def:vc}, \emph{binding}
if for any PPT adversary $\mathcal A$ there exists a negligible function $\negl(.)$ such that
\begin{IEEEeqnarray}{C}
    \Prob{\gameVCB_{\slvc, \CA}(\lambda) = \TRUE} \leq \negl(\lambda).
    \IEEEeqnarraynumspace
\end{IEEEeqnarray}
where $\gameVCB_{\slvc, \CA}(\lambda)$ is the binding game defined in Alg.~\ref{alg:game-lvc-binding}.
We use $\VC(\Bv) \triangleq \slvcCommit(\Bv)$ as a notational shorthand.

For this manuscript of particular interest are
linearly homomorphic (also simply called \emph{linear}) VCs
($\slvc$)
with
\begin{IEEEeqnarray}{l}
    \forall \alpha, \beta \in \IZ_p\colon
    \forall \Bv, \Bw \in \IZ_p^L\colon   \nonumber\\
    \mathsf{Commit}(\alpha \Bv + \beta \Bw) = \alpha \mathsf{Commit}(\Bv) + \beta \mathsf{Commit}(\Bw).
    \IEEEeqnarraynumspace
\end{IEEEeqnarray}

Kate-Zaverucha-Goldberg (KZG) polynomial commitments \cite{kate}
(here the `basic' variant $\mathsf{PolyCommit}_{\mathbf{DL}}$ of \cite{kate}
as $\KZG$)
can be readily turned into an example linear VC,
which we use subsequently and introduce here briefly.
From a vector $\Bu$ of length $L$ interpolate a polynomial
$U(X)$ of degree $(L-1)$ such that
$U(i) = [\Bu]_i$ for $i=1,...,L$.
Commit to $\Bu$ by $\KZGcommit(U)$.\footnote{The polynomial
interpolation can be avoided
by preprocessing
the public parameters of KZG to obtain them in the Lagrange polynomial basis.}
The vector opening can be verified by recomputing the commitment.
The entry $[\Bu]_i$ can be opened and the opening verified
using $\KZGproof$ and $\KZGverifyeval$ for the corresponding
$U(X)$ at $X=i$, respectively.

To see that the resulting VC's $\mathsf{Commit}$ is linear,
consider this.
During trusted setup, $\KZGsetup$ samples $r \drawrandom \IZ_p$
and computes public parameters $(g^{r^0}, ..., g^{r^{L-1}})$.
$\KZGcommit$ computes the commitment to a polynomial $U(X)$
of degree $(L-1)$
with coefficients $\gamma_0, ..., \gamma_{L-1}$
as $g^{U(r)}$ which, due to the linear homomorphism of $H(x) = g^x$,
can be obtained from the public parameters as
\begin{IEEEeqnarray}{C}
    \KZGcommit(\gamma_0, ..., \gamma_{L-1}) = \KZGcommit(U) = \prod_{j=0}^{L-1} (g^{r^j})^{\gamma_j}.
    \IEEEeqnarraynumspace
\end{IEEEeqnarray}
Since
interpolation of
coefficients
$\Bgamma = (\gamma_0, ..., \gamma_{L-1})$
of $U(X)$
from a vector $\Bu$ such that
$U(i) = [\Bu]_i$ for $i=1,...,L$
is linear and invertible,
$\mathsf{Commit}$ is linear.
When using 
KZG-based LVCs,
our protocol
inherits its trusted setup.
Parameters generated
in recent ce\-re\-mo\-nies
can be reused
(\cf Zcash's or Filecoin's `Powers of Tau' \cite{powersoftau}).

\section{Model}
\label{sec:model}

The system under discussion consists of $n$ \emph{storage nodes} $P_1, ..., P_n$
and some \emph{clients}.
A PPT adversary
can corrupt protocol participants adaptively,
\ie, as the protocol execution progresses.
Corrupt participants surrender their internal state to the adversary immediately
and from thereon behave as coordinated by the adversary.
We denote by $f$ the number of storage nodes corrupted over the course of the execution,
and by $t$ the design resilience,
\ie, our construction is parametric in $t$
and satisfies the desired security properties
in all executions with $f \leq t$.
Protocol participants can send each other messages (a priori without sender
identification)
which undergo delay controlled by the adversary,
subject to the constraint that every message has to arrive eventually.
We design a scheme with
the following interface and security properties.
\begin{definition}[Semi-AVID-PR Syntax]
\label{def:semiavidpr-syntax}
A \emph{Semi-AVID (Asynchronous Verifiable Information Dispersal) Scheme with Provable Retrievability} (\savid)
consists of two algorithms,
$\savidCommit$ and $\savidVerify$,
and three protocols,
$\savidSetup$, $\savidDisperse$ and $\savidRetrieve$.
\begin{itemize}
    \item
        $\savidSetup\colon 1^\lambda \mapsto (\cryptoPP, \cryptoSP_1,...,\cryptoSP_n)$: 
        The protocol $\savidSetup$ is run by a temporary trusted party and all storage nodes,
        at the beginning of time (\ie, before adversarial corruption).
        It takes as input the security parameter $1^\lambda$
        and outputs global public parameters $\cryptoPP$,
        and local secret parameters $\cryptoSP_1,...,\cryptoSP_n$,
        one per
        storage node.
        
        The public parameters $\cryptoPP$ are common knowledge and input to all other
        algorithms and protocols.
        The secret parameters $\cryptoSP_1,...,\cryptoSP_n$ are part of the state
        of a storage node and as such available to that node during
        $\savidDisperse$
        and $\savidRetrieve$ invocations.
        Explicit mention of these inputs is subsequently omitted for simplicity of notation.
        
    \item 
        $\savidCommit\colon B \mapsto C$:
        The algorithm $\savidCommit$
        takes as input 
        a block $B$ of data,
        and returns a commitment $C$ to the data.
        
    \item
        $\savidDisperse\colon B \mapsto P$: 
        The protocol $\savidDisperse$ is run by a client and all storage nodes.
        It takes as input 
        a block $B$ of data at the client,
        and outputs $\bot$ or a \emph{certificate of retrievability} $P$ for commitment $C = \savidCommit(B)$
        to the client.
        
    \item
        $\savidVerify\colon (P, C) \mapsto b \in \{\TRUE, \FALSE\}$:
        The algorithm $\savidVerify$
        takes as input
        a certificate of retrievability $P$
        and a commitment $C$,
        and returns $\TRUE$ or $\FALSE$, depending on whether the certificate is considered valid.
        
    \item
        $\savidRetrieve\colon (P, C) \mapsto B$:
        The protocol $\savidRetrieve$ is run by a client and all storage nodes.
        It takes as input
        a certificate of retrievability $P$
        and a commitment $C$ at the client,
        and outputs $\bot$ or a block $B$ of data to the client.
\end{itemize}
\end{definition}
\begin{algorithm}[t]
    \caption{Commitment-binding game ($\gameCB$) against Semi-AVID-PR scheme $\savid = (\savidSetup, \savidCommit, \savidDisperse, \savidVerify, \savidRetrieve)$}
    \label{alg:game-savid-binding}
    \begin{algorithmic}[1]
        \State $(\cryptoPP, \cryptoSP_1, ..., \cryptoSP_n) \gets \savidSetup(1^\lambda)$
            \Comment{Run setup for all parties}
        \State $(B, B') \gets \CA_{\gameCB}(\cryptoPP, \cryptoSP_1, ..., \cryptoSP_n)$
            \Comment{$\CA$ can simulate any party}
        \State \Return $B \neq B' \land \savidCommit(B) = \savidCommit(B')$
    \end{algorithmic}
\end{algorithm}
\begin{algorithm}[t]
    \caption{Availability game ($\gameAv$) with resilience $t$ against Semi-AVID-PR scheme $\savid = (\savidSetup, \savidCommit, \savidDisperse, \savidVerify, \savidRetrieve)$}
    \label{alg:game-savid-availability}
    \begin{algorithmic}[1]
        \State $\CC \gets \emptyset$
            \Comment{Bookkeeping of corrupted parties}
        \State $\forall i\in[n]: P_i \gets \mathsf{new}\,\savid(\emptyset)$
            \Comment{Instantiate $P_i$ as $\savid$ with blank state}
        \State $\cryptoPP \gets \savidSetup^{P_1,...,P_n}(1^\lambda)$
            \Comment{Run setup among all parties}
        \Function{$\CO^{\mathrm{corrupt}}$}{$i$}
                \Comment{Oracle for $\CA$ to corrupt parties}
            \State \Assert{$i \not\in \CC$}
            \State $\CC \gets \CC \cup \{i\}$
                \Comment{Mark party as corrupted}
            \State \Return $P_i$
                \Comment{Hand $P_i$'s state to $\CA$}
        \EndFunction
        \Function{$\CO^{\mathrm{interact}}$}{$i, m$}
                \Comment{Oracle for $\CA$ to interact with parties}
            \State \Assert{$i \not\in \CC$}
            \State \Return $P_i(m)$
                \Comment{Execute $P_i$ on input $m$, return output to $\CA$}
        \EndFunction
        \State $\left(P, C, \left(\CO_i^{\mathrm{node}}(.)\right)_{i\in\CC}\right)
            \gets \CA^{\CO^{\mathrm{corrupt}}(.), \CO^{\mathrm{interact}}(.)}_{\gameAv}(\cryptoPP)$
            \Comment{$\CA$ returns certificate of retrievability $P$, commitment $C$, and oracle access to corrupted nodes for retrieval}
        \State $\hat B \gets \savidRetrieve^{P_1,...,P_n}\left[\CO_i^{\mathrm{node}}(.)/\Call{Query}{i, .}\right]_{i\in\CC}(P, C)$
            \Comment{During retrieval, interact with corrupted nodes through oracles}
        \State \Return $\begin{aligned}[t]
                    & |\CC| \leq t   \\[-2pt]
                    & \land \savidVerify(P, C) = \TRUE   \\[-5pt]
                    & \land \savidCommit(\hat B) \neq C
                \end{aligned}$
            \Comment{$\CA$ wins iff: while corrupting no more than $t$ parties, $\CA$ produces a valid certificate of retrievability $P$ for $C$ such that retrieval
            does not return a file matching $C$}
    \end{algorithmic}
\end{algorithm}
\begin{definition}[Semi-AVID-PR Security]
\label{def:semiavidpr-security}
A Semi-AVID-PR scheme $\savid$ is \emph{secure with resilience $t$}
if for all executions with $f \leq t$:
\begin{enumerate}
    \item
        \textbf{Commitment-Binding.}
        $\savidCommit$ of $\savid$ implements a binding deterministic commitment to a block $B$ of data. 
        More formally, $\savid$ is \emph{commitment-binding} if for any PPT adversary $\mathcal A$ there exists a negligible function $\negl(.)$ such that
        \begin{IEEEeqnarray}{C}
            \Prob{\gameCB_{\savid, \CA}(\lambda) = \TRUE} \leq \negl(\lambda),
            \IEEEeqnarraynumspace
        \end{IEEEeqnarray}
        where $\gameCB_{\savid, \CA}(\lambda)$ is the commitment-binding game defined in Alg.~\ref{alg:game-savid-binding}.
        
    \item
        \textbf{Correctness.}
        If an honest client invokes $\savidDisperse$ with a block $B$ of data,
        then eventually it outputs a certificate of retrievability $P$
        with the property
        that
        $\savidVerify(P, \savidCommit(B)) = \TRUE$.
        
    \item
        \textbf{Availability.}
        For a certificate of retrievability $P$
        and a commitment $C$,
        if $\savidVerify(P, C) = \TRUE$,
        then if an honest client invokes $\savidRetrieve$ with $P$ and $C$,
        then eventually it outputs a block $B$ of data
        such that $\savidCommit(B) = C$.
        More formally, $\savid$ provides \emph{availability} if for any PPT adversary $\mathcal A$ there exists a negligible function $\negl(.)$ such that
        \begin{IEEEeqnarray}{C}
            \Prob{\gameAv_{\savid, \CA}(\lambda, t) = \TRUE} \leq \negl(\lambda),
            \IEEEeqnarraynumspace
        \end{IEEEeqnarray}
        where $\gameAv_{\savid, \CA}(\lambda, t)$ is the availability game defined in Alg.~\ref{alg:game-savid-availability}.
        ($\Pi[f(.)/g(.)]$ denotes replacing $g(.)$ with $f(.)$ in $\Pi$.)
        
\end{enumerate}
\end{definition}

A few remarks are due on this formulation.
First, note the interplay of Commitment-Binding
and Availability.
Because of Commitment-Binding,
Availability ensures that any block,
for which there is a valid certificate of retrievability,
can 
be retrieved by honest clients.
This holds even if
the certificate 
was produced adversarially,
\eg, by a malicious rollup operator
tampering with the dispersal.

Unlike earlier formulations of AVID \cite{avid,avidfp,avidm},
our formulation does not have independent session identifiers.
Instead, the scheme provides a binding commitment scheme which is used
to establish a link between the data in question,
invocations of the dispersal/retrieval protocols,
and 
certificates of retrievability.
The completion of dispersal of a block
and the possibility to retrieve content matching a commitment
are tied together through the Commitment-Binding property of the commitment scheme
and
can be proven to a third party
using the certificate of retrievability.
This matches the Validium rollup application, where on the one hand retrievability of content matching a certain commitment needs to be verifiable on-chain, and on the other hand validity of the block content is proved and verified with respect to the commitment.

This can also be seen as following the paradigm shift from location-addressed
to content-addressed storage and is particularly
suitable for applications such as rollups or sharding
where one wants to succinctly but unequivocally
identify \emph{what} content is being referenced
rather than \emph{where to find it}.
In terms of the original four properties of AVID schemes \cite{avid},
our Correctness property takes the place of the Termination and Agreement properties,
and our Availability property takes the place of the Availability and Correctness properties.
Above weakenings (hence the name `Semi'-AVID)
allow us to achieve greater resilience up to $t < n/2$
rather than $t < n/3$ as for AVID, AVID-FP or AVID-M.

\section{Protocol}
\label{sec:protocol}
        \begin{algorithm}[t]%
            \caption{$\OURsavid.\savidSetup(1^\lambda)$}
            \label{alg:savid-setup}
            \begin{algorithmic}[1]
                \State \textbf{At the trusted party:}
                \State $\cryptoPP_{\slvc} \gets \slvcSetup(1^\lambda)$
                \State $\cryptoPP_{\crhf} \gets \crhfGen(1^\lambda)$
                \State \textbf{At each storage node $i$:} \quad $(\cryptoPK_i, \cryptoSK_i) \gets \sigKeygen(1^\lambda)$
                \State \Return $\cryptoPP = (\cryptoPP_{\slvc}, \cryptoPP_{\crhf}, \cryptoPK_1, ..., \cryptoPK_n), \cryptoSP_1 = \cryptoSK_1, ..., \cryptoSP_n = \cryptoSK_n$
            \end{algorithmic}
        \end{algorithm}
        \begin{algorithm}[t]%
            \caption{$\OURsavid.\savidCommit(B)$}
            \label{alg:savid-commit}
            \begin{algorithmic}[1]
                \State $\BU \leftarrow \operatorname{AsMatrix}_{L \times k}(B)$
                \State $(h_1, ..., h_k) \gets \VC^{\otimes k}(\BU)$
                \State \Return $\CRHF(h_1\|...\|h_k)$
            \end{algorithmic}
        \end{algorithm}
        \begin{algorithm}[t]%
            \caption{$\OURsavid.\savidVerify(P, C)$}
            \label{alg:savid-verify}
            \begin{algorithmic}[1]
                \State $\hat q \gets \left\vert \left\{
                i \,\middle\vert\,
                \exists (i \mapsto \sigma) \in P\colon \sigVerify(\cryptoPK_i, (\savidLabelStored, C), \sigma) = \TRUE
                \right\} \right\vert$
                \If{$\hat q \geq q$}
                    \Return $\TRUE$
                \EndIf
                \State\Return $\FALSE$
            \end{algorithmic}
        \end{algorithm}
        \begin{algorithm}[t]%
            \caption{$\OURsavid.\savidDisperse(B)$}
            \label{alg:savid-disperse}
            \begin{algorithmic}[1]
                \State \textbf{At the client:}
                \State $\BU \leftarrow \operatorname{AsMatrix}_{L \times k}(B)$
                \State $(h_1, ..., h_k) \gets \VC^{\otimes k}(\BU)$
                \State $\BC \gets \codeEncode^{\otimes L}(\BU)$
                \State Send $(\savidLabelDisperse, (h_1, ..., h_k), \Bc_i)$ to all storage nodes $i$
                \State \textbf{At storage node $i$ upon receiving $(\savidLabelDisperse, (h_1, ..., h_k), \Bc_i)$:}
                \State $\hat h \gets [\codeEncode(h_1,...,h_k)]_i$
                \If{$\hat h \neq \VC(\Bc_i)$}
                    \textbf{abort}   \label{loc:savid-disperse-consistency-check}
                \EndIf
                \State $C \gets \CRHF(h_1\|...\|h_k)$
                \State Store $C \mapsto ((h_1, ..., h_k), \Bc_i)$
                       \label{loc:savid-disperse-store}
                \State Send $\sigma_i \triangleq \sigSign(\cryptoSK_i, (\savidLabelStored, C))$ to client
                       \label{loc:savid-disperse-receipt}
                \State \textbf{At the client:}
                \State Wait for $\sigma_{i_j}$ from $q$ unique $\{i_j\}_{j=1}^q$ with $\sigVerify(\cryptoPK_{i_j}, (\savidLabelStored, C), \sigma_{i_j}) = \TRUE$ 
                \State \Return $\bigcup_{j=1}^q \{ i_j \mapsto \sigma_{i_j} \}$
            \end{algorithmic}
        \end{algorithm}
        \begin{algorithm}[t]%
            \caption{$\OURsavid.\savidRetrieve(P, C)$}
            \label{alg:savid-retrieve}
            \begin{algorithmic}[1]
                \State \textbf{At the client:}
                \State Extract from $P$ any $q$ unique %
                        $\left\{
                        i \,\middle\vert\,
                        \exists (i \mapsto \sigma) \in P\colon \sigVerify(\cryptoPK_i, (\savidLabelStored, C), \sigma) = \TRUE
                        \right\}$
                       \label{loc:savid-retrieve-extract}
                \State Send $(\savidLabelRetrieve, C)$ to all storage nodes $i$
                \State \textbf{At storage node $i$ upon receiving $(\savidLabelRetrieve, C)$:}
                \State Load $C \mapsto ((h_1, ..., h_k), \Bc_i)$
                    \label{loc:savid-retrieve-load}
                \State Send $(i, (h_1, ..., h_k), \Bc_i)$ to client
                \State \textbf{At the client:}
                \State Wait for $(h_1, ..., h_k)$ such that $C = \CRHF(h_1\|...\|h_k)$
                    \label{loc:savid-retrieve-wait}
                \State $\hat \Bh \gets \codeEncode(h_1,...,h_k)$
                \label{loc:savid-retrieve-coded-commitments}
                \State Discarding any $i$ with $[\hat \Bh]_{i} \neq \VC(\Bc_i)$, 
                    wait for $k$ remaining unique $\{i_j\}_{j=1}^k$
                    \label{loc:savid-retrieve-wait2}
                \State \Return $\codeDecode(( (i_j, \Bc_{i_j}) )_{j=1}^k)$
                    \label{loc:savid-retrieve-decode}
            \end{algorithmic}
        \end{algorithm}

\begin{figure}[t]
    \centering
    \begin{tikzpicture}[x=0.9cm]
        \footnotesize
        
        \node[align=left,anchor=north west,yshift=8pt] at (0,0) {\textsc{Client}};

        \node [minimum width=2.3mm,minimum height=1.5cm,draw=black] (u1) at (0.3,-1.5) {};
        \node [minimum width=2.3mm,minimum height=1.5cm,draw=black] (u2) at (0.6,-1.5) {};
        \node [minimum width=2.3mm,minimum height=1.5cm,draw=black] (u3) at (0.9,-1.5) {};
        \node (u4) at (1.2,-1.5) {...};
        \node [minimum width=2.3mm,minimum height=1.5cm,draw=black] (uK) at (1.5,-1.5) {};
        
        \node [rotate=90,font=\scriptsize] at (u1) {$\Bu_1$};
        \node [rotate=90,font=\scriptsize] at (u2) {$\Bu_2$};
        \node [rotate=90,font=\scriptsize] at (u3) {$\Bu_3$};
        \node [rotate=90,font=\scriptsize] at (uK) {$\Bu_k$};
        
        \draw [Latex-Latex] ([yshift=7pt] u1.north west) -- ([yshift=7pt] uK.north east) node [midway,fill=white,font=\scriptsize] {$k$};
        \draw [Latex-Latex] ([xshift=-7pt] u1.north west) -- ([xshift=-7pt] u1.south west) node [midway,fill=white,font=\scriptsize] {$L$};

        \node [minimum width=2.3mm,minimum height=1.5cm,draw=black] (c1) at (3.3,-1.5) {};
        \node [minimum width=2.3mm,minimum height=1.5cm,draw=black] (c2) at (3.6,-1.5) {};
        \node [minimum width=2.3mm,minimum height=1.5cm,draw=black] (c3) at (3.9,-1.5) {};
        \node (c4) at (4.2,-1.5) {...};
        \node [minimum width=2.3mm,minimum height=1.5cm,draw=black] (c5) at (4.5,-1.5) {};
        \node (c6) at (4.8,-1.5) {...};
        \node [minimum width=2.3mm,minimum height=1.5cm,draw=black] (cN) at (5.1,-1.5) {};
        
        \node [rotate=90,font=\scriptsize] at (c1) {$\Bc_1$};
        \node [rotate=90,font=\scriptsize] at (c2) {$\Bc_2$};
        \node [rotate=90,font=\scriptsize] at (c3) {$\Bc_3$};
        \node [rotate=90,font=\scriptsize] at (c5) {$\Bc_i$};
        \node [rotate=90,font=\scriptsize] at (cN) {$\Bc_n$};
        
        \draw [Latex-Latex] ([yshift=7pt] c1.north west) -- ([yshift=7pt] cN.north east) node [midway,fill=white,font=\scriptsize] {$n$};
        \draw [Latex-Latex] ([xshift=7pt] cN.north east) -- ([xshift=7pt] cN.south east) node [midway,fill=white,font=\scriptsize] {$L$};

        \draw [-Latex] (1.8,-0.9) -- (3,-0.9);
        \draw [-Latex] (1.8,-1.1) -- (3,-1.1);
        \draw [-Latex] (1.8,-1.3) -- (3,-1.3);
        \draw [-Latex] (1.8,-1.5) -- (3,-1.5);
        \draw [-Latex] (1.8,-1.7) -- (3,-1.7);
        \draw [-Latex] (1.8,-1.9) -- (3,-1.9);
        \draw [-Latex] (1.8,-2.1) -- (3,-2.1);

        \node [rotate=90,fill=white] at (2.3,-1.5) {$\codeEncode^{\otimes L}$};

        \node [] (h1) at (0.3,-3.4) {$h_1$};
        \node [] (h2) at (0.6,-3.4) {$h_2$};
        \node [] (h3) at (0.9,-3.4) {$h_3$};
        \node [] at (1.2,-3.4) {...};
        \node [] (hK) at (1.5,-3.4) {$h_k$};
        
        \draw [-Latex] (u1) -- (h1);
        \draw [-Latex] (u2) -- (h2);
        \draw [-Latex] (u3) -- (h3);
        \draw [-Latex] (u5) -- (hK);
        
        \draw [draw=none] (u3) -- (h3) node [pos=0.4,fill=white] {$\VC^{\otimes k}$};

        \node[align=right,anchor=north east,yshift=8pt,xshift=-10pt] at (9.25,0) {\textsc{Storage Node $i$}};
        
        \node [align=center,anchor=east] (verify) at (9.25,-3.4) {\\\\$[\codeEncode(h_1, ..., h_k)]_i$ \\$\overset{?}{=} \VC(\Bc_i)$\\If not, abort. Else,\\$C \leftarrow \CRHF(h_1\|...\|h_k)$,\\$S_i[C] \leftarrow ((h_1, ..., h_k), \Bc_i)$.};
        
        \draw [-Latex] (commitments) -- (verify) node [pos=0.6,above] {$(\savidLabelDisperse, (h_1, ..., h_k), \Bc_i)$};
        \draw [-Latex] (c5.south) -- (verify);

        \node (certificate) at (1.1,-4.5) {$P = \bigcup_{j=1}^q \{ i_j \mapsto \sigma_{i_j} \}$};
        
        \draw [-Latex] (verify.west |- certificate) -- (certificate) node [midway,above] {$\sigma_i \triangleq \sigSign(\cryptoSK_i, (\savidLabelStored, C))$};
        
    \end{tikzpicture}
    \caption[]{$\savidDisperse$ protocol of our Semi-AVID-PR construction $\OURsavid$ (\cf Figure~\ref{fig:savidpr-disperse-overview}).
    Client arranges data in $L \times k$ matrix $\BU$,
    computes commitments $h_1, ..., h_k$ column-wise
    and $L \times n$ coded matrix $\BC$ row-wise.
    Commitments and $i$-th column $\Bc_i$ of $\BC$ are sent to storage node $i$.
    Upon verification, storage node computes commitment $C$ to the data,
    stores commitments and chunk, and acknowledges receipt of chunk to client.
    Client forms certificate of retrievability $P$
    from $q$ unique server identifiers $i_j$ and their receipts $\sigma_{i_j}$.}
    \label{fig:savidpr-disperse}
\end{figure}
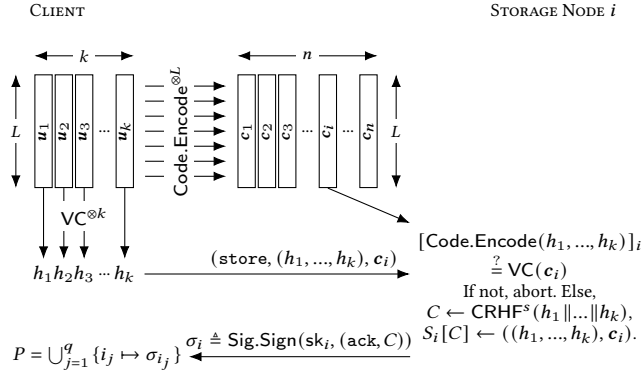

We provide a construction $\OURsavid$ of Semi-AVID-PR from
a binding deterministic linear vector commitment scheme $\slvc$,
a maximum distance separable $(n,k)$-code $\code$,
a collision resistant hash function $\CRHF$,
and a secure digital signature scheme $\sig$.
Our construction
satisfies the properties laid out in Section~\ref{sec:model}
as shown in Section~\ref{sec:security-argument}.
Moreover, it is storage- and communication-efficient
and incurs practically moderate cost for cryptographic computations
and erasure-correction coding
as demonstrated in Section~\ref{sec:evaluation}.
It is easy to extend our scheme with blinding such that
an honest-but-curious storage node 
cannot learn anything about the dispersed data from its chunk
(see Section~\ref{sec:privacy}).

Our construction $\OURsavid$ is provided in
Algs.~\ref{alg:savid-setup} to~\ref{alg:savid-retrieve}.
See also Figure~\ref{fig:savidpr-disperse}
for an illustration of the $\savidDisperse$ protocol
(\cf Figure~\ref{fig:savidpr-disperse-overview}).
Our approach is related to AVID-FP \cite{avidfp}
in that we also use the linear homomorphism between
the LVC and the erasure-correcting code.
More specifically,
during $\savidDisperse$,
the input file $B$ is arranged as an $L \times k$ matrix $\BU$
(using $\operatorname{AsMatrix}_{L \times k}$)
and a commitment $h_i$ is taken per column $\Bu_i$.
Vectorization of the $k$ column commitments of $\BU$ is denoted as $\VC^{\otimes k}(\BU)$.
The matrix is encoded row-wise into a coded matrix $\BC$,
of which each column $\Bc_i$ constitutes the chunk
for storage server $i$.
Vectorization of the $L$ row encodings of $\BU$ is denoted as $\codeEncode^{\otimes L}(\BU)$.
Now, due to the linear homomorphism,
the commitment of the encodings $\Bc_i$ is equal to the
encoding of the commitments $h_i$.
This allows storage nodes to easily verify the consistency of their chunk
with the uncoded data
(\ie, the verifiability property in AVID).
For this check, a storage node only needs to know
the commitments $h_i$ of the uncoded data,
which keeps the communication-overhead of the scheme low.
Our approach differs from AVID-FP in that
AVID-FP still performs a round of AVID (for the commitments)
in order to satisfy the full AVID requirements (in particular Termination
and Agreement),
while our Semi-AVID-PR scheme satisfies only the weaker
Correctness property
(\cf Figure~\ref{fig:development-avid-schemes}).

Our construction is parametric in the design resilience $t$,
the quorum size $q$ for certificates of retrievability,
the code dimension $k$
and the length of chunks $L$.
The analysis of Section~\ref{sec:security-argument} reveals
that $q \leq (n-t)$, $0 < (q-t)$ and $k \leq (q-t)$
are necessary.
So given any $t < n/2$ and target file size $|B|$ (in field elements),
choose
$q \triangleq (n-t)$,
minimize storage overhead
with $k \triangleq n-2t$,
and set $L \triangleq |B|/k$.

During $\savidSetup$ (Algorithm~\ref{alg:savid-setup}),
a trusted party performs the setup of
the $\slvc$ and the $\crhf$, and each storage node generates a cryptographic identity
for $\sig$.
The public parameters of the LVC and the public keys of the storage nodes
become common knowledge, each storage node stores its
secret key.

The $\savidCommit$ment of a block $B$ (Algorithm~\ref{alg:savid-commit})
is computed by arranging $B$ as an $L \times k$ matrix $\BU$,
then computing the commitments $h_i$ as $\VC(\Bu_i)$
for each of the $k$ columns $\Bu_i$ of $\BU$,
and finally $\CRHF(h_1\|...\|h_k)$ is the commitment.

To $\savidDisperse$ a block $B$ (Algorithm~\ref{alg:savid-disperse},
Figures~\ref{fig:savidpr-disperse-overview}, \ref{fig:savidpr-disperse}),
the client first computes
$\BU$ and the commitments $h_i$ as for $\savidCommit$.
Then, $\BU$ is encoded row-wise using $\codeEncode$ to obtain
an $L \times n$ coded matrix $\BC$.
Each column $\Bc_i$ of $\BC$ is sent to storage node $i$ together with
$(h_1,...,h_k)$. Each storage node $i$ verifies its chunk using
the linearly homomorphic property (aborting if violated)
\begin{IEEEeqnarray}{C}
    \label{eq:protocol-homomorphic-property}
    [\codeEncode(h_1, ..., h_k)]_i \overset{?}{=} \VC(\Bc_i),
\end{IEEEeqnarray}
before computing the file's commitment
$C \triangleq \CRHF(h_1\|...\|h_k)$ and storing commitments and chunk indexed by $C$.
The storage node then acknowledges receipt of the chunk by sending
a signature on $(\savidLabelStored, C)$ to the client.
Upon collecting valid signatures $\sigma_{i_j}$ from $q$ unique storage nodes $i_j$,
the client collects them into a certificate of retrievability $P$.

To $\savidVerify$ a certificate of retrievability $P$
for a commitment $C$ (Algorithm~\ref{alg:savid-verify}),
one counts whether $P$ contains valid signatures on $(\savidLabelStored, C)$
from at least $q$ unique storage nodes.

Finally, to $\savidRetrieve$ a file based on a
certificate of retrievability $P$ for a commitment $C$,
the client first extracts any $q$ unique storage nodes
for which $P$ contains a valid signature on $(\savidLabelStored, C)$.
The client then queries the chunks of $C$ from these storage nodes.
The storage nodes reply with the commitments and chunks they have
stored for $C$.
The client first waits until some commitments
$(h_1,...,h_k)$ satisfy $C = \CRHF(h_1\|...\|h_k)$.
Then, the client discards any chunks that do not satisfy
the homomorphic property \eqref{eq:protocol-homomorphic-property}.
Upon receiving valid chunks from $k$ unique storage nodes,
the client uses $\codeDecode$ to decode the file.

To protect the data against honest-but-curious storage nodes
(\ie, assuming storage nodes do not collude---clearly,
a sufficiently large set of
storage nodes can retrieve the data, which is a design
goal of Semi-AVID-PR),
$\BU$ can be augmented by the client with a column and a row
drawn uniformly a random, to blind the encoded chunks.
Details in Section~\ref{sec:privacy}.

\section{Security Proof}
\label{sec:security-argument}
We first provide a proof sketch to convey the relevant high level intuition,
before providing a formal reduction-based proof of
Commitment-Binding and Availability in Lemmas~\ref{lem:commitment-binding}
and~\ref{lem:availability}.

\begin{theorem}
The Semi-AVID-PR construction $\OURsavid$ of Section~\ref{sec:protocol}
is \emph{secure with resilience $t$}
for any $t < n/2$
as defined in Section~\ref{sec:model},
assuming $\crhf$ is collision resistant,
$\sig$ is existentially unforgeable,
and $\slvc$ is binding.
\end{theorem}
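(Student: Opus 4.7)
The plan is to prove each of the three Semi-AVID-PR security properties in turn: Commitment-Binding will reduce to collision resistance of $\crhf$ combined with binding of $\slvc$; Correctness will follow directly from the linear homomorphism that links $\slvcCommit$ and $\codeEncode$ together with the parameter choice $q \leq n-t$; and Availability will combine existential unforgeability of $\sig$, collision resistance of $\crhf$, binding of $\slvc$, and the MDS property of $\code$ with the parameter choice $k \leq q-t$. I expect Availability to be the main obstacle, since it is the only property requiring non-trivial coding-theoretic reasoning to connect retrieved chunks back to the commitment.

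For Commitment-Binding, I would give a direct reduction. Given an adversary returning $B \neq B'$ with $\savidCommit(B) = \savidCommit(B')$, let $\BU, \BU'$ be their matrix arrangements and $(h_1,\ldots,h_k), (h'_1,\ldots,h'_k)$ the column-wise $\slvcCommit$ values. Either the two concatenations $h_1\|\ldots\|h_k$ and $h'_1\|\ldots\|h'_k$ differ but hash to the same $\crhf$ output (yielding a $\crhf$ collision), or they agree componentwise, in which case $B \neq B'$ forces some column with $\Bu_i \neq \Bu'_i$ but $\slvcCommit(\Bu_i) = \slvcCommit(\Bu'_i)$, breaking $\slvc$ binding. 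For Correctness, the linear homomorphism of $\slvcCommit$ composed with linearity of $\codeEncode$ gives $\slvcCommit(\Bc_i) = [\codeEncode(h_1,\ldots,h_k)]_i$ at every $i$, so every honest storage node passes the check on line~\ref{loc:savid-disperse-consistency-check} and replies with a valid signature on line~\ref{loc:savid-disperse-receipt}; since at least $n-t \geq q$ nodes are honest the client eventually assembles a certificate satisfying $\savidVerify(P, \savidCommit(B)) = \TRUE$.

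For Availability, I would argue as follows. Suppose $\CA$ wins $\gameAv$ with valid certificate $P$ for $C$. Existential unforgeability of $\sig$ ensures, up to negligible probability, that every signature in $P$ was produced by the corresponding storage node after storing some $((h_1,\ldots,h_k), \Bc_i)$ satisfying $C = \CRHF(h_1\|\ldots\|h_k)$ and the homomorphic identity of line~\ref{loc:savid-disperse-consistency-check}. Because at most $t$ parties are corrupted and $P$ lists at least $q$ unique signers, at least $q - t \geq k$ of these signers are honest. During $\savidRetrieve$ the client therefore receives, from at least one honest node, a tuple hashing to $C$; collision resistance of $\crhf$ pins down the tuple $(h_1,\ldots,h_k)$ uniquely, so the wait on line~\ref{loc:savid-retrieve-wait} terminates, and the chunks of the $\geq k$ honest signers pass the homomorphic check on line~\ref{loc:savid-retrieve-wait2}.

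The final and most delicate step is showing that the decoded block $\hat B = \codeDecode(((i_j, \Bc_{i_j}))_{j=1}^k)$ satisfies $\savidCommit(\hat B) = C$. Let $\hat \BU$ denote the decoded matrix and define $h'_j \triangleq \slvcCommit(\hat \Bu_j)$. By the linear homomorphism applied to $\hat \BU$, we have $\slvcCommit([\codeEncode^{\otimes L}(\hat \BU)]_i) = [\codeEncode(h'_1,\ldots,h'_k)]_i$ for every $i$, and by MDS decoding $[\codeEncode^{\otimes L}(\hat \BU)]_{i_j} = \Bc_{i_j}$ at the $k$ indices used for decoding. Combining this with the stored identity $\slvcCommit(\Bc_{i_j}) = [\codeEncode(h_1,\ldots,h_k)]_{i_j}$ gives agreement of the two codewords $\codeEncode(h_1,\ldots,h_k)$ and $\codeEncode(h'_1,\ldots,h'_k)$ at $k$ distinct coordinates, so the MDS property forces $(h_1,\ldots,h_k) = (h'_1,\ldots,h'_k)$, whence $\savidCommit(\hat B) = \CRHF(h_1\|\ldots\|h_k) = C$, contradicting $\CA$ winning. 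A standard hybrid bound over the three cryptographic reductions then yields the overall negligible winning probability.
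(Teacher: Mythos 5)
Your proposal is correct and follows essentially the same structure as the paper's proof: the case split between a $\crhf$ collision and a $\slvc$-binding collision for Commitment-Binding, the linear-homomorphism argument for Correctness, and signature unforgeability plus $\crhf$ collision resistance plus the quorum count $k \le q-t$ plus MDS decoding for Availability, where your re-encode-and-compare-at-$k$-positions step is the same algebra the paper expresses via $\VC^{\otimes k}(\BU) = \VC^{\otimes k}(\tilde\BC)\tilde\BG^{-1} = (h_1,\ldots,h_k)$. One small imprecision: you list binding of $\slvc$ among the assumptions combined for Availability, but your argument (like the paper's Lemma~\ref{lem:availability}) never actually invokes it there — only Commitment-Binding relies on it.
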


\begin{proof}[Proof sketch]
        \textbf{Commitment-Binding.}
        $\OURsavid$ is deterministic because $\VC$ and $\CRHF$ are.
        For binding, assume for contradiction that $\OURsavid$ was not commitment-binding,
        \ie, there was an adversary $\CA$ that can produce blocks $B \neq B'$
        such that $\savidCommit(B) = \savidCommit(B')$
        with non-negligible probability. 
        Since $B \neq B'$, for their respective representations as
        $L \times k$ matrices,
        $\BU \neq \BU'$.
        Either $\Bh \triangleq \VC^{\otimes k}(\BU) \neq \VC^{\otimes k}(\BU') \triangleq \Bh'$ but $\CRHF(\Bh) = \CRHF(\Bh')$, a collision in $\CRHF$, which can happen with negligible probability only by assumption, 
        or $\Bh = \Bh'$,
        so that for some $i$, $\VC([\BU]_{i}) = \VC([\BU']_{i})$
        but $[\BU]_{i} \neq [\BU']_{i}$, so $([\BU]_{i}, [\BU']_{i})$ is a pair
        that breaks the binding property of $\VC$, which can happen with negligible probability only by assumption. Thus, $\OURsavid$ is commitment-binding.

        Lemma~\ref{lem:commitment-binding} below establishes Commitment-Binding formally.

        \textbf{Correctness.}
        Since the client is honest and $\slvc$ is linearly homomorphic,
        the consistency check in Algorithm~\ref{alg:savid-disperse}
        l.~\ref{loc:savid-disperse-consistency-check} passes at all honest
        storage nodes.
        So the client receives signatures $\sigma_{i_j}$ (which are valid,
        by correctness of $\sig$) of $(\savidLabelStored, \savidCommit(B))$
        from at least $(n-t)$ unique
        storage nodes $i_j$,
        which it can bundle into a certificate of retrievability
        $P$ that satisfies the check of Algorithm~\ref{alg:savid-verify}
        by construction, \emph{if $q \leq (n-t)$}.
        \textbf{Availability.}
        Since $\savidVerify(P, C) = \TRUE$ by assumption,
        the client can extract some $q$ unique storage nodes $i_j$
        from $P$ in Algorithm~\ref{alg:savid-retrieve} l.~\ref{loc:savid-retrieve-extract}.
        Of these $q$ storage nodes, at least $(q-t)$
        remain honest. Security of $\sig$ implies that they must have
        previously executed Algorithm~\ref{alg:savid-disperse}
        l.~\ref{loc:savid-disperse-receipt} and hence
        stored $(h_1,...,h_k)$ for
        $C = \CRHF(h_1,...,h_k)$ in
        Algorithm~\ref{alg:savid-disperse}
        l.~\ref{loc:savid-disperse-store}
        and their chunks satisfied the consistency check
        in Algorithm~\ref{alg:savid-disperse}
        l.~\ref{loc:savid-disperse-consistency-check}.
        Note that by collision resistance of $\CRHF$,
        there can be only one set of $(h_1,...,h_k)$
        for $C$.
        \emph{As long as $(q-t) > 0$,}
        the client eventually completes the wait in
        Algorithm~\ref{alg:savid-retrieve} l.~\ref{loc:savid-retrieve-wait},
        and \emph{if $k \leq (q-t)$},
        then the client eventually also completes
        the wait in
        Algorithm~\ref{alg:savid-retrieve} l.~\ref{loc:savid-retrieve-wait2}.
        Finally, since $\code$ is an
        MDS $(n,k)$-code,
        Algorithm~\ref{alg:savid-retrieve} l.~\ref{loc:savid-retrieve-decode}
        succeeds to decode a block $B$, corresponding to
        an $L \times k$ matrix $\BU$.
        It remains to show that $\savidCommit(B) = C$,
        for which (by $\CRHF$)
        it suffices that $(h_1,...,h_k) = \VC^{\otimes k}(\BU)$.
        Note the decoder uses that $\code$ is an MDS $(n,k)$-code
        and thus any $k \times k$ submatrix
        of its generator matrix $\BG$ is invertible,
        and the relation
        \begin{IEEEeqnarray}{C}
            \underbrace{
            \setlength{\arraycolsep}{2pt}\begin{bmatrix}
                \Bu_1 & ... & \Bu_k
            \end{bmatrix}
            }_{
            = \BU
            }
            \underbrace{
            \setlength{\arraycolsep}{2pt}\begin{bmatrix}
                \Bg_{i_1} & ... & \Bg_{i_k}
            \end{bmatrix}
            }_{
            \triangleq \tilde\BG
            }
            \overset{!}{=}
            \underbrace{
            \setlength{\arraycolsep}{2pt}\begin{bmatrix}
                \Bc_{i_1} & ... & \Bc_{i_k}
            \end{bmatrix}
            }_{
            \triangleq \tilde\BC
            }
            \iff
            \BU = \tilde\BC \tilde\BG^{-1}.
            \IEEEeqnarraynumspace
        \end{IEEEeqnarray}
        At the same time, by the checks in Algorithm~\ref{alg:savid-retrieve} l.~\ref{loc:savid-retrieve-wait2},
        \begin{IEEEeqnarray}{C}
            \setlength{\arraycolsep}{2pt}\begin{bmatrix}
                h_1 & ... & h_k
            \end{bmatrix}
            \tilde\BG
            =
            \setlength{\arraycolsep}{2pt}\begin{bmatrix}
                \VC(\Bc_{i_1}) & ... & \VC(\Bc_{i_k})
            \end{bmatrix}.
            \IEEEeqnarraynumspace
        \end{IEEEeqnarray}
        By the linear homomorphism of $\slvc$,
        \begin{IEEEeqnarray}{C}
            \VC^{\otimes k}(\BU) = \VC^{\otimes k}(\tilde\BC) \tilde\BG^{-1}
            =
            \setlength{\arraycolsep}{2pt}\begin{bmatrix}
                h_1 & ... & h_k
            \end{bmatrix}.
            \IEEEeqnarraynumspace
        \end{IEEEeqnarray}
        Thus, as long as the retrieving client receives
        enough chunks from honest storage nodes
        such that $\savidRetrieve$ outputs any block $\hat B$
        rather than $\bot$,
        $\savidCommit(\hat B) = C$ for the commitment $C$
        associated with the certificate of retrievability $P$.
        This insight is crucial for the subsequent formal proof
        of Availability.

        Lemma~\ref{lem:availability} below establishes Availability formally.

        \textbf{Resilience.}
        From above analysis, we obtain the constraints
        $q \leq (n-t)$ (for Correctness),
        $0 < (q-t)$ (for Availability, to obtain $h_1, ..., h_k$),
        and
        $k \leq (q-t)$ (for Availability, to decode),
        which the choice of parameters in Section~\ref{sec:protocol}
        satisfies, and which lead to the resilience bound $t < n/2$.
\end{proof}

\begin{algorithm}[t]
    \caption{$\CA_{\gameCF\leftarrow\gameCB}(s)$
    constructed from $\CA_{\gameCB}$}
    \label{alg:reduction-binding-crhf}
    \begin{algorithmic}[1]
        \State $\cryptoPP_{\slvc} \gets \slvcSetup(1^\lambda)$
            \Comment{Setup $\OURsavid$ (\cf Alg.~\ref{alg:savid-setup}) ...}
        \State $\forall i \in [n]: (\cryptoPK_i, \cryptoSK_i) \gets \sigKeygen(1^\lambda)$ 
        \State $\cryptoPP_{\crhf} \gets s$
            \Comment{... except use $s$ for $\cryptoPP_{\crhf}$}
        \State $\cryptoPP \gets (\cryptoPP_{\slvc}, \cryptoPP_{\crhf}, \cryptoPK_1, ..., \cryptoPK_n)$
        \State $(B, B') \gets \CA_{\gameCB}(\cryptoPP, \cryptoSP_1, ..., \cryptoSP_n)$
        \State $\BU \leftarrow \operatorname{AsMatrix}_{L \times k}(B)$
        \State $\BU' \leftarrow \operatorname{AsMatrix}_{L \times k}(B')$
        \State $(h_1, ..., h_k) \gets \VC^{\otimes k}(\BU)$
        \State $(h'_1, ..., h'_k) \gets \VC^{\otimes k}(\BU')$
        \If $\begin{aligned}[t]
                    & h_1\|...\|h_k \neq h'_1\|...\|h'_k    \\
                    & \land \CRHF(h_1\|...\|h_k) = \CRHF(h'_1\|...\|h'_k)  
                \end{aligned}$
            \State \Return $(h_1\|...\|h_k, h'_1\|...\|h'_k)$ \Comment{Collision in $\crhfH$}
        \Else
            \State \Abort
                \Comment{No collision identified}
        \EndIf
    \end{algorithmic}
\end{algorithm}

\begin{algorithm}[t]
    \caption{$\CA_{\gameVCB\leftarrow\gameCB}(\cryptoPP)$
    constructed from $\CA_{\gameCB}$}
    \label{alg:reduction-binding-vcbinding}
    \begin{algorithmic}[1]
        \State $\cryptoPP_{\crhf} \gets \crhfGen(1^\lambda)$
            \Comment{Setup $\OURsavid$ (\cf Alg.~\ref{alg:savid-setup}) ...}
        \State $\forall i \in [n]: (\cryptoPK_i, \cryptoSK_i) \gets \sigKeygen(1^\lambda)$ 
        \State $\cryptoPP_{\slvc} \gets \cryptoPP$
            \Comment{... except use $\cryptoPP$ for $\cryptoPP_{\slvc}$}
         \State $\cryptoPP \gets (\cryptoPP_{\slvc}, \cryptoPP_{\crhf}, \cryptoPK_1, ..., \cryptoPK_n)$
        \State $(B, B') \gets \CA_{\gameCB}(\cryptoPP, \cryptoSP_1, ..., \cryptoSP_n)$ 
        \State $\BU \leftarrow \operatorname{AsMatrix}_{L \times k}(B)$
        \State $\BU' \leftarrow \operatorname{AsMatrix}_{L \times k}(B')$
        \If $\exists i \in [k] : [\BU]_{i} \neq  [\BU']_{i} \land \VC([\BU]_{i}) = \VC([\BU']_{i})$ 
            \State \Return $([\BU]_{i}, [\BU']_{i})$ 
        \Else
            \State \Abort
                
        \EndIf
    \end{algorithmic}
\end{algorithm}

We proceed to give formal reduction-based proofs
of the Com\-mit\-ment-Binding and Availability properties
of $\OURsavid$.

\begin{lemma}
\label{lem:commitment-binding}
If $\crhf$ is collision resistant and $\slvc$ is binding, then 
$\OURsavid$
is commitment-binding, \ie, for all PPT adversaries $\CA_{\gameCB}$ there exists a negligible function $\negl(.)$ such that
\begin{IEEEeqnarray}{C}
\Prob{\gameCB_{\OURsavid, \CA_{\gameCB}}(\lambda) = \TRUE} \leq \negl(\lambda).
\end{IEEEeqnarray}
\end{lemma}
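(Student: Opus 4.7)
The plan is to reduce winning $\gameCB$ against $\OURsavid$ to either breaking collision resistance of $\crhf$ or breaking the binding property of $\slvc$, via the two reductions already exhibited in Algs.~\ref{alg:reduction-binding-crhf} and~\ref{alg:reduction-binding-vcbinding}. Suppose $\CA_{\gameCB}$ wins $\gameCB_{\OURsavid, \CA_{\gameCB}}(\lambda)$ with probability $\epsilon(\lambda)$. First I would check that in each reduction the global parameters $(\cryptoPP, \cryptoSP_1, \ldots, \cryptoSP_n)$ handed to $\CA_{\gameCB}$ are distributed identically to those produced by $\OURsavid.\savidSetup$ in the real $\gameCB$: the outer game supplies a freshly sampled challenge (namely $s$ from $\crhfGen(1^\lambda)$ in the CRHF reduction, and $\cryptoPP_{\slvc}$ from $\slvcSetup(1^\lambda)$ in the VC reduction), while the reduction samples all remaining components honestly. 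Hence, conditioned on the outer game, the event $\mathsf{W}$ that $\CA_{\gameCB}$ returns a pair $(B, B')$ with $B \neq B'$ and $\savidCommit(B) = \savidCommit(B')$ still occurs with probability $\epsilon(\lambda)$ inside each reduction.

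Next I would partition $\mathsf{W}$ based on the intermediate column commitments $\Bh \triangleq \VC^{\otimes k}(\BU)$ and $\Bh' \triangleq \VC^{\otimes k}(\BU')$, where $\BU, \BU'$ are the matrix representations of $B, B'$. Define $\mathsf{W}_1 = \mathsf{W} \cap \{\Bh \neq \Bh'\}$ and $\mathsf{W}_2 = \mathsf{W} \cap \{\Bh = \Bh'\}$. On $\mathsf{W}_1$, the concatenations $h_1\|\ldots\|h_k$ and $h'_1\|\ldots\|h'_k$ differ, yet because $\savidCommit(B) = \savidCommit(B')$ and $\savidCommit$ wraps $\CRHF$ around that concatenation, they hash to the same digest; so Alg.~\ref{alg:reduction-binding-crhf} outputs a valid $\crhfH^s$ collision and wins $\gameCF$. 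On $\mathsf{W}_2$, $B \neq B'$ forces $\BU \neq \BU'$, so some column index $i \in [k]$ satisfies $[\BU]_i \neq [\BU']_i$, while $\Bh = \Bh'$ yields $\VC([\BU]_i) = \VC([\BU']_i)$; thus Alg.~\ref{alg:reduction-binding-vcbinding} produces a witness breaking binding and wins $\gameVCB$.

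Putting the pieces together, $\Prob{\gameCF_{\crhf, \CA_{\gameCF\leftarrow\gameCB}}(\lambda) = \TRUE} \geq \Prob{\mathsf{W}_1}$ and $\Prob{\gameVCB_{\slvc, \CA_{\gameVCB\leftarrow\gameCB}}(\lambda) = \TRUE} \geq \Prob{\mathsf{W}_2}$, whence $\epsilon(\lambda) = \Prob{\mathsf{W}_1} + \Prob{\mathsf{W}_2}$ is bounded by the sum of two negligible functions (by the assumed collision resistance of $\crhf$ and binding of $\slvc$), which is itself negligible, completing the contradiction.

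The main obstacle I anticipate is purely bookkeeping: verifying that the two simulations are perfect so that conditional probabilities transfer without loss, and that the case split $\{\Bh \neq \Bh'\}, \{\Bh = \Bh'\}$ exactly partitions $\mathsf{W}$. Because $\VC^{\otimes k}$ and $\CRHF$ are deterministic functions of the adversary's outputs, the partition is immediate; and because $\savidSetup$ draws $s$ and $\cryptoPP_{\slvc}$ independently from precisely the distributions used by $\gameCF$ and $\gameVCB$, the distributions match exactly, with no subtle conditioning or independence issues to manage.
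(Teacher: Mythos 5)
Your proposal is correct and matches the paper's own argument: both use the reduction adversaries of Algs.~\ref{alg:reduction-binding-crhf} and~\ref{alg:reduction-binding-vcbinding}, partition the winning event by whether the intermediate column commitments $\Bh, \Bh'$ agree, and conclude by a union bound plus the assumed collision resistance of $\crhf$ and binding of $\slvc$. Your probability accounting ($\epsilon = \Prob{\mathsf{W}_1} + \Prob{\mathsf{W}_2}$, each piece bounded by the respective game's success probability) is a more streamlined presentation of the same decomposition the paper reaches via its chain of inequalities through the auxiliary event $U = \EVENTcf \lor \EVENTvcb$.
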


\begin{proof}
    Let $\CA_{\gameCB}$ be an arbitrary PPT $\gameCB$ adversary. We construct from it the adversaries $\CA_{\gameCF\leftarrow\gameCB}$ against $\gameCF$ and $\CA_{\gameVCB\leftarrow\gameCB}$ against $\gameVCB$. The adversary $\CA_{\gameCF\leftarrow\gameCB}$ is detailed in Alg.~\ref{alg:reduction-binding-crhf}. It receives a challenge $s$ and runs $\slvcSetup(1^\lambda)$ and $\sigKeygen(1^\lambda)$ to produce the remaining public parameters $\cryptoPP$ and the secret parameters $(\cryptoSP_1, ..., \cryptoSP_n)$ for $\CA_{\gameCB}$. After $\CA_{\gameCB}$ outputs a pair $(B, B')$, $\CA_{\gameCF\leftarrow\gameCB}$ computes $\BU \leftarrow \operatorname{AsMatrix}_{L \times k}(B)$ and $\BU' \leftarrow \operatorname{AsMatrix}_{L \times k}(B)$. Next, it computes $(h_1, ..., h_k) \gets \VC^{\otimes k}(\BU)$ and $(h'_1, ..., h'_k) \gets \VC^{\otimes k}(\BU')$ to check whether $h_1\|...\|h_k$ and $h'_1\|...\|h'_k$ are a collision of $\CRHF$. If so, it outputs $(h_1\|...\|h_k, h'_1\|...\|h'_k)$; else it aborts. 
    
    The adversary $\CA_{\gameVCB\leftarrow\gameCB}$ is detailed in Alg.~\ref{alg:reduction-binding-vcbinding}. It receives the challenge $\cryptoPP_{\slvc}$ and runs $\crhfGen(1^\lambda)$ and $\sigKeygen(1^\lambda)$ to produce the remaining public parameters $\cryptoPP$ and the secret parameters $(\cryptoSP_1, ..., \cryptoSP_n)$ for $\CA_{\gameCB}$. 
    After $\CA_{\gameCB}$ outputs a pair $(B, B')$, $\CA_{\gameVCB\leftarrow\gameCB}$ computes $\BU \leftarrow \operatorname{AsMatrix}_{L \times k}(B)$ and $\BU' \leftarrow \operatorname{AsMatrix}_{L \times k}(B')$. Next, it checks if there exists $i \in [k]$ such that the column $[\BU]_{i} \neq [\BU']_{i}$ but $\VC([\BU]_{i}) = \VC([\BU']_{i})$. If so, it outputs the collision $([\BU]_{i}, [\BU']_{i})$; else it aborts.
    
    The adversaries $\CA_{\gameCF\leftarrow\gameCB}$ and $\CA_{\gameVCB\leftarrow\gameCB}$ run in polynomial time. The input of the adversary $\CA_{\gameCB}$ when run as a subroutine of $\CA_{\gameCF\leftarrow\gameCB}$ or $\CA_{\gameVCB\leftarrow\gameCB}$ is distributed identically to the input of the adversary $\CA_{\gameCB}$ when run in $\gameCB$.
    
    For the subsequent arguments we define the following events:
    \newcommand{\EVENTcb}[0]{\ensuremath{E_{\mathrm{CB}}}}
    \newcommand{\EVENTvcb}[0]{\ensuremath{E_{\mathrm{VCB}}}}
    \newcommand{\EVENTcf}[0]{\ensuremath{E_{\mathrm{CF}}}}
    \newcommand{\EVENT}[0]{\ensuremath{E}}
    \begin{IEEEeqnarray}{rCl}
    \EVENTcb &\triangleq& \{\gameCB_{\OURsavid, \CA_{\gameCB}}(\lambda) = \TRUE\}\\
    \EVENTcf &\triangleq& \{\gameCF_{\crhf, \CA_{\gameCF\leftarrow\gameCB}}(\lambda) = \TRUE\}\\
    \EVENTvcb &\triangleq& \{\gameVCB_{\slvc, \CA_{\gameVCB\leftarrow\gameCB}}(\lambda) = \TRUE\}\\
    \EVENT &\triangleq& \{ \VC^{\otimes k}(\BU) \neq  \VC^{\otimes k}(\BU') \land \BU \neq \BU'\}   \\
    U &\triangleq& \EVENTcf \lor \EVENTvcb  
    \end{IEEEeqnarray}

    Suppose $\EVENTcb$ holds and $\CA_{\gameCB}$ outputs $(B, B')$ such that $\savidCommit(B) = \savidCommit(B')$ but $B \neq B'$. Hence, $\CRHF(h_1\|...\|h_k) = \CRHF(h'_1\|...\|h'_k)$, where 
    $\BU \leftarrow \operatorname{AsMatrix}_{L \times k}(B)$,
    $\BU' \leftarrow \operatorname{AsMatrix}_{L \times k}(B')$,
    $(h_1, ..., h_k) \gets \VC^{\otimes k}(\BU)$ and $(h'_1, ..., h'_k) \gets \VC^{\otimes k}(\BU')$. We consider two cases. If $\EVENT$ holds,
    then $(h_1, ..., h_k) \neq (h_1, ..., h_k)$.
    Thus, in the event of $\EVENT$,  $(h_1\|...\|h_k, h'_1\|...\|h'_k)$ is a collision of $\CRHF$, so the event $\EVENTcf$ holds. 
    In the case of $\lnot \EVENT$, $(h_1, ..., h_k) = (h'_1, ..., h'_k)$. Thus, there exists $i \in [k]$ such that $[\BU]_{i} \neq [\BU']_{i}$ but $\VC([\BU]_{i}) = \VC([\BU']_{i})$. So under $\lnot \EVENT$, the event $\EVENTvcb$ holds.
    
    Observe that if $\EVENT$ holds, then $\EVENTcf$ holds. Hence, $ \EVENT \subseteq \EVENTcf$ and $\Prob{\lnot \EVENTcf \land \EVENT} = 0$. 
    Similarly, if $\lnot \EVENT$ holds, then $\EVENTvcb$ holds. Hence, $ \lnot \EVENT \subseteq \EVENTvcb$ and $\Prob{\lnot \EVENTvcb \land \lnot \EVENT} = 0$.

    We can now bound the probability of $\EVENTcb$:
    \begin{IEEEeqnarray}{rCl}
        && \Prob{\EVENTcb}   \nonumber\\
        &\eqA& \CProb{\EVENTcb}{U}\Prob{U} + \CProb{\EVENTcb}{\lnot U}\Prob{\lnot U} \\
        &\leqB& \Prob{U} + \Prob{\EVENTcb \land \lnot U} \\
        &\leqC& \Prob{U} + \Prob{\EVENTcb \land \lnot U \land \EVENT}+ \Prob{\EVENTcb \land \lnot U \land \lnot \EVENT} \IEEEeqnarraynumspace\\
        &\leqD& \Prob{U} + \Prob{\lnot \EVENTcf \land \EVENT}+ \Prob{\lnot \EVENTvcb \land \lnot \EVENT}\\
        &\leqE& \Prob{U}\\
        &\leqF& \Prob{\EVENTcf} + \Prob{\EVENTvcb}
    \end{IEEEeqnarray}
    where (a) uses the law of total probability (TP) to introduce $U$;
    (b) uses $\CProb{\EVENTcb}{U} \leq 1$;
    (c) uses TP to introduce $\EVENT$;
    (d) uses 
    \begin{IEEEeqnarray}{C}
        \Prob{\EVENTcb \land \lnot \EVENTcf \land \lnot \EVENTvcb \land \EVENT} \leq \Prob{\lnot \EVENTcf \land \EVENT}
    \end{IEEEeqnarray}
    and  
    \begin{IEEEeqnarray}{C}
        \Prob{\EVENTcb \land \lnot \EVENTcf \land \lnot \EVENTvcb \land \lnot \EVENT} \leq \Prob{\lnot \EVENTvcb \land \lnot \EVENT};
    \end{IEEEeqnarray}
    (e) uses $\Prob{\lnot \EVENTcf \land \EVENT} = 0$ and $\Prob{\lnot \EVENTvcb \land \lnot \EVENT} = 0$;
    (f) uses a union bound.
    
    Since by assumption $\crhf$ is collision resistant and $\slvc$ is binding, there exist $\negl_1(.), \negl_2(.)$ such that $\Prob{\EVENTcf} \leq \negl_1(\lambda)$ and $\Prob{\EVENTvcb} \leq \negl_2(\lambda)$. Thus,
    \begin{IEEEeqnarray}{rCl}
        \Prob{\EVENTcb} \leq  \negl_1(\lambda) + \negl_2(\lambda) \leq \negl(\lambda).
    \end{IEEEeqnarray}
    Hence, $\OURsavid$ is commitment-binding.
\end{proof}

\begin{algorithm}[t]
    \caption{$\CA_{\gameEF\leftarrow\gameAv'}(\cryptoPK)$
    constructed from $\CA_{\gameAv'}$}
    \label{alg:reduction-availability-sig-forgery}
    \begin{algorithmic}[1]
        \State $I \drawrandom [n]$
            \Comment{Choose random party $I$ to emulate using $\CO^{\mathrm{sign}}(.)$}
        \State $\CC \gets \emptyset$
            \Comment{Bookkeeping of corrupted parties $\CC$}
        \State $\forall i\in[n]: S_i \gets \emptyset$
            \Comment{Blank state for each party $P_i$}
        \State $\cryptoPP_{\slvc} \gets \slvcSetup(1^\lambda)$
            \Comment{Setup $\OURsavid$ (\cf Alg.~\ref{alg:savid-setup}) ...}
        \State $\cryptoPP_{\crhf} \gets \crhfGen(1^\lambda)$
        \State $\forall i \in [n] \setminus \{I\}: (\cryptoPK_i, \cryptoSK_i) \gets \sigKeygen(1^\lambda)$
        \State $\cryptoPK_I \gets \cryptoPK$
            \Comment{... except use $\cryptoPK$ for party $I$'s $\cryptoPK_I$}
        \State $\cryptoPP \gets (\cryptoPP_{\slvc}, \cryptoPP_{\crhf}, \cryptoPK_1, ..., \cryptoPK_n)$
        \Function{$\CO^{\mathrm{corrupt}}$}{$i$}
            \State \Assert{$i \not\in \CC$}
            \If{$i \neq I$}
                \State $\CC \gets \CC \cup \{i\}$
                \State\Return $(\cryptoSK_i, S_i)$
            \Else
                \State\Abort
                    \Comment{Cannot hand over $P_I$ since $\cryptoSK_I$ is unknown}
            \EndIf
        \EndFunction
        \Function{$\CO^{\mathrm{interact}}$}{$i, m$}
            \State \Assert{$i \not\in \CC$}
            \If{$i \neq I$}
                \State\Return $\OURsavidSUPER{\cryptoSK_i, S_i}(m)$
                    \Comment{Execute $P_i$ on input $m$ and state $\cryptoSK_i, S_i$, and return output to $\CA$}
            \Else
                \State\Return $\OURsavidSUPER{S_i}[\CO^{\mathrm{sign}}(.)/\sigSign(\cryptoSK_i,.)](m)$
                    \Comment{Execute $P_I$ on input $m$ and state $S_i$, substituting $\CO^{\mathrm{sign}}(.)$ for invocations of $\sigSign(\cryptoSK_i,.)$, and return output to $\CA$}
            \EndIf
        \EndFunction
        \State $\left(P, C, \left(\CO_i^{\mathrm{node}}(.)\right)_{i\in\CC}\right)
            \gets \CA^{\CO^{\mathrm{corrupt}}(.), \CO^{\mathrm{interact}}(.)}_{\gameAv'}(\cryptoPP)$
        \State $\hat B \gets \savidRetrieve^{P_1,...,P_n}\left[\CO_i^{\mathrm{node}}(.)/\Call{Query}{i, .}\right]_{i\in\CC}(P, C)$
            \Comment{During retrieval, interact with corrupted nodes through oracles}
        \If{$\exists \sigma: (I \mapsto \sigma) \in P
            \land S_I[C] = \emptyset$}
            \State \Return $((\savidLabelStored, C), \sigma)$
                \Comment{Forgery for $\cryptoPK_I = \cryptoPK$}
        \Else
            \State \Abort
                \Comment{No forgery for $\cryptoPK_I = \cryptoPK$ identified}
        \EndIf
    \end{algorithmic}
\end{algorithm}

\begin{algorithm}[t]
    \caption{$\CA_{\gameCF\leftarrow\gameAv'}(s)$
    constructed from $\CA_{\gameAv'}$}
    \label{alg:reduction-availability-crhf}
    \begin{algorithmic}[1]
        \State $I \drawrandom [n]$
        \State $\CC, \CH \gets \emptyset, \emptyset$
            \Comment{Bookkeeping of corrupted parties $\CC$ and image/preimage pairs $\CH$ for $\crhfH$}
        \State $\forall i\in[n]: S_i \gets \emptyset$
            \Comment{Blank state for each party $P_i$}
        \State $\cryptoPP_{\slvc} \gets \slvcSetup(1^\lambda)$
            \Comment{Setup $\OURsavid$ (\cf Alg.~\ref{alg:savid-setup}) ...}
        \State $\forall i \in [n]: (\cryptoPK_i, \cryptoSK_i) \gets \sigKeygen(1^\lambda)$
        \State $\cryptoPP_{\crhf} \gets s$
            \Comment{... except use $s$ for $\cryptoPP_{\crhf}$}
        \State $\cryptoPP \gets (\cryptoPP_{\slvc}, \cryptoPP_{\crhf}, \cryptoPK_1, ..., \cryptoPK_n)$
        \Function{$\CO^{\mathrm{corrupt}}$}{$i$}
            \State \Assert{$i \not\in \CC$}
            \If{$i \neq I$}
                \State $\CC \gets \CC \cup \{i\}$
                \State\Return $(\cryptoSK_i, S_i)$
            \Else
                \State\Abort
            \EndIf
        \EndFunction
        \Function{$\CO^{\mathrm{interact}}$}{$i, m$}
            \State \Assert{$i \not\in \CC$}
            \If{$m$ parses as $(\savidLabelDisperse, (h_1, ..., h_k), \Bc)$}
                \State $\CH \gets \CH \cup \{ (\CRHF(h_1\|...\|h_k) \mapsto h_1\|...\|h_k) \}$
                    \Comment{Record image/preimage pair for $\crhfH$}
            \EndIf
            \State\Return $\OURsavidSUPER{\cryptoSK_i, S_i}(m)$
                \Comment{Execute $P_i$ on input $m$ and state $\cryptoSK_i, S_i$, and return output to $\CA$}
        \EndFunction
        \State $\left(P, C, \left(\CO_i^{\mathrm{node}}(.)\right)_{i\in\CC}\right)
            \gets \CA^{\CO^{\mathrm{corrupt}}(.), \CO^{\mathrm{interact}}(.)}_{\gameAv'}(\cryptoPP)$
        \Function{$\tilde\CO_i^{\mathrm{node}}$}{$m$}
            \State $r \gets \CO_i^{\mathrm{node}}(m)$
            \If{$r$ parses as $(i, (h_1, ..., h_k), \Bc_i)$}
                \State $\CH \gets \CH \cup \{ (\CRHF(h_1\|...\|h_k) \mapsto h_1\|...\|h_k) \}$
                    \Comment{Record image/preimage pair for $\crhfH$}
            \EndIf
            \State\Return $r$
        \EndFunction
        \State $\savidRetrieve^{P_1,...,P_n}\left[\tilde\CO_i^{\mathrm{node}}(.)/\Call{Query}{i, .}\right]_{i\in\CC}(P, C)$
            \Comment{During retrieval, interact with corrupted nodes through wrapped oracles}
        \If{$\exists x, x': (C \mapsto x) \in \CH \land (C \mapsto x') \in \CH \land x \neq x'$}
            \State \Return $(x, x')$
                \Comment{Collision in $\crhfH$}
        \Else
            \State \Abort
                \Comment{No collision identified}
        \EndIf
    \end{algorithmic}
\end{algorithm}

\begin{lemma}
\label{lem:availability}
If $\crhf$ is collision resistant,
$\sig$ is secure against existential forgery,
and $t < n/2$,
then $\OURsavid$ provides availability, \ie, for all PPT adversaries $\CA_{\gameAv}$ there exists a negligible function $\negl(.)$ such that
\begin{IEEEeqnarray}{C}
\Prob{\gameAv_{\OURsavid, \CA_{\gameAv}}(\lambda, t) = \TRUE} \leq \negl(\lambda).
\end{IEEEeqnarray}
\end{lemma}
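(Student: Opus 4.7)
The plan is to prove availability by a double reduction: if a PPT adversary $\CA_{\gameAv}$ wins the availability game with non-negligible probability, then one can transform it into either an existential forger against $\sig$ or a collision finder against $\crhfH$. The two reductions are instantiated in Algs.~\ref{alg:reduction-availability-sig-forgery} and~\ref{alg:reduction-availability-crhf}.

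First, I analyze the structure of a winning execution. Since $\savidVerify(P, C) = \TRUE$, the certificate $P$ contains valid signatures on $(\savidLabelStored, C)$ from $q = n - t$ unique storage nodes, of which at least $q - t \geq k$ are honest (using $k = n - 2t$ and $t < n/2$). Let $E_{\mathrm{forge}}$ be the event that some honest node $i$ has a signature attributed to it in $P$ without $i$ ever having executed Algorithm~\ref{alg:savid-disperse} l.~\ref{loc:savid-disperse-receipt} for $(\savidLabelStored, C)$. Absent $E_{\mathrm{forge}}$, each honest signer $i$ in $P$ has previously stored a tuple $((h^{(i)}_1, \ldots, h^{(i)}_k), \Bc_i)$ with $\CRHF(h^{(i)}_1 \| \cdots \| h^{(i)}_k) = C$ and whose chunk passed the homomorphic consistency check at l.~\ref{loc:savid-disperse-consistency-check}. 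Let $E_{\mathrm{coll}}$ be the event that two such honest signers stored tuples with distinct commitment vectors; by construction both then hash under $\crhfH$ to $C$, exhibiting a collision.

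Next, I argue that $E_{\mathrm{forge}} \cup E_{\mathrm{coll}}$ is essentially necessary for $\CA_{\gameAv}$ to win. Absent both events, the at least $k$ honest signers in $P$ all hold the same commitment vector $(h_1, \ldots, h_k)$ with $\CRHF(h_1 \| \cdots \| h_k) = C$, and each chunk $\Bc_i$ satisfies $[\codeEncode(h_1, \ldots, h_k)]_i = \VC(\Bc_i)$. During retrieval these honest nodes reply truthfully; the wait at Algorithm~\ref{alg:savid-retrieve} l.~\ref{loc:savid-retrieve-wait} terminates (only the correct tuple hashes to $C$, by collision resistance of $\crhfH$), and the wait at l.~\ref{loc:savid-retrieve-wait2} eventually collects $k$ consistent chunks. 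The MDS property of $\code$ then guarantees that the decoding step succeeds, and the linear-homomorphism calculation reproduced in the proof sketch above yields $\VC^{\otimes k}(\BU) = (h_1, \ldots, h_k)$, so $\savidCommit(\hat B) = C$, contradicting the winning condition $\savidCommit(\hat B) \neq C$.

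Finally, I bound the winning probability by a union bound, $\Prob{\gameAv_{\OURsavid, \CA_{\gameAv}}(\lambda, t) = \TRUE} \leq \Prob{E_{\mathrm{forge}}} + \Prob{E_{\mathrm{coll}}}$, and reduce each term separately. The forgery reduction in Alg.~\ref{alg:reduction-availability-sig-forgery} guesses an honest target node $I \in [n]$ uniformly at random and emulates $P_I$'s signing via its external $\gameEF$ signing oracle; conditioned on $E_{\mathrm{forge}}$ occurring at some honest node $i^\star$, the guess $I = i^\star$ is correct with probability at least $1/n$, so $\Prob{E_{\mathrm{forge}}} \leq n \cdot \Prob{\gameEF_{\sig,\CA_{\gameEF \leftarrow \gameAv}}(\lambda) = \TRUE}$, which is negligible by assumption. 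The collision reduction in Alg.~\ref{alg:reduction-availability-crhf} logs every preimage of $\crhfH$ it observes during the simulation, both upon $\savidLabelDisperse$ messages delivered to honest nodes and upon retrieval responses routed through the corrupted-node oracles; conditioned on $E_{\mathrm{coll}}$, two distinct logged preimages collide at $C$, yielding $\Prob{E_{\mathrm{coll}}} \leq \Prob{\gameCF_{\crhf,\CA_{\gameCF \leftarrow \gameAv}}(\lambda) = \TRUE}$, which is negligible by assumption. The main obstacle will be arguing completeness of this log across the adaptive-corruption timeline: every preimage underlying an honest signature must pass through the simulator before the signing node is potentially corrupted, and every preimage underlying a corrupted node's retrieval response must be intercepted by the wrapper oracle $\tilde\CO_i^{\mathrm{node}}$. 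Once completeness is established, the two negligible bounds combine to show that $\Prob{\gameAv_{\OURsavid, \CA_{\gameAv}}(\lambda, t) = \TRUE}$ is negligible.
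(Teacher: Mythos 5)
Your high-level strategy — reduce a winning availability adversary to either a signature forger or a hash-collision finder — is the paper's strategy, but your event decomposition has a genuine gap. You define $E_{\mathrm{coll}}$ narrowly, as the event that two honest signers stored tuples with \emph{distinct} commitment vectors at dispersal time, and then invoke the union bound $\Prob{\text{win}} \leq \Prob{E_{\mathrm{forge}}} + \Prob{E_{\mathrm{coll}}}$. That inequality does not hold. Absent both events there is a third way for the adversary to win: at retrieval time a corrupted node can send the client a \emph{fake} commitment vector $(h'_1,\ldots,h'_k) \neq (h_1,\ldots,h_k)$ that also hashes to $C$. The client may accept it at Algorithm~\ref{alg:savid-retrieve} line~\ref{loc:savid-retrieve-wait}, after which every honest chunk fails the consistency check against $\codeEncode(h'_1,\ldots,h'_k)$ and line~\ref{loc:savid-retrieve-wait2} never collects $k$ of them. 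Your parenthetical ``only the correct tuple hashes to $C$, by collision resistance of $\crhfH$'' is invoked precisely to dismiss this scenario, but it is circular: collision resistance is the \emph{conclusion} of the reduction, not a fact you may assert deterministically inside the implication ``win $\Rightarrow E_{\mathrm{forge}} \cup E_{\mathrm{coll}}$.'' The fix is to broaden $E_{\mathrm{coll}}$ to ``two distinct $\crhfH$-preimages of $C$ appear anywhere in the execution,'' which is precisely what the wrapper oracles $\tilde\CO_i^{\mathrm{node}}$ in Alg.~\ref{alg:reduction-availability-crhf} are built to log; this matches the paper, which works directly with the event that $\CA_{\gameCF\leftarrow\gameAv'}$ wins, covering dispersal-time and retrieval-time collisions by construction.

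A smaller structural divergence is also worth noting. The paper first hops to a modified game $\gameAv'$ (Alg.~\ref{alg:game-savid-availability-modified}) in which a target index $I$ is sampled up front and corrupting $I$ aborts, losing a factor of $2$ since $|\CC|\leq t<n/2$; it then reduces $\gameAv'$ to both forgery and collision, paying a factor of $n$ on each term. You propose reducing $\gameAv$ directly, with the uniform guess of $I$ buried inside the forgery reduction only. That is a legitimate and marginally tighter accounting, but then Algs.~\ref{alg:reduction-availability-sig-forgery} and~\ref{alg:reduction-availability-crhf} are not quite the reductions you are running — both are stated against $\gameAv'$ adversaries and both embed the $I$-abort logic, which is unnecessary for your collision reduction — and you owe the faithful-simulation argument you gesture at: conditioned on $E_{\mathrm{forge}}$ occurring at honest $i^\star$, the guess $I=i^\star$ never triggers an abort because $i^\star$ is, by definition of $E_{\mathrm{forge}}$, never corrupted, so the adversary's view inside the reduction is identical to its view in the real game.
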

\begin{proof}

First, we modify the availability game $\gameAv$
(Alg.~\ref{alg:game-savid-availability})
to obtain $\gameAv'$
(Alg.~\ref{alg:game-savid-availability-modified})
in which initially the index $I$ of a storage node is sampled
uniformly at random,
and subsequently the game is aborted if the adversary
$\CA_{\gameAv'}$ attempts to corrupt $I$.
This modification will subsequently 
streamline the reduction of availability of $\OURsavid$
to security of $\sig$ against existential forgery.

We reduce availability of $\OURsavid$
to availability$^\prime$ of $\OURsavid$, \ie,
if
for all PPT $\CA_{\gameAv'}$
there exists $\negl(.)$ such that
\begin{IEEEeqnarray}{C}
\Prob{\gameAv'_{\OURsavid, \CA_{\gameAv'}}(\lambda, t) = \TRUE} \leq \negl(\lambda),
\end{IEEEeqnarray}
then
for all PPT $\CA_{\gameAv}$
there exists $\negl(.)$ such that
\begin{IEEEeqnarray}{C}
\Prob{\gameAv_{\OURsavid, \CA_{\gameAv}}(\lambda, t) = \TRUE} \leq \negl(\lambda).
\end{IEEEeqnarray}
To this end, pick any $\gameAv$ adversary $\CA_{\gameAv}$.
Note that
$\CA_{\gameAv'} \triangleq \CA_{\gameAv}$ is an
$\gameAv'$ adversary.
Define the events:
\newcommand{\EVENTua}[0]{\ensuremath{E_{\mathrm{A}}}}
\newcommand{\EVENTuap}[0]{\ensuremath{E_{\mathrm{A'}}}}
\begin{IEEEeqnarray}{rCl}
    \EVENTua &\triangleq& \{ \gameAv_{\OURsavid,\CA_{\gameAv}}(\lambda, t) = \TRUE \}   \\
    \EVENTuap &\triangleq& \{ \gameAv'_{\OURsavid,\CA_{\gameAv'}}(\lambda, t) = \TRUE \}
\end{IEEEeqnarray}
Obviously, $\EVENTua \land \{ I \not\in \CC \} \subseteq \EVENTuap$. Furthermore, $\EVENTua$ implies $|\CC| \leq t < n/2$, so that $\CProb{I \not\in \CC}{\EVENTua} \geq 1/2$.
Thus, by availability$^\prime$ of $\OURsavid$,
\begin{IEEEeqnarray}{rCl}
    \Prob{\EVENTuap}
    &\geq& \Prob{\EVENTua \land \{ I \not\in \CC \}}   \\
    &\geq& \CProb{\{ I \not\in \CC \}}{\EVENTua} \Prob{\EVENTua}   \geq \frac{1}{2} \Prob{\EVENTua}   \\
    \Prob{\EVENTua} &\leq& 2 \Prob{\EVENTuap} \leq \negl(\lambda).
\end{IEEEeqnarray}

We proceed with the reduction of
availability$^\prime$ of $\OURsavid$
to collision resistance of $\crhf$
and security of $\sig$ against existential forgery.
Let $\CA_{\gameAv'}$ be an arbitrary PPT $\gameAv'$ adversary.
We construct from it the adversaries
$\CA_{\gameEF\leftarrow\gameAv'}$ for the $\gameEF$ and
$\CA_{\gameCF\leftarrow\gameAv'}$ for the $\gameCF$
as detailed in Algs.~\ref{alg:reduction-availability-sig-forgery} and~\ref{alg:reduction-availability-crhf}, respectively.
$\CA_{\gameEF\leftarrow\gameAv'}$ emulates the $\gameAv'$ challenger
(Alg.~\ref{alg:game-savid-availability-modified}), except it
does not generate a signature public/secret key pair for node $I$, but instead uses
the input challenge $\cryptoPK$, and
it
attempts to forge a signature for $I$.
It uses the signature oracle
$\CO^{\mathrm{sign}}(.)$ provided in the $\gameEF$
to produce signatures for node $I$ whenever 
$\OURsavid$ requires to do so.
$\CA_{\gameCF\leftarrow\gameAv'}$ emulates the $\gameAv'$ challenger
(Alg.~\ref{alg:game-savid-availability-modified}), except it uses the input challenge $s$
for $\crhf$'s key in the public parameters of $\OURsavid$.
Throughout the protocol execution, for both
dispersal and retrieval operations,
the reduction adversary keeps track of any $(h_1, ..., h_k)$ that may
present colliding inputs
for $\CRHF$.
Clearly, the adversaries $\CA_{\gameEF\leftarrow\gameAv'}$ and $\CA_{\gameCF\leftarrow\gameAv'}$
run in time polynomial in the security parameter $\lambda$.
Furthermore, the input $\cryptoPP$ of $\CA_{\gameAv'}$
and its interactions through the oracles
$\CO^{\mathrm{corrupt}}(.)$ and $\CO^{\mathrm{interact}}(.)$
are distributed identically
when run by the challenger
of $\gameAv'$
and
when run as a subroutine of $\CA_{\gameEF\leftarrow\gameAv'}$ or $\CA_{\gameCF\leftarrow\gameAv'}$ invoked by the challenger
of $\gameEF$ or $\gameCF$, respectively.

For the subsequent arguments we define the following events:
\newcommand{\EVENTef}[0]{\ensuremath{E_{\mathrm{E}}}}
\newcommand{\EVENTcf}[0]{\ensuremath{E_{\mathrm{C}}}}
\newcommand{\EVENT}[0]{\ensuremath{E}}
\begin{IEEEeqnarray}{rCl}
    \EVENTef &\triangleq& \{ \gameEF_{\sig,\CA_{\gameEF\leftarrow\gameAv'}}(\lambda) = \TRUE \}   \\
    \EVENTcf &\triangleq& \{ \gameCF_{\crhf,\CA_{\gameCF\leftarrow\gameAv'}}(\lambda) = \TRUE \}   \\
    \EVENT &\triangleq& \{ (\exists \sigma: (I\mapsto\sigma) \in P) \land S_I[C] = \emptyset \}
\end{IEEEeqnarray}

The following facts
will
be useful.
Observe that if 
$\EVENT \land \EVENTuap$ holds, then $\CA_{\gameEF\leftarrow\gameAv'}$
wins the $\gameEF$, \ie, $\EVENTef$ holds. So $\EVENT \land \EVENTuap \subseteq \EVENTef$.
Thus, $\Prob{\EVENTuap \land \lnot\EVENTcf \land \lnot\EVENTef \land \EVENT} = 0$.
Inverting the implication, $\lnot\EVENTef \subseteq \lnot\EVENT \lor \lnot\EVENTuap$.
Thus, $\Prob{\EVENTuap \land \lnot\EVENTcf \land \lnot\EVENTef \land \lnot\EVENT} \leq \Prob{\EVENTuap \land \lnot\EVENTcf \land \lnot\EVENT}$,
where we have used a union bound and $\EVENTuap \land \lnot\EVENTuap = \emptyset$.

Finally, suppose $\CA_{\gameAv'}$ as a subroutine of $\CA_{\gameEF\leftarrow\gameAv'}$
behaves such that it would win the corresponding $\gameAv'$.
Furthermore, suppose no collision is identified by $\CA_{\gameCF\leftarrow\gameAv'}$,
so $\lnot\EVENTcf$.
Then, given that $|\CC| \leq t < q$, $\savidVerify(P, C) = \TRUE$, the only
way in which $\savidCommit(\hat B) \neq \CC$ could be true is if
there is at least one node that is part of $P$, has not been corrupted,
and has not previously stored a chunk associated with $C$.
This is because, as was argued earlier in the proof sketch,
if the retrieving client receives chunks from at least $k$
honest storage nodes, it decodes a block that matches
the expected commitment $C$. Hence, it must be the case that
the client does not receive sufficiently many valid chunks. But since
$P$ contains $q > t$ valid signatures, but at most $t \geq |\CC|$
storage nodes are corrupted, and $k \leq (q-t)$ by design,
there must be an honest storage node
whose signature on $(\savidLabelStored, C)$ is in $P$,
yet the node does not respond to the retrieving client's query
because it has never stored a chunk associated with $C$
(and hence not signed $(\savidLabelStored, C)$).
Thus, $\CProb{\EVENT}{\EVENTuap \land \lnot\EVENTcf} \geq \frac{1}{n}$.

\newcommand{\ERR}[0]{\ensuremath{x}}
Now we can bound, with $\ERR \triangleq \Prob{\EVENTcf} + \Prob{\EVENTef}$,
\begin{IEEEeqnarray}{rCl}
    \Prob{\EVENTuap}   
    &\eqA& \Prob{\EVENTuap \land \EVENTcf} + \Prob{\EVENTuap \land \lnot\EVENTcf}   \\
    &\leqB& \Prob{\EVENTcf} + \Prob{\EVENTef} + \Prob{\EVENTuap \land \lnot\EVENTcf \land \lnot\EVENTef}   \\
    &\leqC& \ERR + \Prob{\EVENTuap \land \lnot\EVENTcf \land \lnot\EVENTef \land \lnot\EVENT}   \IEEEeqnarraynumspace\\
    &\leqD& \ERR + \Prob{\EVENTuap \land \lnot\EVENTcf \land \lnot\EVENT}   \IEEEeqnarraynumspace\\
    &=& \ERR + \CProb{\lnot\EVENT}{\EVENTuap \land \lnot\EVENTcf} \Prob{\EVENTuap \land \lnot\EVENTcf}   \IEEEeqnarraynumspace\\
    &\leqE& \ERR + \frac{n-1}{n} \Prob{\EVENTuap}   \IEEEeqnarraynumspace\\
    \Prob{\EVENTuap}
    &\leq& n \ERR = n \Prob{\EVENTcf} + n \Prob{\EVENTef}
\end{IEEEeqnarray}
where
(a) uses the law of total probability (TP) to introduce $\EVENTcf$;
(b) uses TP to introduce $\EVENTef$, $\EVENTuap \land \EVENTcf \subseteq \EVENTcf$, $\EVENTuap \land \lnot\EVENTcf \land \EVENTef \subseteq \EVENTef$;
(c) uses TP to introduce $\EVENT$,
$\Prob{\EVENTuap \land \lnot\EVENTcf \land \lnot\EVENTef \land \EVENT} = 0$;
(d) uses
$\Prob{\EVENTuap \land \lnot\EVENTcf \land \lnot\EVENTef \land \lnot\EVENT} \leq \Prob{\EVENTuap \land \lnot\EVENTcf \land \lnot\EVENT}$;
(e) uses
$\EVENTuap \land \lnot\EVENTcf \subseteq \EVENTuap$,
$\CProb{\EVENT}{\EVENTuap \land \lnot\EVENTcf} \geq \frac{1}{n}$.

Since by assumption $\crhf$ is collision resistant
and $\sig$ is secure against existential forgery,
there exist $\negl_1(.), \negl_2(.)$ such that
$\Prob{\EVENTcf} \leq \negl_1(\lambda)$
and
$\Prob{\EVENTef} \leq \negl_2(\lambda)$. Furthermore, $n$ is a constant independent of the security parameter $\lambda$.
Thus,
\begin{IEEEeqnarray}{rCl}
    \Prob{\EVENTuap} \leq n \, \negl_1(\lambda) + n \, \negl_2(\lambda) \leq \negl(\lambda).
\end{IEEEeqnarray}
Hence, $\OURsavid$ provides availability$^\prime$,
to which availability of $\OURsavid$ was reduced
in the first part of the proof.
\end{proof}

\section{Privacy}
\label{sec:privacy}
\makeatletter
\renewcommand*\env@matrix[1][*\c@MaxMatrixCols c]{%
  \hskip -\arraycolsep
  \let\@ifnextchar\new@ifnextchar
  \array{#1}}
\makeatother

\DeclareRobustCommand{\svdots}{%
  \vcenter{%
    \offinterlineskip
    \hbox{.}
    \vskip0.25\normalbaselineskip
    \hbox{.}
    \vskip0.25\normalbaselineskip
    \hbox{.}%
  }%
}
\DeclareRobustCommand{\sddots}{%
  \vcenter{%
    \offinterlineskip
    \hbox{.\hspace{0.5\normalbaselineskip}}
    \vskip0.25\normalbaselineskip
    \hbox{\hspace{0.25\normalbaselineskip}.\hspace{0.25\normalbaselineskip}}
    \vskip0.25\normalbaselineskip
    \hbox{\hspace{0.5\normalbaselineskip}.}%
  }%
}
\DeclareRobustCommand{\scdots}{%
  \vcenter{%
    \offinterlineskip
    \hbox{.\,.\,.}%
  }%
}

Our Semi-AVID-PR scheme $\OURsavid$ can be extended to hide the dispersed data
from non-colluding honest-but-curious storage nodes,
\ie, formally,
for each storage node, the distribution of 
the
dispersed information
is independent of
the data received by the storage node.
For this purpose, with a slight abuse of notation,
let $\tilde \BU$ denote the $(L-1) \times (k-1)$ matrix of
information to be dispersed
(with columns $\tilde \Bu_i$).
The dispersing client
augments it first with a blinding column
$\Bb \drawrandom \IZ_p^{L-1}$
to the right of $\tilde \BU$
and then with a blinding row
$\Bs \drawrandom \IZ_p^k$
to the bottom of both $\tilde \BU$ and $\Bb$,
to obtain the $L \times k$ matrix 
$\BU$,
\begin{IEEEeqnarray}{C}
    \BU \triangleq \begin{bmatrix}[cc]
        \tilde \BU & \Bb \\
        \multicolumn{2}{c}{-\,\Bs^\top-}
    \end{bmatrix}.
\end{IEEEeqnarray}
The coded matrix $\BC$ (with columns $\Bc_i$) and
the column commitments $(h_1,...,h_k)$
continue to be computed as detailed in
Figure~\ref{fig:savidpr-disperse}
and
Section~\ref{sec:protocol}.
Thus,
storage node $m$ receives $(\Bc_m, h_1, ..., h_k)$ as part of the protocol
(see Alg.~\ref{alg:savid-disperse}).
\begin{theorem}
\label{thm:privacy}
The distribution of $(\Bc_m, h_1, ..., h_k)$
induced by the randomness in the blinding $\Bb$ and 
$\Bs$
is independent of $\tilde \BU$, so that storage node $m$
learns nothing about the dispersed information.
\end{theorem}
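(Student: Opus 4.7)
My plan is to show that, for each fixed $\tilde\BU$, the joint distribution of $(\Bc_m, h_1, \dots, h_k)$ induced by uniformly random blinding $(\Bb, \Bs)$ coincides with the uniform distribution on the variety
\begin{IEEEeqnarray}{C}
\mathcal V \triangleq \left\{ (c, h_1, \dots, h_k) \in \IZ_p^L \times \IG^k \,\middle|\, \VC(c) = \prod_{i=1}^k h_i^{[\Bg_m]_i} \right\},
\end{IEEEeqnarray}
whose specification depends only on public parameters and not on $\tilde\BU$. Independence from $\tilde\BU$ then follows immediately.

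First I would reduce to showing that $(\Bc_m, h_1, \dots, h_{k-1})$ alone is uniform on $\IZ_p^L \times \IG^{k-1}$ independently of $\tilde\BU$. By the linear-homomorphism of $\slvc$ (as used in Eq.~\eqref{eq:protocol-homomorphic-property}), every realization of $(\Bc_m, h_1, \dots, h_k)$ satisfies the constraint defining $\mathcal V$. Since the Reed-Solomon generator column has nonzero last entry $[\Bg_m]_k = \alpha_m^{k-1}$ (for nonzero evaluation points), $\mathcal V$ is the graph of a deterministic, $\tilde\BU$-free function that recovers $h_k$ from $(\Bc_m, h_1, \dots, h_{k-1})$, so the reduced claim implies the theorem.

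To prove the reduced claim, I would exploit the fact that the blinding randomness splits into three mutually independent blocks with well-controlled footprints on the reduced output. The entries $b_1, \dots, b_{L-1}$ of $\Bb$ sit at the positions $(\ell, k)$ of $\BU$ for $\ell < L$, and within the reduced output they appear only through $[\Bc_m]_\ell = \sum_{i<k} \tilde u_{\ell, i} [\Bg_m]_i + b_\ell [\Bg_m]_k$; using $[\Bg_m]_k \neq 0$, the $\tilde\BU$-dependent shift is absorbed and $[\Bc_m]_1, \dots, [\Bc_m]_{L-1}$ become jointly uniform, independent of the $\Bs$-dependent quantities $(h_1, \dots, h_{k-1}, [\Bc_m]_L)$. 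The entries $[\Bs]_1, \dots, [\Bs]_{k-1}$ sit at positions $(L, i)$ of $\BU$ for $i<k$; writing $h_i = g^{\alpha_i + w_L [\Bs]_i}$, where $\alpha_i$ depends on $\tilde\BU$ and $w_L$ is the Lagrange weight of the $L$-th evaluation point of $\VC$ at the KZG secret $r$ (nonzero with all but negligible probability over the setup), each $h_i$ with $i<k$ is rendered uniform in $\IG$ independently of $\tilde\BU$. Finally, conditional on $[\Bs]_1, \dots, [\Bs]_{k-1}$ (and thus on $h_1, \dots, h_{k-1}$), the still-free entry $[\Bs]_k$ enters $[\Bc_m]_L = \sum_i [\Bs]_i [\Bg_m]_i$ with coefficient $[\Bg_m]_k \neq 0$, making $[\Bc_m]_L$ uniform on $\IZ_p$ independent of everything above.

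The main delicate point will be the joint-independence accounting in the last step: although $[\Bs]_1, \dots, [\Bs]_{k-1}$ appear in both $h_i$ and $[\Bc_m]_L$, the extra freedom of $[\Bs]_k$ is exactly what is needed to decouple $[\Bc_m]_L$ from $(h_1, \dots, h_{k-1})$ and to erase any residual $\tilde\BU$-dependence, so the three blocks can indeed be sequentially absorbed without leakage. I would also state the genericity assumptions $[\Bg_m]_k \neq 0$ and $w_L \neq 0$ explicitly, since the blinding argument fails if either coefficient vanishes; both hold for standard Reed-Solomon and KZG parameter choices.
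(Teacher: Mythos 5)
Your proposal is correct and takes essentially the same approach as the paper: you reduce to showing that the blinding map from $(\Bb,\Bs)$ to $(\Bc_m, h_1,\dots,h_{k-1})$ is an affine bijection (with $h_k$ determined by the homomorphic relation), which is exactly the paper's observation that the last row of its system matrix is redundant and the remaining $\BM'$ is full rank; your sequential block-triangular decomposition is just a more explicit way of exhibiting that full-rankness. Your extra carefulness in flagging the genericity conditions $[\Bg_m]_k\neq 0$ and $w_L\neq 0$ is a small improvement over the paper, which states full-rankness without isolating these coefficients.
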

\begin{proof}
Assume that storage node $m$ could even compute
$\log_g(.)$ in $\IG$, and hence knows the secret $r$ sampled during trusted
setup of the KZG polynomial commitment scheme
(see Section~\ref{sec:preliminaries-slvc}),
as well as $\log_g(h_i)$ for the column commitments $h_i$.
Furthermore,
assume a Reed-Solomon code (see Section~\ref{sec:preliminaries-rs-codes})
is used as part of $\OURsavid$
so that the column $\Bg_{\mathrm{RS},m} = (\alpha_m^0, ..., \alpha_m^{k-1})$
corresponds to storage node $m$
in the code's generator matrix $\BG_{\mathrm{RS}}$.
Then, the data obtained by node $m$ is related to the unknowns
by the 
following
equations:
\begin{IEEEeqnarray}{C}
    \tiny
    \setlength{\arraycolsep}{1pt}\setlength{\extrarowheight}{1.5pt}\begin{bmatrix*}[l]
        [\Bc_m]_1 \\
        \qquad\svdots \\
        [\Bc_m]_{L-1} \\[1pt]
        \hline %
        [\Bc_m]_L \\[1pt]
        \hline %
        \log_g(h_1) \\
        \qquad\svdots \\
        \log_g(h_{k-1}) \\[2pt]
        \hline %
        \log_g(h_k)
    \end{bmatrix*}
    =
    \underbrace{
    \setlength{\arraycolsep}{1pt}\setlength{\extrarowheight}{2.5pt}\begin{bmatrix}[ccccccc|ccc|ccc|c]
        \alpha_m^0 & ... & \alpha_m^{k-2} &&&&& \alpha_m^{k-1} & 0 & ... &&&& \\
        &&& \sddots &&&&& \sddots &&&&& \\
        &&&& \alpha_m^0 & ... & \alpha_m^{k-2} & ... & 0 & \alpha_m^{k-1} &&&& \\[1pt]
        \hline %
        &&&&&&&&&& \alpha_m^0 & ... & \alpha_m^{k-2} & \alpha_m^{k-1} \\[1pt]
        \hline %
        r^0 & 0 & ... && r^{L-2} & 0 & ... & && & r^{L-1} & 0 & ... & \\
        & \sddots & & \sddots & & \sddots & & && & & \sddots & & \\
        ... & 0 & r^0 & & ... & 0 & r^{L-2} & && & ... & 0 & r^{L-1} & \\
        \hline %
        && & & && & r^0 & ... & r^{L-2} & && & r^{L-1}
    \end{bmatrix}
    }_{\triangleq \BM \in \IZ_p^{(L+k) \times (L k + (L-1) + (k-1) + 1)}}
    \setlength{\arraycolsep}{1pt}\setlength{\extrarowheight}{2pt}\begin{bmatrix*}[l]
        [\tilde \Bu_1]_1 \\
        \qquad\svdots \\
        [\tilde \Bu_{k-1}]_1 \\
        \qquad\svdots \\
        [\tilde \Bu_1]_{L-1} \\
        \qquad\svdots \\
        [\tilde \Bu_{k-1}]_{L-1} \\[1pt]
        \hline %
        \qquad\Bb \\
        \hline %
        \qquad\Bs
    \end{bmatrix*}
    \quad
    \IEEEeqnarraynumspace
\end{IEEEeqnarray}

Denote by $[\BM]_i^\top$ the $i$-th row of $\BM$.
Observe that
\begin{IEEEeqnarray}{C}
    [\BM]_{L+k}^\top
    =
    \sum_{i=1}^{L} r^{i-1} \alpha_m^{-(k-1)} [\BM]_{i}^\top
    -
    \sum_{i=1}^{k-1} \alpha_m^{-(k-i)} [\BM]_{L+i}^\top.
\end{IEEEeqnarray}
Thus, the last 
equation of the system
is redundant.
Striking 
it,
\begin{IEEEeqnarray}{rCl}
    \tiny
    \setlength{\arraycolsep}{1pt}\setlength{\extrarowheight}{1.5pt}\begin{bmatrix*}[l]
        [\Bc_m]_1 \\
        \qquad\svdots \\
        [\Bc_m]_{L-1} \\[1pt]
        \hline %
        [\Bc_m]_L \\[2pt]
        \hline %
        \log_g(h_1) \\
        \qquad\svdots \\
        \log_g(h_{k-1})
    \end{bmatrix*}
    &=&
    \tiny
    \setlength{\arraycolsep}{1pt}\setlength{\extrarowheight}{2.5pt}\begin{bmatrix}[ccccccc]
        \alpha_m^0 & \scdots & \alpha_m^{k-2} &&&& \\
        &&& \sddots &&& \\
        &&&& \alpha_m^0 & \scdots & \alpha_m^{k-2} \\[1pt]
        \hline %
        &&&&&& \\[1pt]
        \hline %
        r^0 & 0 & \scdots && r^{L-2} & 0 & \scdots \\
        & \sddots & & \sddots & & \sddots & \\
        \scdots & 0 & r^0 & & \scdots & 0 & r^{L-2}
    \end{bmatrix}
    \setlength{\arraycolsep}{1pt}\setlength{\extrarowheight}{1.5pt}\begin{bmatrix*}[l]
        [\tilde \Bu_1]_1 \\
        \qquad\svdots \\
        [\tilde \Bu_{k-1}]_1 \\
        \qquad\svdots \\
        [\tilde \Bu_1]_{L-1} \\
        \qquad\svdots \\
        [\tilde \Bu_{k-1}]_{L-1}
    \end{bmatrix*}
    \nonumber\\
    &+&
    \tiny
    \underbrace{
    \setlength{\arraycolsep}{1pt}\setlength{\extrarowheight}{2.5pt}\begin{bmatrix}[ccc|ccc|c]
        \alpha_m^{k-1} & 0 & \scdots &&&& \\
        & \sddots &&&&& \\
        \scdots & 0 & \alpha_m^{k-1} &&&& \\[1pt]
        \hline %
        &&& \alpha_m^0 & \scdots & \alpha_m^{k-2} & \alpha_m^{k-1} \\[1pt]
        \hline %
        && & r^{L-1} & 0 & \scdots & \\
        && & & \sddots & & \\
        && & \scdots & 0 & r^{L-1} &
    \end{bmatrix}
    }_{\triangleq \BM' \in \IZ_p^{(L + k - 1) \times (L - 1 + k)}}
    \setlength{\arraycolsep}{1pt}\setlength{\extrarowheight}{1.5pt}\begin{bmatrix*}[l]
        [\Bb]_1 \\
        \qquad\svdots \\
        [\Bb]_{L-1} \\[1pt]
        \hline %
        [\Bs]_1 \\
        \qquad\svdots \\
        [\Bs]_{k-1} \\[1pt]
        \hline %
        [\Bs]_k
    \end{bmatrix*}.
    \IEEEeqnarraynumspace
\end{IEEEeqnarray}
Observe that $\BM'$ is full-rank.
Thus, the randomness of $\Bb$ and 
$\Bs$
renders
the distribution of
$(\Bc_m, h_1, ..., h_{k-1})$ 
uniform,
while $h_k$ is a function of $(\Bc_m, h_1, ..., h_{k-1})$,
all independent of the dispersed information $\tilde \BU$.
Thus, as desired,
\begin{IEEEeqnarray}{rCl}
    \CProb{\tilde\BU = \By}{(\Bc_m, h_1, ..., h_k) = \Bx}
    &=& \Prob{\tilde\BU = \By}.   \IEEEeqnarraynumspace
\end{IEEEeqnarray}
\end{proof}

\section{Evaluation}
\label{sec:evaluation}
In this section, we
show that the 
computational cost
required for 
our Semi-AVID-PR
scheme $\OURsavid$ is low,
and the communication and storage requirements in comparison
with AVID \cite{avid}, AVID-FP \cite{avidfp}, AVID-M \cite{avidm}
and ACeD \cite{aced} are among the best-of-class
(tied with AVID-M) and practically low,
while providing superior resilience (up to $t<n/2$ vs. $t<n/3$)
and provable retrievability.

\subsection{Computation}
\label{sec:evaluation-computation}

\input{fig_experiments_steps}

\begin{figure}[t]
    \centering
    \begin{tikzpicture}
        \small
        \begin{axis}[
            semiavidprexperiment1,
            ymode=log, xmode=log,
            height=0.45\linewidth,
            width=0.8\linewidth,
            ymin=0.1,
            xtick={32,64,128,256,512,1024},
            xticklabels={$32$,$64$,$128$,$256$,$512$,$1024$},
            ytick={0.001,0.01,0.1,1,10,100,1000},
            yticklabels={$0.001$,$0.01$,$0.1$,$1$,$10$,$100$,$1000$},
            xlabel={System size $n$},
            ylabel={Runtime [s]},
        ]
        
            \addplot [myparula11,experiment-rate33-n128] table [x=args_n,y=runtime_all_row_encodings_seconds] {figures/experiments3/data-experiments3-rate33-bls12-381.txt};
            \label{leg:experiments3-rate33-n128-disperse-encode}
            \addlegendentry{Disperse-Encode};
        
            \addplot [myparula21,experiment-rate33-n128] table [x=args_n,y=runtime_all_column_commitments_seconds] {figures/experiments3/data-experiments3-rate33-bls12-381.txt};
            \label{leg:experiments3-rate33-n128-disperse-commit}
            \addlegendentry{Disperse-Commit};
        
            \addplot [myparula31,experiment-rate33-n128] table [x=args_n,y=runtime_all_downloaded_chunk_verifications_seconds] {figures/experiments3/data-experiments3-rate33-bls12-381.txt};
            \label{leg:experiments3-rate33-n128-retrieve-verify}
            \addlegendentry{Retrieve-Verify};
        
            \addplot [myparula41,experiment-rate33-n128] table [x=args_n,y=runtime_all_row_decodings_seconds] {figures/experiments3/data-experiments3-rate33-bls12-381.txt};
            \label{leg:experiments3-rate33-n128-retrieve-decode}
            \addlegendentry{Retrieve-Decode};
        
            \addplot [myparula51,experiment-rate33-n128] table [x=args_n,y=runtime_prepare_decoding_seconds] {figures/experiments3/data-experiments3-rate33-bls12-381.txt};
            \label{leg:experiments3-rate33-n128-retrieve-prep-decode}
            \addlegendentry{Retrieve-Prep-Decode};
            
            \legend{};

            \addlegendimage{experiment-rate25-n128,black,mark=none};
            \label{leg:experiments3-rate25}
            \addlegendimage{experiment-rate33-n128,black,mark=none};
            \label{leg:experiments3-rate33}
            \addlegendimage{experiment-rate45-n128,black,mark=none};
            \label{leg:experiments3-rate45}
            
            \addlegendimage{experiment-rate25-n128,black,only marks};
            \label{leg:experiments3-n128}
            \addlegendimage{experiment-rate25-n256,black,only marks};
            \label{leg:experiments3-n256}
            \addlegendimage{experiment-rate25-n1024,black,only marks};
            \label{leg:experiments3-n1024}
            
            \addlegendimage{myparula11,only marks,mark=square*};
            \label{leg:experiments3-disperse-encode}
            \addlegendimage{myparula21,only marks,mark=square*};
            \label{leg:experiments3-disperse-commit}
            \addlegendimage{myparula31,only marks,mark=square*};
            \label{leg:experiments3-retrieve-verify}
            \addlegendimage{myparula41,only marks,mark=square*};
            \label{leg:experiments3-retrieve-decode}
            \addlegendimage{myparula51,only marks,mark=square*};
            \label{leg:experiments3-retrieve-prep-decode}
            
            \legend{};
            
        \end{axis}
    \end{tikzpicture}%
    \vspace{-0.5em}%
    \caption{%
    Single-thread runtime
    of steps of $\OURsavid$
    on AMD Opteron 6378 processor
    for varying system size $n$
    in BLS12-381 curve.
    Fixed code rate $k/n\approx0.33$, and
    file size $\approx22\,\mathrm{MB}$.
    $\savidDisperse$:
    RS encoding
    (\ref{leg:experiments3-disperse-encode}),
    computing vector commitments
    (\ref{leg:experiments3-disperse-commit}).
    $\savidRetrieve$:
    Verifying downloaded chunks
    (\ref{leg:experiments3-retrieve-verify}),
    RS decoding $k \times k$ matrix inversion
    (\ref{leg:experiments3-retrieve-prep-decode}),
    RS decoding matrix-matrix product
    (\ref{leg:experiments3-retrieve-decode}).
    (\cf Figure~\ref{fig:experiments})%
    }
    \label{fig:experiments3}
\end{figure}
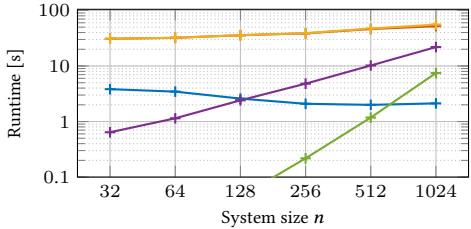

To evaluate the computational requirements
imposed by computation and verification
of vector commitments and encoding and decoding of the erasure-correcting code
during $\savidDisperse$ (\ie, computational burden to the Validium
rollup operator) and $\savidRetrieve$ (\ie, computational burden to the Validium
rollup user, in case of malicious operator)
of our Semi-AVID-PR scheme $\OURsavid$,
we implemented a prototype in the Rust programming language
using libraries from arkworks \cite{arkworks}.
We make the source code available on Github.%
\footnote{%
Source code:
\url{https://github.com/tse-group/semiavidpr-experiments}%
}
We used KZG commitments \cite{kate}
on the BLS12-381 curve \cite{blscurve,bls12381}
as vector commitments,
and Reed-Solomon codes \cite{rs}
over the underlying scalar prime field.
Reed-Solomon (RS) encoding was
implemented using the fast Fourier transform (FFT).
The consistency check of chunks
performed by storage nodes in Alg.~\ref{alg:savid-disperse}
line~\ref{loc:savid-disperse-consistency-check}
was implemented naively,
by computing a vector commitment of the received chunk
and comparing it with the corresponding `encoded' combination
(computed using $k$ exponentiations and multiplications in the group)
of the column commitments $h_1, ..., h_k$.
Note that during dispersal,
these consistency checks are naturally
parallelized across storage nodes,
but have to be computed independently.

For retrieval,
on the other hand, 
the computation of the coded chunks' commitments
(Alg.~\ref{alg:savid-disperse} line~\ref{loc:savid-retrieve-coded-commitments})
was sped up using the FFT.
RS decoding
was naive (generic for any MDS code):
After downloading $k$ valid chunks from distinct storage nodes,
the retrieving client first inverts the corresponding $k \times k$
sub-matrix of the code's generator matrix $\BG$,
and then obtains the matrix $\BU$ of uncoded chunks from
the downloaded sub-matrix of the coded chunks $\BC$
by way of a matrix-matrix product with the inverse.
This way, the cubic complexity (in $k$) of naive matrix inversion via
Gaussian elimination is amortized
over the decoding of $L$ rows (each with quadratic complexity in $k$;
in the regime of interest $k \leq L$).
Note that this naive approach to 
RS decoding
is permissible as
our experiments indicate
that $\savidRetrieve$'s runtime is 
bottlenecked
by another step (verifying downloaded chunks),
and 
during normal operation (when the rollup operator
is honest) only $\savidDisperse$ is invoked.
Note that a systematic erasure-correcting 
code can be used to 
further speed up
decoding
in the realistic scenario where
few storage nodes are corrupted
(as long as the privacy techniques of Section~\ref{sec:privacy} are not used).

Figure~\ref{fig:experiments} shows
the single-thread runtime (ordinate)
on an AMD Opteron 6378 processor
of the different steps of $\OURsavid$
for varying file sizes (abscissa),
system sizes
(columns),
and code rates
(rows).
Figure~\ref{fig:experiments-aggregate} shows
the measurements aggregated on the level of
$\savidDisperse$ and $\savidRetrieve$, respectively.
The plots reveal a minor slowdown
of $\savidDisperse$
with increasing system size and code rate.
The runtime of $\savidDisperse$ is dominated
by computing the vector commitments.
The runtime of $\savidRetrieve$ is dominated
by verifying downloaded chunks.
Recall that both bottlenecking steps have been optimized using the FFT.
Naive RS decoding does not introduce a performance bottleneck, but becomes relevant for large systems ($n=1024$).
Fixing file size to $\approx22\,\mathrm{MB}$
and code rate to $k/n\approx0.33$
while varying system size $n$
(\cf Figure~\ref{fig:experiments3}),
corroborates the earlier observations.

Concretely, the client computation for
dispersing a file of $22\,\mathrm{MB}$ among $256$ storage nodes, up to $85$ of which may be adversarial, requires 
$\approx41\,\mathrm{s}$ of single-thread runtime 
on an AMD Opteron 6378 processor
when using the BLS12-381 curve.
The corresponding retrieval takes
$\approx44\,\mathrm{s}$ of single-thread runtime.
The dispersal throughput of $\approx0.54\,\mathrm{MB/s}$
corresponds to
$\approx2{,}700\,\mathrm{tx/s}$
(assuming $200\,\mathrm{B}$ transaction size).
It should be noted that we report single-thread
runtime on a seven year old processor here.
The workload is embarrassingly parallel and hence
wall-clock time reduces trivially
with an increasing number of parallel workers.
With $16$ threads, the Validium operator can complete dispersal
and Validium users can complete retrieval
in less than $3\,\mathrm{s}$, respectively.

\subsection{Communication \& Storage}
\label{sec:evaluation-communication-storage}

\begin{table}[t]
    \centering
    \caption{Communication and storage required to disperse $22\,\mathrm{MB}$ among $n=1024$ nodes using different solutions;
    resilience and whether provable retrievability is supported.
    (Calculations: Appendix~\ref{sec:appendix-calculations}.)}%
    \vspace{-0.5em}%
    \footnotesize%
    \begin{tabular}{p{0.205\linewidth}rrrc}
        \toprule
        Scheme  & Resilience  & Communication & Storage & Retrievability\\
        \midrule
        Repetition  & $502=0.49n$ & $22.5\,\mathrm{GB}$ & $22.5\,\mathrm{GB}$ & \textcolor{myParula05Green}{\cmark{}} \\
        AVID \cite{avid}         & $338=0.33n$  &   $101\,\mathrm{GB}$ &   $98.3\,\mathrm{MB}$     & \textcolor{myParula05Green}{\cmark{}}      \\
        AVID-FP \cite{avidfp}       & $338=0.33n$  &    $46.1\,\mathrm{GB}$  &    $110\,\mathrm{MB}$   & \textcolor{myParula05Green}{\cmark{}}    \\
        AVID-M \cite{avidm}       & $338=0.33n$    &   $98.7\,\mathrm{MB}$   & $65.1\,\mathrm{MB}$ & \textcolor{myParula07Red}{\xmark{}}  \\
        ACeD \cite{aced}      & $338=0.33n$    &     $585\,\mathrm{MB}$   &  $585\,\mathrm{MB}$     & \textcolor{myParula07Red}{\xmark{}}   \\
        ACeD \cite{aced}     & $502=0.49n$    &    $10.2\,\mathrm{GB}$    &   $10.2\,\mathrm{GB}$    & \textcolor{myParula07Red}{\xmark{}}      \\ 
        This work
        ($\OURsavid$)
        & $338=0.33n$  &     $81.8\,\mathrm{MB}$ &  $81.8\,\mathrm{MB}$    & \textcolor{myParula05Green}{\cmark{}} \\ 
        This work
        ($\OURsavid$)
        & $502=0.49n$    &   $1.13\,\mathrm{GB}$ &    $1.13\,\mathrm{GB}$ & \textcolor{myParula05Green}{\cmark{}} \\
        \bottomrule
    \end{tabular}
    \label{tab:comparison}
\end{table}

Communication and storage required for different data availability solutions
are tabulated for a numerical example in Table~\ref{tab:comparison}.
The calculations are provided in Appendix~\ref{sec:appendix-calculations},
and in the
source code available on Github.%
\footnote{%
Source code:
\url{https://github.com/tse-group/semiavidpr-experiments}%
}
Note the $t<n/2$ resilience upper bound for any scheme
that simultaneously provides \emph{availability}
and \emph{correctness}
(\cf Definition~\ref{def:semiavidpr-security}).
To see this, suppose
the cooperation of $q$ storage nodes is necessary
and sufficient to complete a dispersal.
Correctness requires $q \leq n-t$ (else adversarial storage
nodes can `block' dispersal), and availability requires
$q > t$ (else adversarial storage nodes can `forge' a dispersal).
Combining the two conditions yields $t < n/2$, which is achieved by
the naive repetition (full replication)
scheme, at high communication and storage cost.
Our $\OURsavid$ recovers
the same trade-off when parameterized for resilience close to $n/2$;
but our scheme can also be parameterized for lower resilience, in which case it achieves considerably lower communication and storage,
whereas the repetition scheme does not
allow for such parameterization.
AVID improves over repetition in that each node only needs to store
a chunk rather than the full file. However, nodes still echo chunks to each other,
leading to a lot of communication.
AVID-FP improves in communication because storage nodes only echo fingerprints rather than full chunks.
AVID-M improves over AVID-FP in that it drastically reduces the fingerprint size and hence the communication.
ACeD allows for a trade-off of communication and storage
with adversarial resilience.

In terms of communication and storage,
our Semi-AVID-PR scheme $\OURsavid$ 
(Sec.~\ref{sec:protocol})
is among the best-of-class
(tied with AVID-M),
while providing superior resilience ($t<n/2$ vs. $t<n/3$)
and provable retrievability (the lack thereof limits application
of AVID-M to Validium rollups).
Our Semi-AVID-PR scheme
outperforms ACeD in communication and storage
by at least $7\times$.
The net data throughput of $\approx0.54\,\mathrm{MB/s}$
for $(n, k)=(256,85)$
corresponding to $\approx2{,}700\,\mathrm{tx/s}$
(\cf Section~\ref{sec:evaluation-computation})
entails $\approx1.7\,\mathrm{MB/s}$
communication bandwidth usage
($\approx70\,\mathrm{MB}$ in $\approx41\,\mathrm{s}$),
which is feasible even via consumer-grade Internet connectivity.
Finally, it should be noted that the VID-based schemes
in Table~\ref{tab:comparison} have resilience $t$ at most
$t < n/3$. In that regime, $\OURsavid$ matches or exceeds
the communication- and storage-efficiency of VID-based schemes.
However, like ACeD, $\OURsavid$ also supports higher resilience up to $t < n/2$.
In this regime, the overhead from erasure coding increases,
as for ACeD,
but still outperforms ACeD.

\section{Application to Data Availability Sampling}
\label{sec:data-availability-sampling}

In common blockchain designs every block consists of a meta data header and transaction content.
\emph{Full nodes} download
the full chain and validate all transactions.
However, a resource-limited node 
can instead participate as a \emph{light node}.\footnote{Light nodes also occur in the context of sharding, where each node is assigned to a shard and behaves in-shard
as a full node and out-of-shard
as a light node.}
Then, it only processes block headers.
If a block contained an invalid transaction,
it would be rejected by full nodes
but its header would be accepted
by a light node unable to inspect the block content and verify transaction validity.
To prevent this, full nodes can produce an \emph{invalid transaction fraud proof} \cite{albassam}.
To take full nodes' ability to issue such fraud proofs, a malicious block producer can
withhold parts of the block content.
Full nodes would then 
reject the block until its content 
is
fully available, but light nodes would not notice the missing content.
The absence of an invalid transaction fraud proof can thus mean two things: either
the block is valid,
or full nodes are unable 
to verify the block due to missing data.
To rule out the second possibility,
data availability sampling schemes for light nodes were introduced.

Data availability schemes using Reed-Solomon codes were proposed in \cite{albassam},
where
the block producer encodes the $k$ chunks block content with a $(2k, k)$ Reed-Solomon (RS) code. Light nodes randomly query a few chunks of the encoded block content. The block is accepted only if the queried chunks are available. For a block to be widely accepted by light nodes, most of the light nodes' queried chunks have to be available. Quickly, light nodes' queries cover more than 50\% of coded chunks of the block and any remaining missing chunks can be recovered using the RS code. 
It is therefore no longer possible to trick light nodes into accepting a block while withholding data to prevent invalid transaction fraud proofs.
However, a malicious block producer could invalidly encode the block.
Decoding would then not consistently recover the original chunks' data.
Full nodes can detect invalid encoding and issue a fraud proof for light nodes.
But, the size of such proofs in this scheme
is commensurate to the block content size---defying the idea of light nodes downloading less than the full block.
Subsequent works \cite{albassam,cmt,DBLP:conf/itw/MitraTD20,DBLP:journals/corr/abs-2108-13332} focussed on reducing the fraud proof size,
but drawbacks remain (\eg, complexity, timing assumptions).

A different approach is to make it impossible for block producers to invalidly encode data. Such schemes can be achieved using polynomial commitments \cite{kate},
where the block is interpreted as a low-degree polynomial,
the commitment to which is included in the block header and gets opened
at locations randomly sampled by light clients.
This effectively enforces valid RS encoding.
Schemes of this flavor however require to compute an evaluation witness for each query,
which despite recent algorithmic improvements is still computationally heavy \cite{fk,alin,tab}.

Algorithm $\OURsavid.\savidCommit$ of our Semi-AVID-PR scheme
is suitable for the application at hand.
It can commit to a block $B$ such that:
(a) The commitment can be opened to \emph{chunks}, but only of a valid RS encoding of $B$.
Computing and verifying these openings is practically efficient.
\emph{This enables data availability sampling.}
(b) The commitment can be opened to \emph{entries} of the original block $B$.
The openings are short and can be produced and verified practically efficiently.
\emph{This enables the invalid transaction fraud proofs of \cite{albassam}.}

Let $\BU \equiv \operatorname{AsMatrix}_{L \times k}(B)$ with columns $\Bu_1, ..., \Bu_k$.
An opening $(\Bc_i, i, (h_1, ..., h_k))$ to chunk $i$ of an RS encoding of $B$ is computed as:
\begin{IEEEeqnarray}{C}
    \Bc_i \gets [\codeEncode^{\otimes L}(\BU)]_i
    \quad
    (h_1, ..., h_k) \gets \VC^{\otimes k}(\BU)
\end{IEEEeqnarray}
An opening $(\Bc, i, (h_1, ..., h_k))$ to chunk $i$ of an RS encoding of a block with commitment $C$ is verified as:
\begin{IEEEeqnarray}{C}
    C \overset{?}{=} \CRHF(h_1\|...\|h_k)
    \land
    [\codeEncode(h_1, ..., h_k)]_i \overset{?}{=} \VC(\Bc)
    \IEEEeqnarraynumspace
\end{IEEEeqnarray}
An opening $([\Bu_j]_i, i, j, (h_1, ..., h_k), w)$
to the entry at $(i,j)$ of the matrix $\BU$ corresponding to $B$ is
computed as:
\begin{IEEEeqnarray}{C}
    (h_1, ..., h_k) \gets \VC^{\otimes k}(\BU)
    \quad
    w \gets \mathsf{VC.OpenEntry}(\cryptoPP, \Bu_j, i)
    \IEEEeqnarraynumspace
\end{IEEEeqnarray}
An opening $(y, i, j, (h_1, ..., h_k), w)$ 
to entry $[\Bu_j]_i$ of
a block with commitment $C$ is verified as:
\begin{IEEEeqnarray}{C}
    C \overset{?}{=} \CRHF(h_1\|...\|h_k)
    \land
    \mathsf{VC.VerifyEntry}(\cryptoPP, h_j, i, y, w) \overset{?}{=} \TRUE
    \IEEEeqnarraynumspace
\end{IEEEeqnarray}

For more details on the application of $\OURsavid$
to data availability sampling
see Appendix~\ref{sec:appendix-data-availability-sampling}.

\section*{Acknowledgment}
The authors thank Mohammad Ali Maddah-Ali and Dionysis Zindros for fruitful discussions.
KN is supported by a gift from IOG (Input-Output Global) and a grant from the National Science Foundation Center for Science of Information.
JN is supported by the Protocol Labs PhD Fellowship, a gift from Ethereum Foundation, and the Reed-Hodgson Stanford Graduate Fellowship.

\bibliographystyle{ACM-Reference-Format}
\bibliography{references}

%
%
%

\begin{thebibliography}{34}

%
%
%
%
%
%
%
%
%
%
%
%
%
%
%
%

\ifx \showCODEN    \undefined \def \showCODEN     #1{\unskip}     \fi
\ifx \showDOI      \undefined \def \showDOI       #1{#1}\fi
\ifx \showISBNx    \undefined \def \showISBNx     #1{\unskip}     \fi
\ifx \showISBNxiii \undefined \def \showISBNxiii  #1{\unskip}     \fi
\ifx \showISSN     \undefined \def \showISSN      #1{\unskip}     \fi
\ifx \showLCCN     \undefined \def \showLCCN      #1{\unskip}     \fi
\ifx \shownote     \undefined \def \shownote      #1{#1}          \fi
\ifx \showarticletitle \undefined \def \showarticletitle #1{#1}   \fi
\ifx \showURL      \undefined \def \showURL       {\relax}        \fi
%
%
\providecommand\bibfield[2]{#2}
\providecommand\bibinfo[2]{#2}
\providecommand\natexlab[1]{#1}
\providecommand\showeprint[2][]{arXiv:#2}

\bibitem[Al{-}Bassam et~al\mbox{.}(2021)]%
        {albassam}
\bibfield{author}{\bibinfo{person}{Mustafa Al{-}Bassam},
  \bibinfo{person}{Alberto Sonnino}, \bibinfo{person}{Vitalik Buterin}, {and}
  \bibinfo{person}{Ismail Khoffi}.} \bibinfo{year}{2021}\natexlab{}.
\newblock \showarticletitle{Fraud and Data Availability Proofs: Detecting
  Invalid Blocks in Light Clients}. In \bibinfo{booktitle}{\emph{Financial
  Cryptography {(2)}}} \emph{(\bibinfo{series}{Lecture Notes in Computer
  Science}, Vol.~\bibinfo{volume}{12675})}. \bibinfo{publisher}{Springer},
  \bibinfo{pages}{279--298}.
\newblock


\bibitem[arkworks contributors(2022)]%
        {arkworks}
\bibfield{author}{\bibinfo{person}{arkworks contributors}.}
  \bibinfo{year}{2022}\natexlab{}.
\newblock \bibinfo{booktitle}{\emph{\texttt{arkworks} zkSNARK ecosystem}}.
\newblock
\urldef\tempurl%
\url{https://arkworks.rs}
\showURL{%
\tempurl}


\bibitem[Bagaria et~al\mbox{.}(2019)]%
        {prism}
\bibfield{author}{\bibinfo{person}{Vivek~Kumar Bagaria},
  \bibinfo{person}{Sreeram Kannan}, \bibinfo{person}{David Tse},
  \bibinfo{person}{Giulia~C. Fanti}, {and} \bibinfo{person}{Pramod Viswanath}.}
  \bibinfo{year}{2019}\natexlab{}.
\newblock \showarticletitle{Prism: Deconstructing the Blockchain to Approach
  Physical Limits}. In \bibinfo{booktitle}{\emph{Proceedings of the 2019 {ACM}
  {SIGSAC} Conference on Computer and Communications Security, {CCS} 2019,
  London, UK, November 11-15, 2019}}. \bibinfo{publisher}{{ACM}},
  \bibinfo{pages}{585--602}.
\newblock
\urldef\tempurl%
\url{https://doi.org/10.1145/3319535.3363213}
\showDOI{\tempurl}


\bibitem[Barreto et~al\mbox{.}(2002)]%
        {blscurve}
\bibfield{author}{\bibinfo{person}{Paulo S. L.~M. Barreto},
  \bibinfo{person}{Ben Lynn}, {and} \bibinfo{person}{Michael Scott}.}
  \bibinfo{year}{2002}\natexlab{}.
\newblock \showarticletitle{Constructing Elliptic Curves with Prescribed
  Embedding Degrees}. In \bibinfo{booktitle}{\emph{{SCN}}}
  \emph{(\bibinfo{series}{Lecture Notes in Computer Science},
  Vol.~\bibinfo{volume}{2576})}. \bibinfo{publisher}{Springer},
  \bibinfo{pages}{257--267}.
\newblock


\bibitem[Ben{-}Sasson et~al\mbox{.}(2018)]%
        {stark}
\bibfield{author}{\bibinfo{person}{Eli Ben{-}Sasson}, \bibinfo{person}{Iddo
  Bentov}, \bibinfo{person}{Yinon Horesh}, {and} \bibinfo{person}{Michael
  Riabzev}.} \bibinfo{year}{2018}\natexlab{}.
\newblock \showarticletitle{Scalable, transparent, and post-quantum secure
  computational integrity}.
\newblock \bibinfo{journal}{\emph{{IACR} Cryptol. ePrint Arch.}}
  (\bibinfo{year}{2018}), \bibinfo{pages}{46}.
\newblock
\urldef\tempurl%
\url{http://eprint.iacr.org/2018/046}
\showURL{%
\tempurl}


\bibitem[Bowe(2017)]%
        {bls12381}
\bibfield{author}{\bibinfo{person}{Sean Bowe}.}
  \bibinfo{year}{2017}\natexlab{}.
\newblock \showarticletitle{{BLS12-381}: New {zk-SNARK} Elliptic Curve
  Construction}.
\newblock  (\bibinfo{year}{2017}).
\newblock
\urldef\tempurl%
\url{https://electriccoin.co/blog/new-snark-curve/}
\showURL{%
\tempurl}


\bibitem[Bowe et~al\mbox{.}(2017)]%
        {powersoftau}
\bibfield{author}{\bibinfo{person}{Sean Bowe}, \bibinfo{person}{Ariel Gabizon},
  {and} \bibinfo{person}{Ian Miers}.} \bibinfo{year}{2017}\natexlab{}.
\newblock \showarticletitle{Scalable Multi-party Computation for zk-SNARK
  Parameters in the Random Beacon Model}.
\newblock \bibinfo{journal}{\emph{{IACR} Cryptol. ePrint Arch.}}
  (\bibinfo{year}{2017}), \bibinfo{pages}{1050}.
\newblock


\bibitem[Bowers et~al\mbox{.}(2009)]%
        {proofofretrievability}
\bibfield{author}{\bibinfo{person}{Kevin~D. Bowers}, \bibinfo{person}{Ari
  Juels}, {and} \bibinfo{person}{Alina Oprea}.}
  \bibinfo{year}{2009}\natexlab{}.
\newblock \showarticletitle{Proofs of retrievability: theory and
  implementation}. In \bibinfo{booktitle}{\emph{{CCSW}}}.
  \bibinfo{publisher}{{ACM}}, \bibinfo{pages}{43--54}.
\newblock


\bibitem[Cachin and Tessaro(2005)]%
        {avid}
\bibfield{author}{\bibinfo{person}{Christian Cachin} {and}
  \bibinfo{person}{Stefano Tessaro}.} \bibinfo{year}{2005}\natexlab{}.
\newblock \showarticletitle{Asynchronous Verifiable Information Dispersal}. In
  \bibinfo{booktitle}{\emph{Distributed Computing, 19th International
  Conference, {DISC} 2005, Cracow, Poland, September 26-29, 2005, Proceedings}}
  \emph{(\bibinfo{series}{Lecture Notes in Computer Science},
  Vol.~\bibinfo{volume}{3724})}. \bibinfo{publisher}{Springer},
  \bibinfo{pages}{503--504}.
\newblock
\urldef\tempurl%
\url{https://doi.org/10.1007/11561927\_42}
\showDOI{\tempurl}


\bibitem[Catalano and Fiore(2013)]%
        {vcs}
\bibfield{author}{\bibinfo{person}{Dario Catalano} {and} \bibinfo{person}{Dario
  Fiore}.} \bibinfo{year}{2013}\natexlab{}.
\newblock \showarticletitle{Vector Commitments and Their Applications}. In
  \bibinfo{booktitle}{\emph{Public-Key Cryptography - {PKC} 2013 - 16th
  International Conference on Practice and Theory in Public-Key Cryptography,
  Nara, Japan, February 26 - March 1, 2013. Proceedings}}
  \emph{(\bibinfo{series}{Lecture Notes in Computer Science},
  Vol.~\bibinfo{volume}{7778})}. \bibinfo{publisher}{Springer},
  \bibinfo{pages}{55--72}.
\newblock
\urldef\tempurl%
\url{https://doi.org/10.1007/978-3-642-36362-7\_5}
\showDOI{\tempurl}


\bibitem[Decker and Wattenhofer(2015)]%
        {decker}
\bibfield{author}{\bibinfo{person}{Christian Decker} {and}
  \bibinfo{person}{Roger Wattenhofer}.} \bibinfo{year}{2015}\natexlab{}.
\newblock \showarticletitle{A Fast and Scalable Payment Network with {Bitcoin}
  Duplex Micropayment Channels}. In \bibinfo{booktitle}{\emph{Stabilization,
  Safety, and Security of Distributed Systems - 17th International Symposium,
  {SSS} 2015, Edmonton, AB, Canada, August 18-21, 2015, Proceedings}}
  \emph{(\bibinfo{series}{Lecture Notes in Computer Science},
  Vol.~\bibinfo{volume}{9212})}. \bibinfo{publisher}{Springer},
  \bibinfo{pages}{3--18}.
\newblock


\bibitem[Feist and Khovratovich([n.\,d.])]%
        {fk}
\bibfield{author}{\bibinfo{person}{Dankrad Feist} {and} \bibinfo{person}{Dmitry
  Khovratovich}.} \bibinfo{year}{[n.\,d.]}\natexlab{}.
\newblock \showarticletitle{Fast Amortized {Kate} Proofs}.
\newblock  (\bibinfo{year}{[n.\,d.]}).
\newblock
\urldef\tempurl%
\url{https://github.com/khovratovich/Kate/blob/master/Kate_amortized.pdf}
\showURL{%
\tempurl}


\bibitem[Fisch(2019)]%
        {proofofreplication}
\bibfield{author}{\bibinfo{person}{Ben Fisch}.}
  \bibinfo{year}{2019}\natexlab{}.
\newblock \showarticletitle{Tight Proofs of Space and Replication}. In
  \bibinfo{booktitle}{\emph{{EUROCRYPT} {(2)}}} \emph{(\bibinfo{series}{Lecture
  Notes in Computer Science}, Vol.~\bibinfo{volume}{11477})}.
  \bibinfo{publisher}{Springer}, \bibinfo{pages}{324--348}.
\newblock


\bibitem[Gennaro et~al\mbox{.}(2013)]%
        {snark}
\bibfield{author}{\bibinfo{person}{Rosario Gennaro}, \bibinfo{person}{Craig
  Gentry}, \bibinfo{person}{Bryan Parno}, {and} \bibinfo{person}{Mariana
  Raykova}.} \bibinfo{year}{2013}\natexlab{}.
\newblock \showarticletitle{Quadratic Span Programs and Succinct {NIZKs}
  without {PCPs}}. In \bibinfo{booktitle}{\emph{Advances in Cryptology -
  {EUROCRYPT} 2013, 32nd Annual International Conference on the Theory and
  Applications of Cryptographic Techniques, Athens, Greece, May 26-30, 2013.
  Proceedings}} \emph{(\bibinfo{series}{Lecture Notes in Computer Science},
  Vol.~\bibinfo{volume}{7881})}. \bibinfo{publisher}{Springer},
  \bibinfo{pages}{626--645}.
\newblock
\urldef\tempurl%
\url{https://doi.org/10.1007/978-3-642-38348-9\_37}
\showDOI{\tempurl}


\bibitem[Gluchowski(2020)]%
        {dacvalidium}
\bibfield{author}{\bibinfo{person}{Alex Gluchowski}.}
  \bibinfo{year}{2020}\natexlab{}.
\newblock \showarticletitle{zk{R}ollup vs. {V}alidium}.
\newblock  (\bibinfo{year}{2020}).
\newblock
\urldef\tempurl%
\url{https://medium.com/matter-labs/zkrollup-vs-validium-starkex-5614e38bc263}
\showURL{%
\tempurl}


\bibitem[Hendricks et~al\mbox{.}(2007)]%
        {avidfp}
\bibfield{author}{\bibinfo{person}{James Hendricks},
  \bibinfo{person}{Gregory~R. Ganger}, {and} \bibinfo{person}{Michael~K.
  Reiter}.} \bibinfo{year}{2007}\natexlab{}.
\newblock \showarticletitle{Verifying distributed erasure-coded data}. In
  \bibinfo{booktitle}{\emph{Proceedings of the Twenty-Sixth Annual {ACM}
  Symposium on Principles of Distributed Computing, {PODC} 2007, Portland,
  Oregon, USA, August 12-15, 2007}}. \bibinfo{publisher}{{ACM}},
  \bibinfo{pages}{139--146}.
\newblock
\urldef\tempurl%
\url{https://doi.org/10.1145/1281100.1281122}
\showDOI{\tempurl}


\bibitem[Kalodner et~al\mbox{.}(2018)]%
        {arbitrum}
\bibfield{author}{\bibinfo{person}{Harry~A. Kalodner}, \bibinfo{person}{Steven
  Goldfeder}, \bibinfo{person}{Xiaoqi Chen}, \bibinfo{person}{S.~Matthew
  Weinberg}, {and} \bibinfo{person}{Edward~W. Felten}.}
  \bibinfo{year}{2018}\natexlab{}.
\newblock \showarticletitle{Arbitrum: Scalable, private smart contracts}. In
  \bibinfo{booktitle}{\emph{27th {USENIX} Security Symposium, {USENIX} Security
  2018, Baltimore, MD, USA, August 15-17, 2018}}. \bibinfo{publisher}{{USENIX}
  Association}, \bibinfo{pages}{1353--1370}.
\newblock
\urldef\tempurl%
\url{https://www.usenix.org/conference/usenixsecurity18/presentation/kalodner}
\showURL{%
\tempurl}


\bibitem[Kate et~al\mbox{.}(2010)]%
        {kate}
\bibfield{author}{\bibinfo{person}{Aniket Kate}, \bibinfo{person}{Gregory~M.
  Zaverucha}, {and} \bibinfo{person}{Ian Goldberg}.}
  \bibinfo{year}{2010}\natexlab{}.
\newblock \showarticletitle{Constant-Size Commitments to Polynomials and Their
  Applications}. In \bibinfo{booktitle}{\emph{Advances in Cryptology -
  {ASIACRYPT} 2010 - 16th International Conference on the Theory and
  Application of Cryptology and Information Security, Singapore, December 5-9,
  2010. Proceedings}} \emph{(\bibinfo{series}{Lecture Notes in Computer
  Science}, Vol.~\bibinfo{volume}{6477})}. \bibinfo{publisher}{Springer},
  \bibinfo{pages}{177--194}.
\newblock
\urldef\tempurl%
\url{https://doi.org/10.1007/978-3-642-17373-8\_11}
\showDOI{\tempurl}


\bibitem[Katz and Lindell(2014)]%
        {katz}
\bibfield{author}{\bibinfo{person}{Jonathan Katz} {and} \bibinfo{person}{Yehuda
  Lindell}.} \bibinfo{year}{2014}\natexlab{}.
\newblock \bibinfo{booktitle}{\emph{Introduction to Modern Cryptography, Second
  Edition}}.
\newblock \bibinfo{publisher}{{CRC} Press}.
\newblock


\bibitem[Kokoris{-}Kogias et~al\mbox{.}(2018)]%
        {omniledger}
\bibfield{author}{\bibinfo{person}{Eleftherios Kokoris{-}Kogias},
  \bibinfo{person}{Philipp Jovanovic}, \bibinfo{person}{Linus Gasser},
  \bibinfo{person}{Nicolas Gailly}, \bibinfo{person}{Ewa Syta}, {and}
  \bibinfo{person}{Bryan Ford}.} \bibinfo{year}{2018}\natexlab{}.
\newblock \showarticletitle{{OmniLedger}: {A} Secure, Scale-Out, Decentralized
  Ledger via Sharding}. In \bibinfo{booktitle}{\emph{2018 {IEEE} Symposium on
  Security and Privacy, {SP} 2018, Proceedings, 21-23 May 2018, San Francisco,
  California, {USA}}}. \bibinfo{publisher}{{IEEE} Computer Society},
  \bibinfo{pages}{583--598}.
\newblock
\urldef\tempurl%
\url{https://doi.org/10.1109/SP.2018.000-5}
\showDOI{\tempurl}


\bibitem[Li et~al\mbox{.}(2020)]%
        {polyshard}
\bibfield{author}{\bibinfo{person}{Songze Li}, \bibinfo{person}{Mingchao Yu},
  \bibinfo{person}{Chien{-}Sheng Yang}, \bibinfo{person}{Amir~Salman
  Avestimehr}, \bibinfo{person}{Sreeram Kannan}, {and} \bibinfo{person}{Pramod
  Viswanath}.} \bibinfo{year}{2020}\natexlab{}.
\newblock \showarticletitle{{PolyShard}: Coded Sharding Achieves Linearly
  Scaling Efficiency and Security Simultaneously}. In
  \bibinfo{booktitle}{\emph{{IEEE} International Symposium on Information
  Theory, {ISIT} 2020, Los Angeles, CA, USA, June 21-26, 2020}}.
  \bibinfo{publisher}{{IEEE}}, \bibinfo{pages}{203--208}.
\newblock
\urldef\tempurl%
\url{https://doi.org/10.1109/ISIT44484.2020.9174305}
\showDOI{\tempurl}


\bibitem[Lu et~al\mbox{.}(2020)]%
        {dumbo}
\bibfield{author}{\bibinfo{person}{Yuan Lu}, \bibinfo{person}{Zhenliang Lu},
  \bibinfo{person}{Qiang Tang}, {and} \bibinfo{person}{Guiling Wang}.}
  \bibinfo{year}{2020}\natexlab{}.
\newblock \showarticletitle{Dumbo-MVBA: Optimal Multi-Valued Validated
  Asynchronous Byzantine Agreement, Revisited}. In
  \bibinfo{booktitle}{\emph{{PODC}}}. \bibinfo{publisher}{{ACM}},
  \bibinfo{pages}{129--138}.
\newblock


\bibitem[McCorry et~al\mbox{.}(2021)]%
        {validatingbridges}
\bibfield{author}{\bibinfo{person}{Patrick McCorry}, \bibinfo{person}{Chris
  Buckland}, \bibinfo{person}{Bennet Yee}, {and} \bibinfo{person}{Dawn Song}.}
  \bibinfo{year}{2021}\natexlab{}.
\newblock \showarticletitle{SoK: Validating Bridges as a Scaling Solution for
  Blockchains}.
\newblock \bibinfo{journal}{\emph{{IACR} Cryptol. ePrint Arch.}}
  (\bibinfo{year}{2021}), \bibinfo{pages}{1589}.
\newblock


\bibitem[Merkle(1987)]%
        {merkle}
\bibfield{author}{\bibinfo{person}{Ralph~C. Merkle}.}
  \bibinfo{year}{1987}\natexlab{}.
\newblock \showarticletitle{A Digital Signature Based on a Conventional
  Encryption Function}. In \bibinfo{booktitle}{\emph{Advances in Cryptology -
  {CRYPTO} '87, {A} Conference on the Theory and Applications of Cryptographic
  Techniques, Santa Barbara, California, USA, August 16-20, 1987, Proceedings}}
  \emph{(\bibinfo{series}{Lecture Notes in Computer Science},
  Vol.~\bibinfo{volume}{293})}. \bibinfo{publisher}{Springer},
  \bibinfo{pages}{369--378}.
\newblock
\urldef\tempurl%
\url{https://doi.org/10.1007/3-540-48184-2\_32}
\showDOI{\tempurl}


\bibitem[Miller et~al\mbox{.}(2019)]%
        {miller}
\bibfield{author}{\bibinfo{person}{Andrew Miller}, \bibinfo{person}{Iddo
  Bentov}, \bibinfo{person}{Surya Bakshi}, \bibinfo{person}{Ranjit Kumaresan},
  {and} \bibinfo{person}{Patrick McCorry}.} \bibinfo{year}{2019}\natexlab{}.
\newblock \showarticletitle{Sprites and State Channels: Payment Networks that
  Go Faster Than {Lightning}}. In \bibinfo{booktitle}{\emph{Financial
  Cryptography and Data Security - 23rd International Conference, {FC} 2019,
  Frigate Bay, St. Kitts and Nevis, February 18-22, 2019, Revised Selected
  Papers}} \emph{(\bibinfo{series}{Lecture Notes in Computer Science},
  Vol.~\bibinfo{volume}{11598})}. \bibinfo{publisher}{Springer},
  \bibinfo{pages}{508--526}.
\newblock
\urldef\tempurl%
\url{https://doi.org/10.1007/978-3-030-32101-7\_30}
\showDOI{\tempurl}


\bibitem[Mitra et~al\mbox{.}(2020)]%
        {DBLP:conf/itw/MitraTD20}
\bibfield{author}{\bibinfo{person}{Debarnab Mitra}, \bibinfo{person}{Lev Tauz},
  {and} \bibinfo{person}{Lara Dolecek}.} \bibinfo{year}{2020}\natexlab{}.
\newblock \showarticletitle{Concentrated Stopping Set Design for {Coded Merkle
  Tree}: Improving Security Against Data Availability Attacks in Blockchain
  Systems}. In \bibinfo{booktitle}{\emph{{IEEE} Information Theory Workshop,
  {ITW} 2020, Riva del Garda, Italy, April 11-15, 2021}}.
  \bibinfo{publisher}{{IEEE}}, \bibinfo{pages}{1--5}.
\newblock
\urldef\tempurl%
\url{https://doi.org/10.1109/ITW46852.2021.9457630}
\showDOI{\tempurl}


\bibitem[Mitra et~al\mbox{.}(2021)]%
        {DBLP:journals/corr/abs-2108-13332}
\bibfield{author}{\bibinfo{person}{Debarnab Mitra}, \bibinfo{person}{Lev Tauz},
  {and} \bibinfo{person}{Lara Dolecek}.} \bibinfo{year}{2021}\natexlab{}.
\newblock \showarticletitle{Overcoming Data Availability Attacks in Blockchain
  Systems: {LDPC} Code Design for {Coded Merkle Tree}}.
\newblock \bibinfo{journal}{\emph{CoRR}}  \bibinfo{volume}{abs/2108.13332}
  (\bibinfo{year}{2021}).
\newblock
\showeprint[arXiv]{2108.13332}
\urldef\tempurl%
\url{https://arxiv.org/abs/2108.13332}
\showURL{%
\tempurl}


\bibitem[Reed and Solomon(1960)]%
        {rs}
\bibfield{author}{\bibinfo{person}{I. Reed} {and} \bibinfo{person}{G.
  Solomon}.} \bibinfo{year}{1960}\natexlab{}.
\newblock \showarticletitle{Polynomial Codes Over Certain Finite Fields}.
\newblock \bibinfo{journal}{\emph{Journal of The Society for Industrial and
  Applied Mathematics}}  \bibinfo{volume}{8} (\bibinfo{year}{1960}),
  \bibinfo{pages}{300--304}.
\newblock


\bibitem[Sheng et~al\mbox{.}(2021)]%
        {aced}
\bibfield{author}{\bibinfo{person}{Peiyao Sheng}, \bibinfo{person}{Bowen Xue},
  \bibinfo{person}{Sreeram Kannan}, {and} \bibinfo{person}{Pramod Viswanath}.}
  \bibinfo{year}{2021}\natexlab{}.
\newblock \showarticletitle{ACeD: Scalable Data Availability Oracle}. In
  \bibinfo{booktitle}{\emph{Financial Cryptography {(2)}}}
  \emph{(\bibinfo{series}{Lecture Notes in Computer Science},
  Vol.~\bibinfo{volume}{12675})}. \bibinfo{publisher}{Springer},
  \bibinfo{pages}{299--318}.
\newblock


\bibitem[Tomescu(2020)]%
        {alin}
\bibfield{author}{\bibinfo{person}{Alin Tomescu}.}
  \bibinfo{year}{2020}\natexlab{}.
\newblock \showarticletitle{How to compute all Pointproofs}.
\newblock \bibinfo{journal}{\emph{{IACR} Cryptol. ePrint Arch.}}
  (\bibinfo{year}{2020}), \bibinfo{pages}{1516}.
\newblock
\urldef\tempurl%
\url{https://eprint.iacr.org/2020/1516}
\showURL{%
\tempurl}


\bibitem[Tomescu et~al\mbox{.}(2020)]%
        {tab}
\bibfield{author}{\bibinfo{person}{Alin Tomescu}, \bibinfo{person}{Ittai
  Abraham}, \bibinfo{person}{Vitalik Buterin}, \bibinfo{person}{Justin Drake},
  \bibinfo{person}{Dankrad Feist}, {and} \bibinfo{person}{Dmitry
  Khovratovich}.} \bibinfo{year}{2020}\natexlab{}.
\newblock \showarticletitle{Aggregatable Subvector Commitments for Stateless
  Cryptocurrencies}. In \bibinfo{booktitle}{\emph{{SCN}}}
  \emph{(\bibinfo{series}{Lecture Notes in Computer Science},
  Vol.~\bibinfo{volume}{12238})}. \bibinfo{publisher}{Springer},
  \bibinfo{pages}{45--64}.
\newblock


\bibitem[Yang et~al\mbox{.}(2022)]%
        {avidm}
\bibfield{author}{\bibinfo{person}{Lei Yang}, \bibinfo{person}{Seo~Jin Park},
  \bibinfo{person}{Mohammad Alizadeh}, \bibinfo{person}{Sreeram Kannan}, {and}
  \bibinfo{person}{David Tse}.} \bibinfo{year}{2022}\natexlab{}.
\newblock \showarticletitle{{DispersedLedger}: High-Throughput {Byzantine}
  Consensus on Variable Bandwidth Networks}.
\newblock  (\bibinfo{year}{2022}).
\newblock


\bibitem[Yu et~al\mbox{.}(2020a)]%
        {ohie}
\bibfield{author}{\bibinfo{person}{Haifeng Yu}, \bibinfo{person}{Ivica
  Nikolic}, \bibinfo{person}{Ruomu Hou}, {and} \bibinfo{person}{Prateek
  Saxena}.} \bibinfo{year}{2020}\natexlab{a}.
\newblock \showarticletitle{{OHIE}: Blockchain Scaling Made Simple}. In
  \bibinfo{booktitle}{\emph{2020 {IEEE} Symposium on Security and Privacy, {SP}
  2020, San Francisco, CA, USA, May 18-21, 2020}}. \bibinfo{publisher}{{IEEE}},
  \bibinfo{pages}{90--105}.
\newblock
\urldef\tempurl%
\url{https://doi.org/10.1109/SP40000.2020.00008}
\showDOI{\tempurl}


\bibitem[Yu et~al\mbox{.}(2020b)]%
        {cmt}
\bibfield{author}{\bibinfo{person}{Mingchao Yu}, \bibinfo{person}{Saeid
  Sahraei}, \bibinfo{person}{Songze Li}, \bibinfo{person}{Salman Avestimehr},
  \bibinfo{person}{Sreeram Kannan}, {and} \bibinfo{person}{Pramod Viswanath}.}
  \bibinfo{year}{2020}\natexlab{b}.
\newblock \showarticletitle{{Coded Merkle Tree}: Solving Data Availability
  Attacks in Blockchains}. In \bibinfo{booktitle}{\emph{Financial Cryptography
  and Data Security - 24th International Conference, {FC} 2020, Kota Kinabalu,
  Malaysia, February 10-14, 2020 Revised Selected Papers}}
  \emph{(\bibinfo{series}{Lecture Notes in Computer Science},
  Vol.~\bibinfo{volume}{12059})}. \bibinfo{publisher}{Springer},
  \bibinfo{pages}{114--134}.
\newblock
\urldef\tempurl%
\url{https://doi.org/10.1007/978-3-030-51280-4\_8}
\showDOI{\tempurl}


\end{thebibliography}

\appendix

\section{Calculations for Table~\ref{tab:comparison}}
\label{sec:appendix-calculations}
\begin{figure}%
    \centering
    \begin{tikzpicture}
        \begin{axis}[
            mysimpleplot,
            ylabel={Communication\\and storage cost [MB]},
            xticklabels={},
            height=0.45\linewidth,
            width=\linewidth,
            ymin=0,
            xmin=4e3, xmax=100e3,
            scaled y ticks=real:1e6,   %
            scaled x ticks=real:1e3,   %
            xtick scale label code/.code={},
            ytick scale label code/.code={},
        ]
        
            \addplot [myparula11] table [x=c,y=storage] {figures/aced.txt};

        \end{axis}
        \begin{axis}[
            mysimpleplot,
            ylabel={Invalid encoding\\fraud proof size [kB]},
            xlabel={Base layer symbol size [kB]},
            height=0.45\linewidth,
            width=\linewidth,
            yshift=-2.5cm,
            xmin=4e3, xmax=100e3,
            scaled y ticks=real:1e3,   %
            scaled x ticks=real:1e3,   %
            xtick scale label code/.code={},
            ytick scale label code/.code={},
        ]
        
            \addplot [myparula11] table [x=c,y=proof] {figures/aced.txt};
            
        \end{axis}
    \end{tikzpicture}%
    \vspace{-0.5em}%
    \caption{Communication and storage cost (top) and invalid encoding fraud proof size (bottom) as a function of the base layer symbol size $c$, when dispersing a file of size $22\,\mathrm{MB}$ among $1024$ nodes using ACeD \cite{aced} with resilience $t = 0.33n$.}
    \label{fig:aced}
\end{figure}
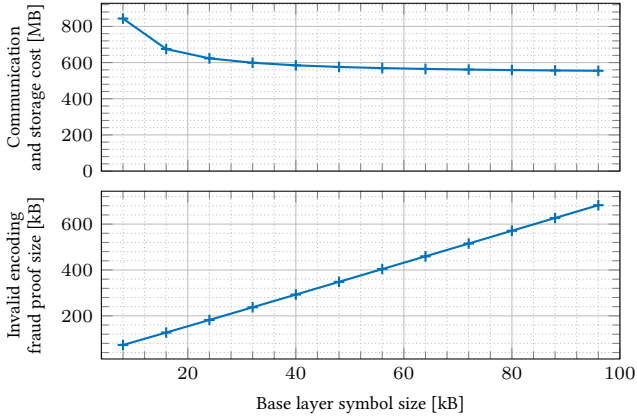

\newcommand{\lambdaHASH}[0]{\ensuremath{\lambda_{\mathrm{h}}}}
\newcommand{\lambdaCOMMIT}[0]{\ensuremath{\lambda_{\mathrm{c}}}}

Table~\ref{tab:comparison} shows
communication and storage required to disperse a block of size $|B| = 22\,\mathrm{MB}$ among $n = 1024$ storage nodes using different schemes.
We provide the corresponding calculations here.
We denote communication and storage costs as $C$ and $S$, respectively.
We assume the size of a hash is $\lambdaHASH = 32\,\mathrm{B}$,
and the size of an LVC is $\lambdaCOMMIT = 48\,\mathrm{B}$ (as in BLS12-381).
Given adversarial resilience $t$,
we choose
$k \triangleq n-2t$.
In the following, we do not
account for
signatures on messages
or for blockchain interaction (\eg, ACeD).

\begin{itemize}
    \item Repetition (uncoded) scheme:
        \begin{IEEEeqnarray}{C}
        C = S = n|B|
        \end{IEEEeqnarray}
    \item AVID:
        \begin{IEEEeqnarray}{rCl}
        C &=& \left(\frac{|B|}{k} + n\lambdaHASH\right)\left(n + n^2\right)   \\
        S &=& n\left(\frac{|B|}{k} + n\lambdaHASH\right)
        \end{IEEEeqnarray}
    \item AVID-FP:
        \begin{IEEEeqnarray}{rCl}
        C &=& n\left(\frac{|B|}{k} + (n+k)\lambdaHASH\right) + n^2(n+k)\lambdaHASH   \\ 
        S &=& n\left(\frac{|B|}{k} + (n+k)\lambdaHASH\right)
        \end{IEEEeqnarray}
    \item AVID-M:
        \begin{IEEEeqnarray}{rCl}
        C &=& n\left(\frac{|B|}{k} + (1+\log_2 n)\lambdaHASH\right) + n^2 \lambdaHASH   \\ 
        S &=& n\left(\frac{|B|}{k} + (1+\log_2 n)\lambdaHASH\right)
        \end{IEEEeqnarray}
    \item ACeD:
        \begin{IEEEeqnarray}{C}
        C = S = n\left(  t' \lambdaHASH+\frac{|B|}{n r \lambda}+\frac{(2 q-1) |B| \lambdaHASH}{n r c \lambda} \log_{q r} \frac{|B|}{c t' r}  \right)
        \end{IEEEeqnarray}
        
        Parameters: %
        \begin{IEEEeqnarray}{rClrClrClrCl}
            t' &=& 16   \quad
            &
            r &=& 0.25   \quad
            &
            q &=& 8   \quad
            &
            d &=& 8   \quad   \\
            c &=& 40\,\mathrm{kB}   \quad
            &
            \eta &=& 0.875   \quad
            &
            \lambda &=& \frac{1-2 t/n}{\ln\left( \frac{1}{1-\eta} \right)}   \quad
            &
            &&
        \end{IEEEeqnarray}
        
        As illustrated in Figure~\ref{fig:aced}, the communication and storage cost of ACeD can be decreased by increasing the base layer symbol size $c$, at the expense of an increased invalid encoding fraud proof size.
        We picked what seemed to us to be a
        reasonable tradeoff,
        where increasing the fraud proof
        size further does not lead to any
        more substantial gains
        in communication and storage.

    \item Semi-AVID-PR:
        \begin{IEEEeqnarray}{rCl}
        C &=& n\left(\frac{|B|}{k} + k\lambdaCOMMIT\right)   \\
        S &=& n\left(\frac{|B|}{k} + k\lambdaCOMMIT\right)
        \end{IEEEeqnarray}
\end{itemize}

\section{Preliminaries}
\label{sec:appendix-preliminaries}
\label{sec:appendix-cryptographic-preliminaries}

\begin{algorithm}[t]
    \caption{Existential forgery game ($\gameEF$)
    against $\sig = (\sigKeygen, \sigSign, \sigVerify)$}
    \label{alg:game-sig-forgery}
    \begin{algorithmic}[1]
        \State $\CM \gets \emptyset$
        \State $(\cryptoPK, \cryptoSK) \gets \sigKeygen(1^\lambda)$
        \Function{$\CO^{\mathrm{sign}}$}{m}
            \State $\CM \gets \CM \cup \{ m \}$
            \State \Return $\sigSign(\cryptoSK, m)$
        \EndFunction
        \State $(m, \sigma) \gets \CA^{\CO^{\mathrm{sign}}(.)}_{\gameEF}(\cryptoPK)$
        \State \Return $m \not\in \CM \land \sigVerify(\cryptoPK, m, \sigma) = \TRUE$
    \end{algorithmic}
\end{algorithm}

\begin{algorithm}[t]
    \caption{Collision finding game ($\gameCF$) against $\crhf = (\crhfGen, \crhfH)$}
    \label{alg:game-crhf-collision}
    \begin{algorithmic}[1]
        \State $s \gets \crhfGen(1^\lambda)$
        \State $(x, x') \gets \CA_{\gameCF}(s)$
        \State \Return $x \neq x' \land \CRHF(x) = \CRHF(x')$
    \end{algorithmic}
\end{algorithm}

\begin{algorithm}[t]
    \caption{Binding game ($\gameVCB$) against $\slvc = (\slvcSetup, \slvcCommit)$}
    \label{alg:game-lvc-binding}
    \begin{algorithmic}[1]
        \State $\cryptoPP \gets \slvcSetup(1^\lambda)$
        \State $(\Bv, \Bv') \gets \CA_{\gameVCB}(\cryptoPP)$
        \State \Return $\Bv \neq \Bv' \land \slvcCommit(\Bv) = \slvcCommit(\Bv')$
    \end{algorithmic}
\end{algorithm}

\begin{definition}[Hash Function]
\label{def:hash_func}
A hash function is a pair of PPT algorithms $(\mathsf{Gen}, \mathsf{H})$ \cite{katz}, such that:
\begin{itemize}
    \item $\mathsf{Gen}\colon 1^\lambda \mapsto s$: 
    takes as input a security parameter $\lambda$ and outputs a randomly sampled key $s$,
    \item $\mathsf{H}\colon (s, x) \mapsto h$: 
    takes as input a key $s$ and a string $x \in \{0, 1\}^*$ and outputs a string $h\in \{0, 1\}^{\lambda}$.
\end{itemize}
\end{definition}

\begin{definition}[Digital Signature Scheme]
\label{def:sig}
A digital signature scheme $\sig = (\mathsf{KeyGen}, \mathsf{Sign}, \mathsf{Verify})$
\cite{katz} consists of three PPT algorithms, such that:
\begin{itemize}
    \item $\mathsf{KeyGen}\colon 1^\lambda \mapsto (\cryptoPK, \cryptoSK)$: takes as input a security parameter $\lambda$ and outputs a public key $\cryptoPK$ and a secret key $\cryptoSK$,
    \item $\mathsf{Sign}\colon (\cryptoSK, m) \mapsto \sigma$:
    takes as input a secret key $\cryptoSK$ and a message $m$ and outputs a signature $\sigma$,
    \item $\mathsf{Verify}\colon (\cryptoPK, m, \sigma) \mapsto b \in \{\TRUE, \FALSE\}$:
    takes as input a public key $\cryptoPK$, a message $m$ and a signature $\sigma$ and outputs a boolean $b$ indicating whether the signature is valid.
\end{itemize}
\end{definition}

\begin{definition}[Deterministic Vector Commitment Scheme]
\label{def:vc}
A deterministic vector commitment scheme  $\mathsf{VC} = (\mathsf{Setup}, \mathsf{Commit},\allowbreak \mathsf{OpenEntry}, \mathsf{VerifyEntry})$ \cite{vcs, kate} consists of four PPT algorithms, such that:
\begin{itemize}
    \item $\mathsf{Setup}\colon 1^\lambda\mapsto \cryptoPP$: 
    takes as input a security parameter $\lambda$ and outputs some public parameters $\cryptoPP$,
    \item $\mathsf{Commit}\colon (\cryptoPP, \Bv) \mapsto C$: 
    takes as input the public parameters $\cryptoPP$ and a vector $\Bv$ and outputs a commitment $C$,
    \item $\mathsf{OpenEntry}\colon (\cryptoPP, \Bv, i) \mapsto \pi_i$:
    takes as input the public parameters $\cryptoPP$, a vector $\Bv$, a position $i$ and returns a proof $\pi_i$ attesting to the fact that $[\Bv]_{i}$ is the $i$-th entry of $\Bv$,
    \item $\mathsf{VerifyEntry}\colon (\cryptoPP, C, i, y,  \pi) \mapsto b \in \{\TRUE, \FALSE\}$:
    takes as input the public parameters $\cryptoPP$, a commitment $C$, a position $i$, a value $y$, and an opening proof $\pi$, and returns a boolean $b$ indicating whether
    $\pi$ is a proof attesting to the fact that $C$ is a commitment to a vector $\Bv$ such that $[\Bv]_{i} = y$.
\end{itemize}
\end{definition}

\section{Additional Evaluation Plots}
\label{sec:appendix-additional-evaluation-plots}
\begin{figure}[t]
    \centering
    \begin{tikzpicture}
        \small
        \def\CODERATE{25}
        \def\CODEN{128}
        \begin{scope}[xshift=-0.05\linewidth,yshift=0.21\linewidth]
            \node [align=left,anchor=south west] at (0,0) {(a) $(n,k)=(128,32)$};
        \end{scope}
        \begin{axis}[
            semiavidprexperiment2,
            ymode=log, xmode=log,
            xshift=0.0\linewidth,
            yshift=0.0\linewidth,
        ]
        
            \addplot [myparula61,experiment-rate\CODERATE-n\CODEN,discard if neq={args_n}{\CODEN}] table [x=net_file_size_bytes,y=scenario_disperse_runtime_client_seconds] {figures/experiments2/data-experiments2-rate\CODERATE-bls12-381.txt};
            \label{leg:experiments-aggregate-rate\CODERATE-n\CODEN-disperse}
            \addlegendentry{Disperse};
        
            \addplot [myparula71,experiment-rate\CODERATE-n\CODEN,discard if neq={args_n}{\CODEN}] table [x=net_file_size_bytes,y=scenario_retrieve_runtime_client_seconds] {figures/experiments2/data-experiments2-rate\CODERATE-bls12-381.txt};
            \label{leg:experiments-aggregate-rate\CODERATE-n\CODEN-retrieve}
            \addlegendentry{Retrieve};
            
            \legend{};
            
        \end{axis}
        \def\CODEN{256}
        \begin{scope}[xshift=0.28\linewidth,yshift=0.21\linewidth]
            \node [align=left,anchor=south west] at (0,0) {(b) $(n,k)=(256,64)$};
        \end{scope}
        \begin{axis}[
            semiavidprexperiment2,
            ymode=log, xmode=log,
            xshift=0.33\linewidth,
            yshift=0.0\linewidth,
        ]
        
            \addplot [myparula61,experiment-rate\CODERATE-n\CODEN,discard if neq={args_n}{\CODEN}] table [x=net_file_size_bytes,y=scenario_disperse_runtime_client_seconds] {figures/experiments2/data-experiments2-rate\CODERATE-bls12-381.txt};
            \label{leg:experiments-aggregate-rate\CODERATE-n\CODEN-disperse}
            \addlegendentry{Disperse};
        
            \addplot [myparula71,experiment-rate\CODERATE-n\CODEN,discard if neq={args_n}{\CODEN}] table [x=net_file_size_bytes,y=scenario_retrieve_runtime_client_seconds] {figures/experiments2/data-experiments2-rate\CODERATE-bls12-381.txt};
            \label{leg:experiments-aggregate-rate\CODERATE-n\CODEN-retrieve}
            \addlegendentry{Retrieve};
            
            \legend{};
            
        \end{axis}
        \def\CODEN{1024}
        \begin{scope}[xshift=0.61\linewidth,yshift=0.21\linewidth]
            \node [align=left,anchor=south west] at (0,0) {(c) $(n,k)=(1024,256)$};
        \end{scope}
        \begin{axis}[
            semiavidprexperiment2,
            ymode=log, xmode=log,
            xshift=0.66\linewidth,
            yshift=0.0\linewidth,
        ]
        
            \addplot [myparula61,experiment-rate\CODERATE-n\CODEN,discard if neq={args_n}{\CODEN}] table [x=net_file_size_bytes,y=scenario_disperse_runtime_client_seconds] {figures/experiments2/data-experiments2-rate\CODERATE-bls12-381.txt};
            \label{leg:experiments-aggregate-rate\CODERATE-n\CODEN-disperse}
            \addlegendentry{Disperse};
        
            \addplot [myparula71,experiment-rate\CODERATE-n\CODEN,discard if neq={args_n}{\CODEN}] table [x=net_file_size_bytes,y=scenario_retrieve_runtime_client_seconds] {figures/experiments2/data-experiments2-rate\CODERATE-bls12-381.txt};
            \label{leg:experiments-aggregate-rate\CODERATE-n\CODEN-retrieve}
            \addlegendentry{Retrieve};
            
            \legend{};
            
        \end{axis}
        \def\CODERATE{33}
        \def\CODEN{128}
        \begin{scope}[xshift=-0.05\linewidth,yshift=-0.11\linewidth]
            \node [align=left,anchor=south west] at (0,0) {(d) $(n,k)=(128,43)$};
        \end{scope}
        \begin{axis}[
            semiavidprexperiment2,
            ymode=log, xmode=log,
            xshift=0.0\linewidth,
            yshift=-0.32\linewidth,
        ]
        
            \addplot [myparula61,experiment-rate\CODERATE-n\CODEN,discard if neq={args_n}{\CODEN}] table [x=net_file_size_bytes,y=scenario_disperse_runtime_client_seconds] {figures/experiments2/data-experiments2-rate\CODERATE-bls12-381.txt};
            \label{leg:experiments-aggregate-rate\CODERATE-n\CODEN-disperse}
            \addlegendentry{Disperse};
        
            \addplot [myparula71,experiment-rate\CODERATE-n\CODEN,discard if neq={args_n}{\CODEN}] table [x=net_file_size_bytes,y=scenario_retrieve_runtime_client_seconds] {figures/experiments2/data-experiments2-rate\CODERATE-bls12-381.txt};
            \label{leg:experiments-aggregate-rate\CODERATE-n\CODEN-retrieve}
            \addlegendentry{Retrieve};
            
            \legend{};
            
        \end{axis}
        \def\CODEN{256}
        \begin{scope}[xshift=0.28\linewidth,yshift=-0.11\linewidth]
            \node [align=left,anchor=south west] at (0,0) {(e) $(n,k)=(256,85)$};
        \end{scope}
        \begin{axis}[
            semiavidprexperiment2,
            ymode=log, xmode=log,
            xshift=0.33\linewidth,
            yshift=-0.32\linewidth,
        ]
        
            \addplot [myparula61,experiment-rate\CODERATE-n\CODEN,discard if neq={args_n}{\CODEN}] table [x=net_file_size_bytes,y=scenario_disperse_runtime_client_seconds] {figures/experiments2/data-experiments2-rate\CODERATE-bls12-381.txt};
            \label{leg:experiments-aggregate-rate\CODERATE-n\CODEN-disperse}
            \addlegendentry{Disperse};
        
            \addplot [myparula71,experiment-rate\CODERATE-n\CODEN,discard if neq={args_n}{\CODEN}] table [x=net_file_size_bytes,y=scenario_retrieve_runtime_client_seconds] {figures/experiments2/data-experiments2-rate\CODERATE-bls12-381.txt};
            \label{leg:experiments-aggregate-rate\CODERATE-n\CODEN-retrieve}
            \addlegendentry{Retrieve};
            
            \legend{};
            
        \end{axis}
        \def\CODEN{1024}
        \begin{scope}[xshift=0.61\linewidth,yshift=-0.11\linewidth]
            \node [align=left,anchor=south west] at (0,0) {(f) $(n,k)=(1024,341)$};
        \end{scope}
        \begin{axis}[
            semiavidprexperiment2,
            ymode=log, xmode=log,
            xshift=0.66\linewidth,
            yshift=-0.32\linewidth,
        ]
        
            \addplot [myparula61,experiment-rate\CODERATE-n\CODEN,discard if neq={args_n}{\CODEN}] table [x=net_file_size_bytes,y=scenario_disperse_runtime_client_seconds] {figures/experiments2/data-experiments2-rate\CODERATE-bls12-381.txt};
            \label{leg:experiments-aggregate-rate\CODERATE-n\CODEN-disperse}
            \addlegendentry{Disperse};
        
            \addplot [myparula71,experiment-rate\CODERATE-n\CODEN,discard if neq={args_n}{\CODEN}] table [x=net_file_size_bytes,y=scenario_retrieve_runtime_client_seconds] {figures/experiments2/data-experiments2-rate\CODERATE-bls12-381.txt};
            \label{leg:experiments-aggregate-rate\CODERATE-n\CODEN-retrieve}
            \addlegendentry{Retrieve};
            
            \legend{};
            
        \end{axis}
        \def\CODERATE{45}
        \def\CODEN{128}
        \begin{scope}[xshift=-0.05\linewidth,yshift=-0.43\linewidth]
            \node [align=left,anchor=south west] at (0,0) {(g) $(n,k)=(128,58)$};
        \end{scope}
        \begin{axis}[
            semiavidprexperiment2,
            ymode=log, xmode=log,
            xshift=0.0\linewidth,
            yshift=-0.64\linewidth,
        ]
        
            \addplot [myparula61,experiment-rate\CODERATE-n\CODEN,discard if neq={args_n}{\CODEN}] table [x=net_file_size_bytes,y=scenario_disperse_runtime_client_seconds] {figures/experiments2/data-experiments2-rate\CODERATE-bls12-381.txt};
            \label{leg:experiments-aggregate-rate\CODERATE-n\CODEN-disperse}
            \addlegendentry{Disperse};
        
            \addplot [myparula71,experiment-rate\CODERATE-n\CODEN,discard if neq={args_n}{\CODEN}] table [x=net_file_size_bytes,y=scenario_retrieve_runtime_client_seconds] {figures/experiments2/data-experiments2-rate\CODERATE-bls12-381.txt};
            \label{leg:experiments-aggregate-rate\CODERATE-n\CODEN-retrieve}
            \addlegendentry{Retrieve};
            
            \legend{};
            
        \end{axis}
        \def\CODEN{256}
        \begin{scope}[xshift=0.28\linewidth,yshift=-0.43\linewidth]
            \node [align=left,anchor=south west] at (0,0) {(h) $(n,k)=(256,115)$};
        \end{scope}
        \begin{axis}[
            semiavidprexperiment2,
            ymode=log, xmode=log,
            xshift=0.33\linewidth,
            yshift=-0.64\linewidth,
        ]
        
            \addplot [myparula61,experiment-rate\CODERATE-n\CODEN,discard if neq={args_n}{\CODEN}] table [x=net_file_size_bytes,y=scenario_disperse_runtime_client_seconds] {figures/experiments2/data-experiments2-rate\CODERATE-bls12-381.txt};
            \label{leg:experiments-aggregate-rate\CODERATE-n\CODEN-disperse}
            \addlegendentry{Disperse};
        
            \addplot [myparula71,experiment-rate\CODERATE-n\CODEN,discard if neq={args_n}{\CODEN}] table [x=net_file_size_bytes,y=scenario_retrieve_runtime_client_seconds] {figures/experiments2/data-experiments2-rate\CODERATE-bls12-381.txt};
            \label{leg:experiments-aggregate-rate\CODERATE-n\CODEN-retrieve}
            \addlegendentry{Retrieve};
            
            \legend{};
            
        \end{axis}
        \def\CODEN{1024}
        \begin{scope}[xshift=0.61\linewidth,yshift=-0.43\linewidth]
            \node [align=left,anchor=south west] at (0,0) {(i) $(n,k)=(1024,461)$};
        \end{scope}
        \begin{axis}[
            semiavidprexperiment2,
            ymode=log, xmode=log,
            xshift=0.66\linewidth,
            yshift=-0.64\linewidth,
        ]
        
            \addplot [myparula61,experiment-rate\CODERATE-n\CODEN,discard if neq={args_n}{\CODEN}] table [x=net_file_size_bytes,y=scenario_disperse_runtime_client_seconds] {figures/experiments2/data-experiments2-rate\CODERATE-bls12-381.txt};
            \label{leg:experiments-aggregate-rate\CODERATE-n\CODEN-disperse}
            \addlegendentry{Disperse};
        
            \addplot [myparula71,experiment-rate\CODERATE-n\CODEN,discard if neq={args_n}{\CODEN}] table [x=net_file_size_bytes,y=scenario_retrieve_runtime_client_seconds] {figures/experiments2/data-experiments2-rate\CODERATE-bls12-381.txt};
            \label{leg:experiments-aggregate-rate\CODERATE-n\CODEN-retrieve}
            \addlegendentry{Retrieve};
            
            \legend{};

            \addlegendimage{experiment-rate25-n128,black,mark=none};
            \label{leg:experiments-aggregate-rate25}
            \addlegendimage{experiment-rate33-n128,black,mark=none};
            \label{leg:experiments-aggregate-rate33}
            \addlegendimage{experiment-rate45-n128,black,mark=none};
            \label{leg:experiments-aggregate-rate45}
            
            \addlegendimage{experiment-rate25-n128,black,only marks};
            \label{leg:experiments-aggregate-n128}
            \addlegendimage{experiment-rate25-n256,black,only marks};
            \label{leg:experiments-aggregate-n256}
            \addlegendimage{experiment-rate25-n1024,black,only marks};
            \label{leg:experiments-aggregate-n1024}
            
            \addlegendimage{myparula61,only marks,mark=square*};
            \label{leg:experiments-aggregate-disperse}
            \addlegendimage{myparula71,only marks,mark=square*};
            \label{leg:experiments-aggregate-retrieve}
            
            \legend{};
            
        \end{axis}
    \end{tikzpicture}%
    \vspace{-0.5em}%
    \caption{%
    Single-thread runtime (ordinate, in seconds)
    of
    $\savidDisperse$
    (\ref{leg:experiments-aggregate-disperse})
    and
    $\savidRetrieve$
    (\ref{leg:experiments-aggregate-retrieve})
    on AMD Opteron 6378 processor
    for varying file sizes (abscissa, in $10^6$ bytes;
    for varying $L$)
    in BLS12-381 curve.
    Rows: code rates $k/n\approx0.25,0.33,0.45$
    (\ref{leg:experiments-aggregate-rate25},
    \ref{leg:experiments-aggregate-rate33},
    \ref{leg:experiments-aggregate-rate45}).
    Columns: system sizes $n=128,256,1024$
    (\ref{leg:experiments-aggregate-n128},
    \ref{leg:experiments-aggregate-n256},
    \ref{leg:experiments-aggregate-n1024}).
    (Disaggregated: Figure~\ref{fig:experiments})%
    }
    \label{fig:experiments-aggregate}
\end{figure}

\input{fig_experiments_bn254_steps}
\begin{figure}[t]
    \centering
    \begin{tikzpicture}
        \small
        \def\CODERATE{25}
        \def\CODEN{128}
        \begin{scope}[xshift=-0.05\linewidth,yshift=0.21\linewidth]
            \node [align=left,anchor=south west] at (0,0) {(a) $(n,k)=(128,32)$};
        \end{scope}
        \begin{axis}[
            semiavidprexperiment2,
            ymode=log, xmode=log,
            xshift=0.0\linewidth,
            yshift=0.0\linewidth,
        ]
        
            \addplot [myparula61,experiment-rate\CODERATE-n\CODEN,discard if neq={args_n}{\CODEN}] table [x=net_file_size_bytes,y=scenario_disperse_runtime_client_seconds] {figures/experiments2/data-experiments2-rate\CODERATE-bn254.txt};
            \label{leg:experiments-bn254-aggregate-rate\CODERATE-n\CODEN-disperse}
            \addlegendentry{Disperse};
        
            \addplot [myparula71,experiment-rate\CODERATE-n\CODEN,discard if neq={args_n}{\CODEN}] table [x=net_file_size_bytes,y=scenario_retrieve_runtime_client_seconds] {figures/experiments2/data-experiments2-rate\CODERATE-bn254.txt};
            \label{leg:experiments-bn254-aggregate-rate\CODERATE-n\CODEN-retrieve}
            \addlegendentry{Retrieve};
            
            \legend{};
            
        \end{axis}
        \def\CODEN{256}
        \begin{scope}[xshift=0.28\linewidth,yshift=0.21\linewidth]
            \node [align=left,anchor=south west] at (0,0) {(b) $(n,k)=(256,64)$};
        \end{scope}
        \begin{axis}[
            semiavidprexperiment2,
            ymode=log, xmode=log,
            xshift=0.33\linewidth,
            yshift=0.0\linewidth,
        ]
        
            \addplot [myparula61,experiment-rate\CODERATE-n\CODEN,discard if neq={args_n}{\CODEN}] table [x=net_file_size_bytes,y=scenario_disperse_runtime_client_seconds] {figures/experiments2/data-experiments2-rate\CODERATE-bn254.txt};
            \label{leg:experiments-bn254-aggregate-rate\CODERATE-n\CODEN-disperse}
            \addlegendentry{Disperse};
        
            \addplot [myparula71,experiment-rate\CODERATE-n\CODEN,discard if neq={args_n}{\CODEN}] table [x=net_file_size_bytes,y=scenario_retrieve_runtime_client_seconds] {figures/experiments2/data-experiments2-rate\CODERATE-bn254.txt};
            \label{leg:experiments-bn254-aggregate-rate\CODERATE-n\CODEN-retrieve}
            \addlegendentry{Retrieve};
            
            \legend{};
            
        \end{axis}
        \def\CODEN{1024}
        \begin{scope}[xshift=0.61\linewidth,yshift=0.21\linewidth]
            \node [align=left,anchor=south west] at (0,0) {(c) $(n,k)=(1024,256)$};
        \end{scope}
        \begin{axis}[
            semiavidprexperiment2,
            ymode=log, xmode=log,
            xshift=0.66\linewidth,
            yshift=0.0\linewidth,
        ]
        
            \addplot [myparula61,experiment-rate\CODERATE-n\CODEN,discard if neq={args_n}{\CODEN}] table [x=net_file_size_bytes,y=scenario_disperse_runtime_client_seconds] {figures/experiments2/data-experiments2-rate\CODERATE-bn254.txt};
            \label{leg:experiments-bn254-aggregate-rate\CODERATE-n\CODEN-disperse}
            \addlegendentry{Disperse};
        
            \addplot [myparula71,experiment-rate\CODERATE-n\CODEN,discard if neq={args_n}{\CODEN}] table [x=net_file_size_bytes,y=scenario_retrieve_runtime_client_seconds] {figures/experiments2/data-experiments2-rate\CODERATE-bn254.txt};
            \label{leg:experiments-bn254-aggregate-rate\CODERATE-n\CODEN-retrieve}
            \addlegendentry{Retrieve};
            
            \legend{};
            
        \end{axis}
        \def\CODERATE{33}
        \def\CODEN{128}
        \begin{scope}[xshift=-0.05\linewidth,yshift=-0.11\linewidth]
            \node [align=left,anchor=south west] at (0,0) {(d) $(n,k)=(128,43)$};
        \end{scope}
        \begin{axis}[
            semiavidprexperiment2,
            ymode=log, xmode=log,
            xshift=0.0\linewidth,
            yshift=-0.32\linewidth,
        ]
        
            \addplot [myparula61,experiment-rate\CODERATE-n\CODEN,discard if neq={args_n}{\CODEN}] table [x=net_file_size_bytes,y=scenario_disperse_runtime_client_seconds] {figures/experiments2/data-experiments2-rate\CODERATE-bn254.txt};
            \label{leg:experiments-bn254-aggregate-rate\CODERATE-n\CODEN-disperse}
            \addlegendentry{Disperse};
        
            \addplot [myparula71,experiment-rate\CODERATE-n\CODEN,discard if neq={args_n}{\CODEN}] table [x=net_file_size_bytes,y=scenario_retrieve_runtime_client_seconds] {figures/experiments2/data-experiments2-rate\CODERATE-bn254.txt};
            \label{leg:experiments-bn254-aggregate-rate\CODERATE-n\CODEN-retrieve}
            \addlegendentry{Retrieve};
            
            \legend{};
            
        \end{axis}
        \def\CODEN{256}
        \begin{scope}[xshift=0.28\linewidth,yshift=-0.11\linewidth]
            \node [align=left,anchor=south west] at (0,0) {(e) $(n,k)=(256,85)$};
        \end{scope}
        \begin{axis}[
            semiavidprexperiment2,
            ymode=log, xmode=log,
            xshift=0.33\linewidth,
            yshift=-0.32\linewidth,
        ]
        
            \addplot [myparula61,experiment-rate\CODERATE-n\CODEN,discard if neq={args_n}{\CODEN}] table [x=net_file_size_bytes,y=scenario_disperse_runtime_client_seconds] {figures/experiments2/data-experiments2-rate\CODERATE-bn254.txt};
            \label{leg:experiments-bn254-aggregate-rate\CODERATE-n\CODEN-disperse}
            \addlegendentry{Disperse};
        
            \addplot [myparula71,experiment-rate\CODERATE-n\CODEN,discard if neq={args_n}{\CODEN}] table [x=net_file_size_bytes,y=scenario_retrieve_runtime_client_seconds] {figures/experiments2/data-experiments2-rate\CODERATE-bn254.txt};
            \label{leg:experiments-bn254-aggregate-rate\CODERATE-n\CODEN-retrieve}
            \addlegendentry{Retrieve};
            
            \legend{};
            
        \end{axis}
        \def\CODEN{1024}
        \begin{scope}[xshift=0.61\linewidth,yshift=-0.11\linewidth]
            \node [align=left,anchor=south west] at (0,0) {(f) $(n,k)=(1024,341)$};
        \end{scope}
        \begin{axis}[
            semiavidprexperiment2,
            ymode=log, xmode=log,
            xshift=0.66\linewidth,
            yshift=-0.32\linewidth,
        ]
        
            \addplot [myparula61,experiment-rate\CODERATE-n\CODEN,discard if neq={args_n}{\CODEN}] table [x=net_file_size_bytes,y=scenario_disperse_runtime_client_seconds] {figures/experiments2/data-experiments2-rate\CODERATE-bn254.txt};
            \label{leg:experiments-bn254-aggregate-rate\CODERATE-n\CODEN-disperse}
            \addlegendentry{Disperse};
        
            \addplot [myparula71,experiment-rate\CODERATE-n\CODEN,discard if neq={args_n}{\CODEN}] table [x=net_file_size_bytes,y=scenario_retrieve_runtime_client_seconds] {figures/experiments2/data-experiments2-rate\CODERATE-bn254.txt};
            \label{leg:experiments-bn254-aggregate-rate\CODERATE-n\CODEN-retrieve}
            \addlegendentry{Retrieve};
            
            \legend{};
            
        \end{axis}
        \def\CODERATE{45}
        \def\CODEN{128}
        \begin{scope}[xshift=-0.05\linewidth,yshift=-0.43\linewidth]
            \node [align=left,anchor=south west] at (0,0) {(g) $(n,k)=(128,58)$};
        \end{scope}
        \begin{axis}[
            semiavidprexperiment2,
            ymode=log, xmode=log,
            xshift=0.0\linewidth,
            yshift=-0.64\linewidth,
        ]
        
            \addplot [myparula61,experiment-rate\CODERATE-n\CODEN,discard if neq={args_n}{\CODEN}] table [x=net_file_size_bytes,y=scenario_disperse_runtime_client_seconds] {figures/experiments2/data-experiments2-rate\CODERATE-bn254.txt};
            \label{leg:experiments-bn254-aggregate-rate\CODERATE-n\CODEN-disperse}
            \addlegendentry{Disperse};
        
            \addplot [myparula71,experiment-rate\CODERATE-n\CODEN,discard if neq={args_n}{\CODEN}] table [x=net_file_size_bytes,y=scenario_retrieve_runtime_client_seconds] {figures/experiments2/data-experiments2-rate\CODERATE-bn254.txt};
            \label{leg:experiments-bn254-aggregate-rate\CODERATE-n\CODEN-retrieve}
            \addlegendentry{Retrieve};
            
            \legend{};
            
        \end{axis}
        \def\CODEN{256}
        \begin{scope}[xshift=0.28\linewidth,yshift=-0.43\linewidth]
            \node [align=left,anchor=south west] at (0,0) {(h) $(n,k)=(256,115)$};
        \end{scope}
        \begin{axis}[
            semiavidprexperiment2,
            ymode=log, xmode=log,
            xshift=0.33\linewidth,
            yshift=-0.64\linewidth,
        ]
        
            \addplot [myparula61,experiment-rate\CODERATE-n\CODEN,discard if neq={args_n}{\CODEN}] table [x=net_file_size_bytes,y=scenario_disperse_runtime_client_seconds] {figures/experiments2/data-experiments2-rate\CODERATE-bn254.txt};
            \label{leg:experiments-bn254-aggregate-rate\CODERATE-n\CODEN-disperse}
            \addlegendentry{Disperse};
        
            \addplot [myparula71,experiment-rate\CODERATE-n\CODEN,discard if neq={args_n}{\CODEN}] table [x=net_file_size_bytes,y=scenario_retrieve_runtime_client_seconds] {figures/experiments2/data-experiments2-rate\CODERATE-bn254.txt};
            \label{leg:experiments-bn254-aggregate-rate\CODERATE-n\CODEN-retrieve}
            \addlegendentry{Retrieve};
            
            \legend{};
            
        \end{axis}
        \def\CODEN{1024}
        \begin{scope}[xshift=0.61\linewidth,yshift=-0.43\linewidth]
            \node [align=left,anchor=south west] at (0,0) {(i) $(n,k)=(1024,461)$};
        \end{scope}
        \begin{axis}[
            semiavidprexperiment2,
            ymode=log, xmode=log,
            xshift=0.66\linewidth,
            yshift=-0.64\linewidth,
        ]
        
            \addplot [myparula61,experiment-rate\CODERATE-n\CODEN,discard if neq={args_n}{\CODEN}] table [x=net_file_size_bytes,y=scenario_disperse_runtime_client_seconds] {figures/experiments2/data-experiments2-rate\CODERATE-bn254.txt};
            \label{leg:experiments-bn254-aggregate-rate\CODERATE-n\CODEN-disperse}
            \addlegendentry{Disperse};
        
            \addplot [myparula71,experiment-rate\CODERATE-n\CODEN,discard if neq={args_n}{\CODEN}] table [x=net_file_size_bytes,y=scenario_retrieve_runtime_client_seconds] {figures/experiments2/data-experiments2-rate\CODERATE-bn254.txt};
            \label{leg:experiments-bn254-aggregate-rate\CODERATE-n\CODEN-retrieve}
            \addlegendentry{Retrieve};
            
            \legend{};

            \addlegendimage{experiment-rate25-n128,black,mark=none};
            \label{leg:experiments-bn254-aggregate-rate25}
            \addlegendimage{experiment-rate33-n128,black,mark=none};
            \label{leg:experiments-bn254-aggregate-rate33}
            \addlegendimage{experiment-rate45-n128,black,mark=none};
            \label{leg:experiments-bn254-aggregate-rate45}
            
            \addlegendimage{experiment-rate25-n128,black,only marks};
            \label{leg:experiments-bn254-aggregate-n128}
            \addlegendimage{experiment-rate25-n256,black,only marks};
            \label{leg:experiments-bn254-aggregate-n256}
            \addlegendimage{experiment-rate25-n1024,black,only marks};
            \label{leg:experiments-bn254-aggregate-n1024}
            
            \addlegendimage{myparula61,only marks,mark=square*};
            \label{leg:experiments-bn254-aggregate-disperse}
            \addlegendimage{myparula71,only marks,mark=square*};
            \label{leg:experiments-bn254-aggregate-retrieve}
            
            \legend{};
            
        \end{axis}
    \end{tikzpicture}%
    \vspace{-0.5em}%
    \caption{%
    Single-thread runtime (ordinate, in seconds)
    of
    $\savidDisperse$
    (\ref{leg:experiments-bn254-aggregate-disperse})
    and
    $\savidRetrieve$
    (\ref{leg:experiments-bn254-aggregate-retrieve})
    on AMD Opteron 6378 processor
    for varying file sizes (abscissa, in $10^6$ bytes;
    for varying $L$)
    in BN254 curve.
    Rows: code rates $k/n\approx0.25,0.33,0.45$
    (\ref{leg:experiments-bn254-aggregate-rate25},
    \ref{leg:experiments-bn254-aggregate-rate33},
    \ref{leg:experiments-bn254-aggregate-rate45}).
    Columns: system sizes $n=128,256,1024$
    (\ref{leg:experiments-bn254-aggregate-n128},
    \ref{leg:experiments-bn254-aggregate-n256},
    \ref{leg:experiments-bn254-aggregate-n1024}).
    (Disaggregated: Figure~\ref{fig:experiments-bn254})%
    }
    \label{fig:experiments-bn254-aggregate}
\end{figure}

The runtime measurements for different steps of $\OURsavid$
shown in Figure~\ref{fig:experiments}
are aggregated for $\savidDisperse$ and $\savidRetrieve$
in
Figure~\ref{fig:experiments-aggregate}.

Figures~\ref{fig:experiments} and~\ref{fig:experiments-aggregate}
use curve BLS12-381.
Corresponding plots
for curve BN254 are provided
in Figures~\ref{fig:experiments-bn254} and~\ref{fig:experiments-bn254-aggregate}.
The plots show a slight speedup,
due to faster operations in BN254.

\section{Proof Details}
\label{sec:appendix-proof-details}

\begin{algorithm}[t]
    \caption{Modified availability game ($\gameAv'$) with resilience $t$ against Semi-AVID-PR scheme $\savid = (\savidSetup, \savidCommit, \savidDisperse, \savidVerify, \savidRetrieve)$}
    \label{alg:game-savid-availability-modified}
    \begin{algorithmic}[1]
        \State $I \drawrandom [n]$
        \State $\CC \gets \emptyset$
            \Comment{Bookkeeping of corrupted parties}
        \State $\forall i\in[n]: P_i \gets \mathsf{new}\,\savid(\emptyset)$
            \Comment{Instantiate $P_i$ as $\savid$ with blank state}
        \State $\cryptoPP \gets \savidSetup^{P_1,...,P_n}(1^\lambda)$
            \Comment{Run setup among all parties}
        \Function{$\CO^{\mathrm{corrupt}}$}{$i$}
                \Comment{Oracle for $\CA$ to corrupt parties}
            \State \Assert{$i \not\in \CC$}
            \If{$i \neq I$}
                 \State $\CC \gets \CC \cup \{i\}$
                \Comment{Mark party as corrupted}
            \State \Return $P_i$
                \Comment{Hand $P_i$'s state to $\CA$}
            \Else
                \State\Abort
            \EndIf
           
        \EndFunction
        \Function{$\CO^{\mathrm{interact}}$}{$i, m$}
                \Comment{Oracle for $\CA$ to interact with parties}
            \State \Assert{$i \not\in \CC$}
            \State \Return $P_i(m)$
                \Comment{Execute $P_i$ on input $m$, return output to $\CA$}
        \EndFunction
        \State $\left(P, C, \left(\CO_i^{\mathrm{node}}(.)\right)_{i\in\CC}\right)
            \gets \CA^{\CO^{\mathrm{corrupt}}(.), \CO^{\mathrm{interact}}(.)}_{\gameAv'}(\cryptoPP)$
            \Comment{$\CA$ returns certificate of retrievability $P$, commitment $C$, and oracle access to corrupted nodes for retrieval}
        \State $\hat B \gets \savidRetrieve^{P_1,...,P_n}\left[\CO_i^{\mathrm{node}}(.)/\Call{Query}{i, .}\right]_{i\in\CC}(P, C)$
            \Comment{During retrieval, interact with corrupted nodes through oracles}
        \State \Return $\begin{aligned}[t]
                    & |\CC| \leq t   \\[-2pt]
                    & \land \savidVerify(P, C) = \TRUE   \\[-5pt]
                    & \land \savidCommit(\hat B) \neq C
                \end{aligned}$
            \Comment{$\CA$ wins iff: while corrupting no more than $t$ parties, $\CA$ produces a valid certificate of retrievability $P$ for $C$ such that retrieval
            does not return a file matching $C$}
    \end{algorithmic}
\end{algorithm}

We modify the availability game $\gameAv$
(Alg.~\ref{alg:game-savid-availability})
to obtain $\gameAv'$
(Alg.~\ref{alg:game-savid-availability-modified})
in which initially the index $I$ of a storage node is sampled
uniformly at random,
and subsequently the game is aborted if the adversary
$\CA_{\gameAv'}$ attempts to corrupt $I$.

\section{Details on Application to Data Availability Sampling}
\label{sec:appendix-data-availability-sampling}

In common blockchain designs all nodes have to download the full blockchain and validate all included transactions (\eg, check that accounts have sufficient balances, no funds are created out of thin air, etc.). However, if a node does not have enough bandwidth, storage, or computational resources to do so, it can instead participate as a so called \emph{light node}.\footnote{Light nodes also occur in the context of sharding, where each node is assigned to a shard and behaves in-shard (\ie, towards their assigned shard) as a full node and out-of-shard (\ie, towards other shards) as a light node.} We assume that every block consists of a header comprised of meta data and a body comprised of a list of transactions. The header includes a commitment to the block content, binding the two together. Full nodes process block headers and content, while light nodes only process block headers. If an invalid transaction was added to a block, this block would be rejected by full nodes but a header of this block can be accepted by a light node, since the light node cannot inspect the block content and verify transaction validity. To prevent light nodes from accepting (the header of) an invalid block, full nodes can produce an invalid transaction fraud proof (\ie, a succinct string of the evidence necessary to verify relative to the block header that the block indeed contains an invalid transaction). To take full nodes' ability to issue invalid transaction fraud proofs, a malicious block producer can perform a data availability attack and withhold parts of the block content, including the invalid transaction. Full nodes would now temporarily reject the block (until its content becomes fully available), but light nodes would not notice the missing content since they do not attempt to download the block content anyway. In this setting, the absence of an invalid transaction fraud proof can thus mean two things: either that the block is alright, or that full nodes were not able to verify the block due to missing data. To rule out the possibility of data unavailability (so that finally lack of fraud proof implies the block is valid), various data availability sampling schemes for light nodes were introduced.

Data availability schemes using Reed-Solomon codes were proposed in \cite{albassam}. In a naive scheme, a block producer encodes a list of transactions, consisting of $k$ chunks, with a $(2k, k)$ Reed-Solomon code. Once a light node receives a header of the block, it randomly queries a few chunks of the encoded block content. The block is accepted only if the queried chunks are available. For a block to be widely accepted by light nodes, most of the light nodes' queried chunks have to be available. Quickly, light nodes' queries cover more than 50\% of coded chunks of the block, so that any remaining missing chunks can be recovered using the Reed-Solomon code. It is therefore no longer possible to trick light nodes into accepting a block while withholding a chunk in an attempt to prevent full nodes from generating an invalid transaction fraud proof.
The main drawback of this solution is that a malicious block producer could invalidly encode the block.
Decoding would then not consistently recover the original chunks' data,
even if nominally enough chunks are available.
Again, full nodes would be able to detect invalid encoding, but light nodes would not.
And again, full nodes could issue a fraud proof to prevent light nodes from accepting
an invalidly encoded block.
However, the amount of evidence needed to prove invalid encoding in this scheme
is as big as the block content itself -- defying the idea of light nodes downloading less than the full block content.
For example, an invalid encoding fraud proof consists of the full original block data, which the light node can verify with respect to the block header, re-encode, and then check that some of the `encoded' chunks received in response to data availability queries do not match the properly encoded chunks.
Subsequent works \cite{albassam,cmt,DBLP:conf/itw/MitraTD20,DBLP:journals/corr/abs-2108-13332} on data availability schemes of this flavor have
thus focussed on reducing the size of invalid encoding fraud proofs,
but drawbacks remain (\eg, additional complexity, timing assumptions).

A different approach is to eliminate invalid encoding fraud proofs by making it impossible for block producers to invalidly encode data. Such schemes can be achieved using polynomial commitment schemes such as KZG \cite{kate}. Treated as evaluations of a polynomial at agreed-upon locations, the $k$ chunks of the block content uniquely determine a polynomial of degree $k-1$.
A commitment to this polynomial is included in the block header.
To ensure data availability, light nodes query for evaluations of this polynomial at random locations. Consistency of every query response with the polynomial committed to in the block header can be verified using the polynomial commitment scheme by providing an evaluation witness.
Even with few queries each, light nodes together will soon have queried evaluations at at least $k$ distinct locations. If the block producer withholds any of these evaluations, light nodes will not accept the block.
But once evaluations at $k$ distinct locations are available,
the polynomial, and thus the block content, can be reconstructed.
Any invalid transaction becomes visible and full nodes can generate corresponding fraud proofs.
Schemes of this flavor however require to compute an evaluation witness for each query,
which despite recent algorithmic improvements is still computationally heavy \cite{fk,alin,tab}.

The Semi-AVID-PR scheme, described in Section~\ref{sec:protocol} and illustrated in Figure~\ref{fig:savidpr-disperse}, can be seen as combining `the best of both worlds' in that it does not require invalid encoding fraud proofs,
but sampled chunks can be verified efficiently.
We assume that a block contains an alternating sequence
of transactions and commitments to resulting intermediary chain states
(this is used for invalid transaction fraud proofs as in \cite{albassam}).
As
illustrated in Figure~\ref{fig:savidpr-disperse}, the
block producer arranges the block content $\BU$ as a matrix of size $L \times k$, where $k$ and $L$ are system parameters.
It commits to each of the columns $\Bu_1, ..., \Bu_k$ of that matrix using the linear vector commitment scheme defined in Section~\ref{sec:preliminaries-slvc} (to obtain commitments $h_1, ..., h_k$), and encodes the matrix $\BU$ row-wise using a $(n, k)$ Reed-Solomon code to obtain chunks $\Bc_1, ..., \Bc_n$ of a coded matrix $\BC$. A final commitment to the full block content is computed as $C \triangleq \CRHF(h_1\|...\|h_k)$ and used on-chain in the block's header to uniquely reference the block content. Full nodes receive the full block content, recompute the column commitments and their hash, and compare it with $C$ to verify the block content.
Light nodes receive only $C$ from the block header. Prior to accepting a new block header, a light node samples random coded chunks $\Bc_i$. The response to each query is accompanied by
purported column commitments $h_1, ..., h_k$. Every light node can verify the column commitments by locally recomputing their hash and comparing it with the commitment $C$ in the block header. Subsequently, the light node verifies
the
downloaded chunk $\Bc_i$ using the linear homomorphic property of the vector commitment scheme and the column commitments $h_1, ..., h_k$.
To employ the invalid transaction fraud proofs of \cite{albassam},
it remains to show how a full node can open any entry of $\BU$
to a light node
in a verifiable manner.
For this purpose, an opening witness for value $y = [\Bu_j]_i$ at position $(i, j)$ consists of:
\begin{itemize}
    \item Value $y = [\Bu_j]_i$ and coordinates $(i, j)$.
    \item Commitments $h_1, ..., h_k$ to columns $\Bu_1, ..., \Bu_k$.
    \item A witness $w$ for the opening of $y$ at the $i$-th position in $\Bu_j$ with respect to the vector commitment $h_j$.
\end{itemize}
The light client first verifies the commitments $h_1, ..., h_k$
by comparing their hash to the commitment $C$ in the block header.
The client then verifies the opening of the value $y$ at position $i$ in $\Bu_j$ with respect to the vector commitment $h_j$.

Since in Semi-AVID-PR valid encoding can be verified by light nodes using the homomorphic property of linear vector commitments, invalid encoding fraud proofs are not needed needed.
At the same time, verifying a chunk requires only to compute a vector commitment (to $\Bc_i$) and a linear combination of the vector commitments $h_1, ..., h_k$, which is lightweight to compute.
Performance is discussed in more detail in
Section~\ref{sec:evaluation}.

\end{document}